\def\BibTeX{{\rm B\kern-.05em{\sc i\kern-.025em b}\kern-.08em
    T\kern-.1667em\lower.7ex\hbox{E}\kern-.125emX}}
\DeclareMathOperator{\diag}{diag}
\newcommand{\cX}{{\cal X}}
\newcommand{\cY}{{\cal Y}}
\newcommand{\cU}{{\cal U}}
\newcommand{\cI}{{\cal I}}
\newcommand{\uX}{\underline{X}}
\newcommand{\ux}{\underline{x}}
\newcommand{\uY}{\underline{Y}}
\newcommand{\uc}{\underline{c}}
\newcommand{\ud}{\underline{d}}
\newcommand{\uu}{\underline{u}}
\newcommand{\uv}{\underline{v}}
\newcommand{\uw}{\underline{w}}
\newcommand{\ue}{\underline{e}}
\newcommand{\ua}{\underline{a}}
\newcommand{\ub}{\underline{b}}
\newcommand{\uth}{\underline{\theta}}
\newcommand{\cS}{{\cal S}}
\newcommand{\cT}{{\cal T}}
\newcommand{\ual}{\underline{\alpha}}
\newcommand{\II}{{\cal S}}
\newcommand{\K}{\psi}
\newcommand{\norm}[1]{\|{#1}\|}
\renewcommand{\vec}[1]{\underline{#1}}
\newcommand{\vast}{\bBigg@{3}}
\newcommand{\Vast}{\bBigg@{4}}
\theoremstyle{definition}
\newtheorem{definition}{Definition}%[section]
\theoremstyle{plain}
\newtheorem{thm}{Theorem}%[section]
\theoremstyle{plain}
\newtheorem{prop}{Proposition}%[section]
\newtheorem{lemma}{Lemma}%[section]
\theoremstyle{plain}
\newtheorem{thmcorol}{Corollary}%[thm]
\theoremstyle{plain}
\theoremstyle{plain}
\theoremstyle{remark}
\newtheorem{remark}{\bf{Remark}}%[section]
\newtheorem{example}{\bf{Example}}
\title{Linear Information Coupling Problems}
\author{Shao-Lun Huang and Lizhong Zheng \\
\thanks{S.-L Huang and L. Zheng are with the Research Laboratory of Electronics at Massachusetts Institute of Technology, Cambridge, USA (Email: $\mathsf{\{ shaolun, lizhong \}@mit.edu}$)}
}
\begin{document}

\maketitle

\begin{abstract}
Many network information theory problems face the similar difficulty of single-letterization. We argue that this is due to the lack of a geometric structure on the space of probability distribution. In this paper, we develop such a structure by assuming that the distributions of interest are close to each other. Under this assumption, the K-L divergence is reduced to the squared Euclidean metric in an Euclidean space. In addition, we construct the notion of coordinate and inner product, which will facilitate solving communication problems. We will present the application of this approach to the point-to-point channel, general broadcast channel, and the multiple access channel (MAC) with the common source. It can be shown that with this approach, information theory problems, such as the single-letterization, can be reduced to some linear algebra problems. Moreover, we show that for the general broadcast channel, transmitting the common message to receivers can be formulated as the trade-off between linear systems. We also provide an example to visualize this trade-off in a geometric way. Finally, for the MAC with the common source, we observe a coherent combining gain due to the cooperation between transmitters, and this gain can be quantified by applying our technique.
%, which demonstrates how our technique simplifies information theory problems.
\end{abstract}
\begin{keywords}
Information Geometry, Local Approximation, Divergence Transition Matrix (DTM), Euclidean Information Theory, Kullback-Leiber divergence
\end{keywords}

\section{Introduction}

In this paper, we study a certain class of information theory problems for discrete memoryless communication networks, which we call the \emph{linear information coupling problems}. For a communication network, the corresponding linear information coupling problem asks the question that how we can efficiently transmit a thin layer of information through this network. More rigorously, we assume that there are sequences of input symbols generated at each transmitter from an i.i.d. distribution $P_{X}$. We also assume that the network is composed of some discrete memoryless channels, whose outputs are sequences with an i.i.d. distribution $P_Y$. We take this setup as an operating point. To encode an information $U=u$, we alter some of these input symbols, such that the empirical distribution changes to $P_{X|U=u}$. We insist that for each $u$, $P_{X|U=u}$ is close to $P_{X}$, which means we can only alter a small fraction of the input symbols. Moreover, when averaging over all different values of $u$, the marginal distribution of $X$ remains unchanged. The receivers can then decode the information by distinguishing empirical output distributions with respect to different $u$. The goal of the linear information coupling problem is to design $P_{X|U=u}$ for different $u$, such that the receivers can distinguish different empirical output distributions the most efficiently. Mathematically, for the point-to-point channel with input $X$ and output $Y$, the linear information coupling problem of this channel can be formulated as the multi-letter problem: for a given pair of input and output sequences $X^n, Y^n$, with joint distribution
\begin{align*}
P_{X^n Y^n}(x^n, y^n) = \prod_{i=1}^n P_X(x_i) \cdot P_{Y|X}(y_i|x_i)
\end{align*}
we consider the problem
\begin{align}\label{eq:LICP}
\max_{U \rightarrow X^n \rightarrow Y^n } &\frac{1}{n} I(U;Y^n),\\ \label{eq:LICP_constraint_1}
\mbox{subject to:} \ &\frac{1}{n} I(U;X^n) \leq \delta, \\ \label{eq:LICP_constraint_2}
& \frac{1}{n}  \| P_{X^n|U=u} - P_{X^n} \|^2 = O(\delta), \ \forall u,
\end{align}
where $\delta$ is the amount of information modulated in per input symbol $X$, and is assumed to be small. Here, both $P_{X^n|U=u}$ and $P_{X^n}$ in~\eqref{eq:LICP_constraint_2} are viewed as $| {\cal X} |^n$ dimensional vectors, and the norm square is simply the Euclidean metric.
%Thus, the problem \eqref{eq:LICP} is asking that how much of a thin layer of information, which is encoded in $X^n$, can be conveyed to the receiver end.

In fact, the problem \eqref{eq:LICP} is almost the same as the traditional capacity problem
\begin{equation}\label{eq:capacity}
\max_{U \rightarrow X^n \rightarrow Y^n } \frac{1}{n} I(U;Y^n),
\end{equation}
where $U$ is the message transmitted through the channel. This traditional problem has the solution $\max_{P_X} I(X;Y)$ \cite{TJ91}. The difference between \eqref{eq:capacity} and \eqref{eq:LICP} lies in the constraint~\eqref{eq:LICP_constraint_1} and~\eqref{eq:LICP_constraint_2}. Somewhat surprisingly, we will show that with these differences, the linear information coupling problem \eqref{eq:LICP} can be solved quite differently from the corresponding capacity problem \eqref{eq:capacity}. 

The linear information coupling problem \eqref{eq:LICP} indeed captures some fundamental aspects of the traditional capacity problem. We will demonstrate in section \ref{sec:P2P} that the problem \eqref{eq:LICP} is a sub-problem of the capacity problem. In general, the problem \eqref{eq:LICP} is a local version of the global optimization problem \eqref{eq:capacity}, and the solutions of \eqref{eq:LICP} are local optimal solutions of the corresponding capacity problem. In addition, we can ``integral" the solutions of a set of linear information coupling problems back to a solution of the capacity problem.

One important feature of the linear information coupling problems is that when the local assumptions \eqref{eq:LICP_constraint_1} and \eqref{eq:LICP_constraint_2} are added to the problem, there is a systematic approach for single-letterization for general multi-terminal communication problems. We first demonstrate in section \ref{sec:single} that, with a simple linear algebra technique, the linear information coupling problem \eqref{eq:LICP} can be single-letterized to its single-letter version 
\begin{align}\label{eq:MLICP}
\max_{U \rightarrow X \rightarrow Y } & I(U;Y),\\ \notag
\mbox{subject to:} \ & I(U;X) \leq \delta, \\ \notag
& \| P_{X|U=u} - P_{X} \|^2 = O(\delta), \ \forall u,
\end{align}
where $U$ follows the common cardinality bounds. Then, we illustrate in section \ref{sec:broadcast} and \ref{sec:multiple} that for general multi-terminal communication problems, the single-letterization procedure is conceptually the same as the point-to-point channel case. Note that the single-letterization is precisely the difficulty to generalize the conventional capacity results on the point-to-point channels to general multi-terminal problems, this systematic procedure for the linear information coupling problems thus makes these problems particularly attractive.

%This is very different from the general capacity problems. Although for the point-to-point channel and some special cases, the single-letterization of the capacity problems can be solved with different tricks case by case, such as cleverly introducing auxiliary random variables \cite{IJ81,AY12}, there is no systematic way to do the single-letterization for general multi-terminal communication problems. Therefore, the existence of systematic procedures for single-letterization is a highly attractive feature of the linear information coupling problems.

The main reason that allows this much simpler procedure of single-letterization is that the locality assumptions \eqref{eq:LICP_constraint_1} and \eqref{eq:LICP_constraint_2} fundamentally simplifies the geometric structure of the space of probability distributions. In a nutshell, it allows us to approximate the manifold structure of this space \cite{SH00} by its linear tangent plane. Put it another way, if we view a 1-dimensional family of probability distributions as a parameterized curve in the space of distributions, the locality assumption allows us to focus only on ``straight lines", and further approximate the Fisher information w.r.t. the underlying parameter as a constant along the curve. Such simplification, under different names, has been taken advantage of in several different areas, to produce often the cleanest results, including effcient parameter estimation with large samples, error exponent for very-noisy channels, etc. In the literature of information theory, the work on differential efficiency on investments, by Erkip and Cover \cite{ET98}, which was based on R\'enyi's formulation of maximal correlation \cite{A59-2}, is one of such examples. Some of the connections between these results will be discussed in this paper.

Mathematically, the locality assumption manifests into a quadratic approximation to the Kullback-Leibler (K-L) divergence. When the conditional distributions $P_{X|U=u}$ are close to the empirical distribution $P_{X}$ for all $u$, we can approximate the K-L divergence $D(P_{X|U=u}||P_{X})$, and hence the mutual information $I(U;X)$, by quadratic functions, which turns out to be related to the Euclidean distance between these two distributions. With this local approximation, the space of the input distributions is locally approximated as an Euclidean space around $P_{X}$. Similarly, the space of the output distributions can also be locally approximated as an Euclidean space around $P_{Y}$. We can construct geometric structures in these Euclidean spaces, such as orthonormal bases and inner products. Moreover, it can be shown that the channel behaves as a linear map between the input and output Euclidean spaces. Our purpose is to find the directions to perturb from $P_X$, according to the information $U=u$ to be encoded, in the input distribution space; or equivalently, to design $P_{X|U=u} - P_X$, such that after the channel map, the image of this perturbation at the output distribution space is as large as possible. This turns out to be a linear algebra problem for which even the multi-letter problem can be solved analytically. 

It is worth pointing out that the example on the point-to-point channel, where we linearize the map from the space of distributions on $X$ to that on $Y$, is {\it not} where the power of this local approximation approach lies. In fact, it is well-known that for both the problem without any locality constraint, and that only has \eqref{eq:LICP_constraint_1} but not \eqref{eq:LICP_constraint_2}, can be solved and shown to have single-letter optimal solutions. One can argue that both of these two versions of single-letterization require to establish more involved techniques, and are therefore stronger results than the linear coupling problems. However, these techniques do not generalize easily to multi-terminal problems. In contrast, our solutions to the linear coupling problems can be generalized rather easily. In this paper, we demonstrate this by applying our approach to the general broadcast channels, and show that the localized version of the problem, while does not answer the question of ``capacity region", still can offer insights to the code designs. By doing this, we also point out that the key difficulty of the classical studies on network capacities indeed lies on the non-linear nature of the space of probability distributions, or in other words, the fact that the locality constraint \eqref{eq:LICP_constraint_2} is not used.

The rest of this paper is organized as follows. In section \ref{sec:P2P}, we study the linear information coupling problems for point-to-point channels. We first introduce the notion of local approximation, and show that the K-L divergence can be approximated as the squared Euclidean metric. Then, the single-letter version of the linear information coupling problems will be solved by exploiting the local geometric structure. Moreover, the single-letterization of the linear information coupling problems will be shown to be equivalent to simple linear algebra problems. We will discuss the relation between our work and the capacity results and code designs in section \ref{sec:CALC}, and the relation to the R$\acute{\mbox{e}}$nyi maximal correlation in section \ref{sec:HGR}. Section \ref{sec:broadcast} is dedicated in applying the local approach to the general broadcast channels. It will be shown that the linear information coupling problems of general broadcast channels are different from that for the point-to-point channels in general: the single-letter solutions are not optimal, however finite-letter optimal solutions always exist. %This provides a new way to visualize the local optimality of the celebrated Marton's coding scheme in general broadcast channel \cite{K79}. 
The application of the local approach to the multiple access channels with common sources is presented in section \ref{sec:multiple}. We show that there are coherent combing gains in transmitting the common sources, and also determine the quantity of these gains. Finally, the conclusion of this paper is given in section \ref{sec:con}.

%we construct deterministic models for multi-terminal communication problems by the local approach, and we argue that these deterministic models are critical components in studying the information flow in complicated networks. 

\section{The Point-to-Point Channel} \label{sec:P2P}

We start with formulating and demonstrating the solutions of the linear information coupling problems for point-to-point channels. For a discrete memoryless point-to-point channel, with input $X \in \cX$ and output $Y \in \cY$, where $\cX$ and $\cY$ are finite sets, let the $|\cY| \times |\cX|$ channel matrix $W$ denote the conditional distributions corresponding to the channel. For this channel, it is known that the capacity is given by
\begin{equation}\label{eq:capacity2}
\max_{P_X} I(X;Y).
\end{equation}
This simple expression is resulted from a multi-letter problem. If we encode a message $U$ in $n$-dimensional vector $X^n$, and decode it from the corresponding $n$-dimensional channel output, we can write the problem as
\begin{equation}\label{eq:multi-letter}
\max_{U \rightarrow X^n \rightarrow Y^n} \frac{1}{n} I(U;Y^n),
\end{equation}
where $U \rightarrow X^n \rightarrow Y^n$ denotes a Markov relation. It turns out that for the point-to-point channel, there is a simple procedure to prove that (\ref{eq:multi-letter}) and (\ref{eq:capacity2}) have the same maximal value \cite{TJ91}:
\begin{align}
\notag
\frac{1}{n} I(U;Y^n) 
&\leq \frac{1}{n} I(X^n;Y^n) \\ \notag
&= \frac{1}{n} \sum_{i} H(Y_i|Y_1^{i-1}) - H(Y_i|X^n,Y^{i-1}) \\ \notag
&\leq \frac{1}{n} \sum_{i} H(Y_i) - H(Y_i|X_i)\\ \label{single-letterization}
&= \frac{1}{n} \sum_{i} I(X_i;Y_i) \leq \max I(X;Y). 
\end{align}
This procedure is known as the \emph{single-letterization}, that is, to reduce a multi-letter optimization problem to a single-letter one. It is a critical step in general capacity problems, since without such a procedure, the optimization problems can potentially be over infinite dimensional spaces, and even numerical solutions of these problems may not be possible. Unfortunately, for general multi-terminal problems, we do not have a systematic way of single-letterization, which is why many of such problems remain open. The most famous examples of such problems are the general (not degraded) broadcast channels. 

In contrast to the capacity problems, we study in this paper an alternative class of problems, called linear information coupling problems. In this section, we consider the linear information coupling problems for point-to-point channels. Assume as before that $X$ and $Y$ are the input and output of a point-to-point channel, the linear information coupling problem of this channel is the following multi-letter optimization problem:
\begin{align}\label{eq:LICP_multi-letter}
\max_{U \rightarrow X^n \rightarrow Y^n } &\frac{1}{n} I(U;Y^n),\\ \label{eq:LICP_multi-letter_constraint_1}
\mbox{subject to:} \ &\frac{1}{n} I(U;X^n) \leq \delta, \\ \label{eq:LICP_multi-letter_constraint_2}
& \frac{1}{n}  \| P_{X^n|U=u} - P_{X^n} \|^2 = O(\delta), \ \forall u,
\end{align}
where $\delta$ is assumed to be small\footnote{In the assumption $\frac{1}{n} I(U;X^n) \leq \delta$, we implicitly assume that $\delta \ll \frac{1}{n}$, for all $n$, so that the approximation in section \ref{sec:local} will be valid for any number of letters.}. The difference between \eqref{eq:multi-letter} and \eqref{eq:LICP_multi-letter} lies in the constraints~\eqref{eq:LICP_multi-letter_constraint_1} and~\eqref{eq:LICP_multi-letter_constraint_2}. In the capacity problem, the entire input sequence is dedicated to encoding $U$; on the other hand, for the linear information coupling problems, we can only alter the input sequence "slightly" to carry the information from $U$. Operationally, we assume that sequences of i.i.d. $P_X$ distributed symbols are transmitted, and the corresponding $P_Y$ distributed symbols are received at the channel output. This can also be viewed as having a pair of jointly distributed multi-source $(X,Y)$ with the distribution $P_{XY}$. Then, we encode the message $U=u$ by altering a small number of symbols in these sequences, such that the empirical distribution changes to $P_{X|U=u}$. As we only alter a small number of symbols, the conditional distribution $P_{X|U=u}$ is close to $P_{X}$. For the rest of this paper, we assume that the marginal distribution $P_{X^n}$ is an i.i.d. distribution over the $n$ letters\footnote{This assumption can be proved to be ``without loss of the optimality'' for some cases \cite{SL08}. In general, it requires a separate optimization, which is not the main issue addressed in this thesis. To that end, we also assume that the given marginal $P_{X^n}$has strictly positive entries.}. 
%Then, the linear information coupling problem is asking that with which direction of perturbation, what fraction of information can be conveyed to the receiver
%The information $U$ is conveyed to the receiver end through distinguishing the output distributions $P_{Y^n|U}$ for different values of $U$. 
Our goal is to find the conditional distributions $P_{X|U=u}$ for different values $u$, which satisfy the marginal constraint $P_{X}$, such that a thin layer of information can be conveyed to the $Y$ end the most efficiently.

%Thus, the goal of the linear information coupling problem is to design the conditional distributions $P_{X^n|U=u}$, such that the output distributions $P_{Y^n|U=u}$ is most distinguishable.

Although we assume that the operating point has i.i.d. $P_X$ distribution, question remains on whether $P_{X^n | U=u}$ should be i.i.d.. Therefore, \eqref{eq:LICP_multi-letter} has a multi-letter form. In fact, we will show in section \ref{sec:single} that, unlike the capacity problem, the linear information coupling problem allows easy single-letterization, and the optimal conditional distribution $P_{X^n | U=u}$ is indeed i.i.d.. This turns out to be a very important feature of the linear information coupling problems, since the problems are then optimized over finite dimensional spaces.

\iffalse
It turns out that this seemingly harmless change makes the problem much easier, and allows it to be solved in a completely different way. Precisely speaking, one can in fact solve (\ref{eq:LICP_multi-letter}), and show that the optimal choice of $P_{X^n|U}$ is i.i.d., for any value of $U$, and hence reduce the problem to its single letter form\footnote{In the assumption $\frac{1}{n} I(U;X^n) \leq \delta$, we implicitly assume that $\delta$ is a function $\delta(n)$ of $n$, and $n \cdot \delta(n) \ll 1$, for all $n$. Thus, the approximation in section \ref{sec:local} is valid for any number of letters.}
\begin{equation}\label{eq:LICP_multi-letter_2}
\max_{U \rightarrow X^n \rightarrow Y^n : \frac{1}{n} I(U;X^n) \leq \delta} \frac{1}{n} I(U;Y^n) = \max_{U \rightarrow X \rightarrow Y : I(U;X) \leq \delta} I(U;Y).
\end{equation}
In order to explain how this problem is related to the original capacity problem, and why it can be solved in a way completely different from the procedure in (\ref{single-letterization}), we will demonstrate in the following how this point-to-point problem is solved. While this does not answer any open network problem at this point, the point-to-point problem is still the simplest case to demonstrate our technique, which can be used in many harder problems with a little variations.
\fi

\subsection{The Local Approximation} \label{sec:local}

The key technique of our approach to solve the linear information coupling problems is to use a local approximation of the K-L divergence. Let $P$ and $Q$ be two distributions over the same alphabet $\cX$, then $D(P \| Q) = \sum_{x} P (x) \log (P (x)/Q(x))$ can be viewed as a measure of distance between these two distributions. However, this distance measure is not symmetric, that is, $D(P \| Q) \neq D(Q \| P)$. %This reflects the fact that the space of probability distributions, when KL divergence is used as the distance metric, is not flat, but rather has a complicated manifold structure. 
The situation can be much simplified if $P$ and $Q$ are close. We assume that $Q(x) = P(x) + \epsilon J(x)$, for some small value $\epsilon$, and a function $J : \cX \mapsto \mathbb{R}$. Then, the KL divergence can be written, with the second order Taylor expansion, as
\begin{align}
\notag
D(P \| Q) 
&= -\sum_{x} P(x) \log \frac{Q(x)}{P(x)} \\ \notag
&= -\sum_{x} P(x) \log \left( 1 + \epsilon \cdot \frac{J(x)}{P(x)} \right) \\ 
\label{eqn:epsilon}
&= \frac{1}{2} \epsilon^2 \cdot \sum_{x} \frac{1}{P(x)} J^2(x) + o(\epsilon^2).
\end{align}
We think of $J$ also as a column vector of dimension $| \cX |$, and denote $\sum_{x} J^2(x)/P(x)$ as $\| J \|^2_P$, which is the weighted norm square of the perturbation vector $J$. It is easy to verify here that replacing the weights in this norm by $Q(x)$, or any other distribution in the neighborhood, only results in an $o(\epsilon^2)$ difference. That is, up to the first order approximation, the weights in the norm simply indicate the neighborhood of distributions where the divergence is computed. As a consequence, $D(P \| Q)$ and $D(Q \| P)$ are considered as equal up to the first order approximation.

For convenience of the notations, we define the $\it{weighted \ perturbation \ vector}$ as
\begin{equation}\notag
\K(x) \triangleq \frac{1}{\sqrt{P(x)}} J(x), \ \ \forall x \in \cX,
\end{equation}
or in vector form $\K \triangleq \left[ \sqrt{P}^{-1} \right] J$, where $\left[ \sqrt{P}^{-1} \right]$ represents the diagonal matrix with entries $\left\{ \sqrt{P(x)}^{-1}, \ x \in \cX \right\}$.	This	allows	us	to	write $\| J \|^2_P = \| \K \|^2$, where the last norm is simply the Euclidean norm.

With this definition of the norm on the perturbations of distributions, we can generalize to define the corresponding notion of inner products. Let $Q_i (x) = P (x) + \epsilon \cdot J_i (x)$, $\forall x, i = 1, 2$, we can define
\begin{equation}\notag
\langle J_1 , J_2 \rangle_P \triangleq \sum_{x} \frac{1}{P(x)} J_1 (x) J_2 (x) = \langle \K_1 , \K_2 \rangle,
\end{equation}
where $\K_i = \left[ \sqrt{P}^{-1} \right] J_i$, for $i = 1, 2$. From this, notions of orthogonal perturbations and projections can be similarly defined. The point here is that we can view a neighborhood of distributions as a linear metric space, where each distribution $Q$ is specified by the corresponding weighted perturbation $\K$ from $P$, and define notions of orthonormal basis and coordinates on it.

We now use this new notation to rewrite the linear information coupling problem \eqref{eq:LICP_multi-letter}, which is repeated her convenience. 
\begin{align}\label{eq:LICP_single-letter}
\max_{U \rightarrow X \rightarrow Y } & I(U;Y),\\ \label{eq:LICP_single-letter_constraint_1}
\mbox{subject to:} \ & I(U;X) \leq \delta, \\ \label{eq:LICP_single-letter_constraint_2}
& \| P_{X|U=u} - P_{X} \|^2 = O(\delta), \ \forall u,
\end{align}
For the rest of this paper, we replace the notation $\delta$ in the constraint by $\frac{1}{2}\epsilon^2$, as in the quadratic approximation in \eqref{eqn:epsilon}. We assume that the distribution $P_X$ is given as the operating point. The purpose of \eqref{eq:LICP_single-letter} is to design the distribution $P_U$ and the conditional distributions $P_{X|U=u}$, for different values of $u$, to maximize the mutual information $I(U;Y)$, such that the constraint
\begin{equation}\label{eq:constraint}
I(U;X) = \sum_{u} P_U (u) \cdot D(P_{X|U} (\cdot | u) \| P_X ) \leq \frac{1}{2} \epsilon^2,
\end{equation}
is satisfied, and the marginal distribution $\sum_u P_{U}(u) P_{X|U=u} = P_X$. Since $\epsilon$ is small, from \eqref{eq:constraint} and the local constraint~\eqref{eq:LICP_single-letter_constraint_2}, we can write the conditional distributions $P_{X|U=u}$ as perturbations of $P_X$. Written in vector form, we have $P_{X|U=u} = P_X + \epsilon \cdot J_u$, where $J_u$ is the perturbation vector. With the local approximation on $D(P_{X|U} (\cdot | u) \| P_X )$, the constraint \eqref{eq:constraint} can be written as
\begin{align*}
\frac{1}{2} \epsilon^2 \sum_{u} P_U(u) \cdot \| J_u \|^2_{P_X} + o(\epsilon^2) \leq \frac{1}{2} \epsilon^2,
\end{align*}
which is equivalent to $\sum_{u} P_U(u) \cdot \| J_u \|^2_{P_X} \leq 1$. Moreover, since the conditional distributions $P_{X|U=u}$, for different $u$, have to be valid probability distributions and satisfy the marginal constraint, we have the extra constraints on the perturbation vector $J_u$:
\begin{align} \label{eq:xju}
\sum_x J_u (x) = 0, \forall  u,
\end{align} 
and 
\begin{align} \label{eq:uju}
\sum_u P_U(u) J_u(x) = 0, \forall x.
\end{align}

%Additionally, in order to satisfy the constraint on the marginal distribution, we need $\sum_u P_U(u) J_u = 0$.
%Again,we use the notation that $L = [\sqrt{P}^{-1}] J$.
Next, for each $u$, let $\K_u = \left[ \sqrt{P_X}^{-1} \right] J_u$ be the weighted perturbation vector. Now, we observe that in the output distribution space
\begin{align}
\notag
P_{Y|U=u} 
&= W P_{X|U=u} = W P_X + \epsilon \cdot W J_u \\ \notag
&= P_Y + \epsilon \cdot W \left[ \sqrt{P_X} \right] \K_u,
\end{align}
where the channel applied to an input distribution is simply written as the channel matrix $W$, with dimension $| \cY | \times | \cX |$, multiplying the input distribution as a vector. At this point, we have reduced both the spaces of input and output distributions as linear spaces, and the channel acts as a linear transform between these two spaces. The linear information coupling problem \eqref{eq:LICP_single-letter} can be rewritten as, ignoring the $o(\epsilon^2)$ terms:
\begin{align}\notag
\max. \ &\sum_{u} P_U(u) \cdot \| W J_u \|^2_{P_Y}, \\ \notag
\mbox{subject to:} \ &\sum_{u} P_U(u) \cdot \| J_u \|^2_{P_X} = 1,
\end{align}
or equivalently in terms of Euclidean norms,
\begin{align}\label{eq:bbb200}
\max. \ &\sum_{u} P_U(u) \cdot \left\| \left[\sqrt{P_Y}^{-1} \right] W \left[\sqrt{P_X} \right] \cdot \K_u \right\|^2 \\ \label{eq:bbb}
\mbox{subject to:} \ &\sum_{u} P_U(u) \cdot \| \K_u \|^2 = 1.
\end{align}
The optimization is in the choices of $\K_u$ vectors, which also satisfy the constraints from \eqref{eq:xju}, \eqref{eq:uju}, rewritten   as
\begin{align} \label{eq:bbb_2}
\sum_x \sqrt{P_X(x)} \cdot \K_u(x) = 0, \forall u
\end{align}
and
\begin{align} \label{eq:bbb_3}
\sum_u P_U(u) \sqrt{P_X(x)} \cdot \K_u(x) = 0, \forall x.
\end{align}

The problem \eqref{eq:bbb} is a linear algebra problem. We need to find $P_U$ and a corresponding weighted perturbation vectors $\K_u$ for every $u$, such that the average weighted square norm, as in \eqref{eq:bbb200}, is maximized. For convenience, we write 
\begin{align}
\label{eqn:B}
B \triangleq \left[\sqrt{P_Y}^{-1} \right] W \left[\sqrt{P_X} \right].
\end{align}

% the input (right) singular vectors of the matrix
%$B \triangleq \left[\sqrt{P_Y}^{-1} \right] W \left[\sqrt{P_X} \right]$ with large singular values. 

Now a simplifying observation is that in both \eqref{eq:bbb200} and \eqref{eq:bbb} the same set of weights $P_U(u)$ are used, thus the problem can be reduced in finding a direction of $\K^*$, which maximizes the ratio $\norm{B \K}/ \norm{\K}$, and the optimal choice of $\K_u$ should be along the direction of this $\K^*$ for every $u$. From the linearity of the problem, scaling $\K_u$ along this direction has no effect on the result. Thus, we can with out loss of optimality pick a simple solution, with $U$ binary equi-probable , and 

%The point here is that, with our local approximation and the linearized structure, the averaged output norm square~\eqref{eq:bbb200} is simply the sum of the output norm squares of each $L_u$ weighted by $P_U(u)$. So, if there exists $u_1,u_2 \in {\cal U}$ such that $\| B \cdot L_{u_1} \|^2 / \| L_{u_1} \|^2 > \| B \cdot L_{u_2} \|^2 / \| L_{u_2} \|^2$, then we can increase the averaged output norm square~\eqref{eq:bbb200} by setting $L_{u_2}$ to be along $L_{u_1}$, i.e. choosing a new vector $\hat{L}_{u_2} = ( \|L_{u_2}\| / \|L_{u_1}\|) \cdot L_{u_1}$ to replace $L_{u_2}$. With this new choice of $\hat{L}_{u_2}$, we increase~\eqref{eq:bbb200}, while keeping the constraint~\eqref{eq:bbb} to be satisfied. Therefore, we can see that the optimal solution of~\eqref{eq:bbb200} happens when all vectors $L_u$ are aligned along the principal direction that has the largest output image under the linear map $B$. Moreover, from the linearity of the map $B$, once we solve this principal direction, the choices of the cardinality of $U$ and the corresponding $P_U(u)$ do not effect the optimal averaged output norm square, as long as the constraint~\eqref{eq:bbb} and~\eqref{eq:bbb_3} are satisfied. So, we can without loss of generality choose $U$ as a uniformly binary random variable, and
\begin{align*}
P_{X|U=0} &= P_X + \epsilon [\sqrt{P_X}] \cdot \K^* \\
P_{X|U=1} &= P_X - \epsilon [\sqrt{P_X}] \cdot \K^*.
\end{align*} 
This makes constraint \eqref{eq:bbb_3} always satisfied. 
 
% This step is critical in the sense that it reduces the cardinality that is necessary for $U$, and simplifies the overall problem as a readily solvable linear algebra problem. %, which is resulted from our linearized structure.
%the principal direction that, under the constraint~\eqref{eq:bbb_2}, provides the largest output image of the linear map $B$.
%Therefore, we can simply choose $U$ as uniformly binary random variable, and $P_{X|U=0} = P_X + \epsilon \cdot L$, and $P_{X|U=1} = P_X - \epsilon \cdot L$, for some weighted perturbation vector $L$. 
Figure \ref{fig:Perturb} illustrates this idea from the geometric point of view. We rewrite the optimization problem and the constraints as: 
\begin{align}\label{eq:opt_prob}
\max. \  &\left\| B \cdot \K \right\|^2, \\ \label{eq:opt_prob_2}
\mbox{subject to:} \  &\| \K \|^2 = 1, \\ \label{eq:opt_prob_3}
&\sum_x \sqrt{P_X(x)} \K(x) = 0.
\end{align}
\begin{figure}
\centering 
\subfigure[]{
\includegraphics{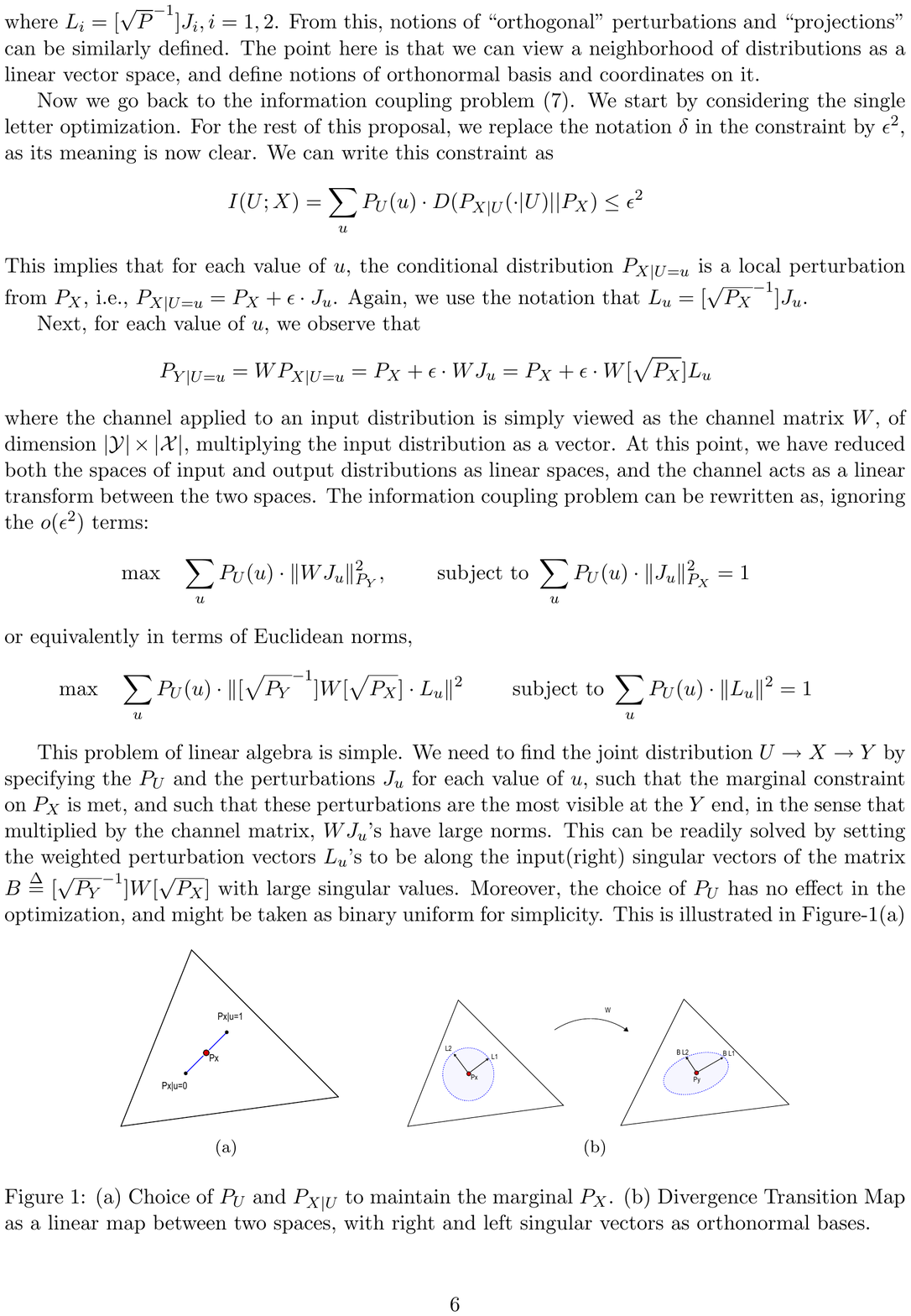} 
\label{fig:Perturb}
}
\subfigure[]{
\includegraphics{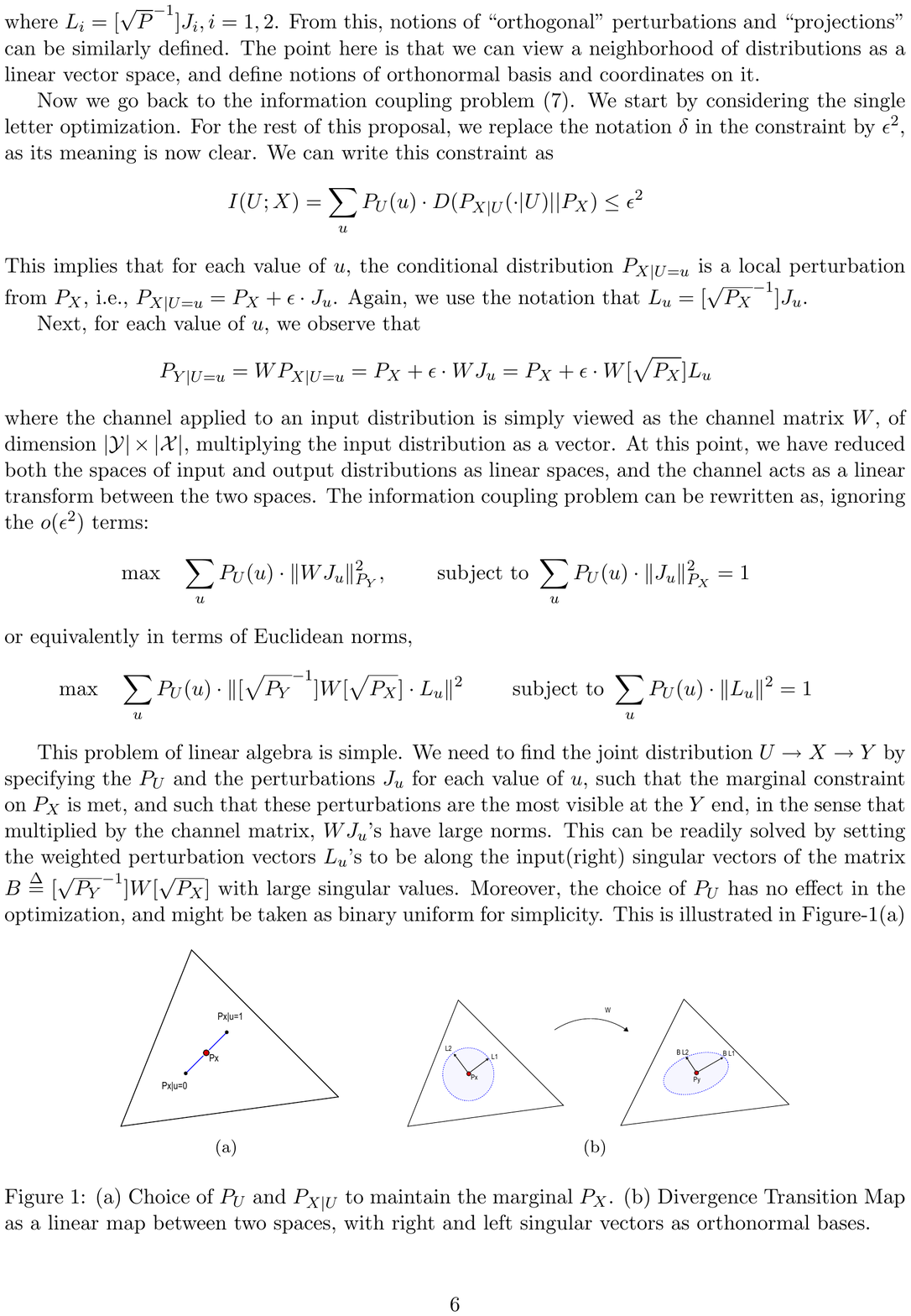}
\label{fig:Channel_Map}
}
\caption{(a) Choice of $P_U$ and $P_{X|U}$ to maintain the marginal $P_X$. (b) Divergence Transition Map as a linear map between two spaces, with right and left singular vectors as orthonormal bases.}
\end{figure}
We call this matrix $B$ as the \emph{divergence transition matrix} (DTM), since it maps divergence in the space of input distributions to that of the output distributions. 

Now, to solve this problem, first note that if we ignore the linear constraint~\eqref{eq:opt_prob_3}, the optimization of~\eqref{eq:opt_prob} is simply choosing $\K$ as the largest right (input) singular vector of $B$ corresponding to the largest singular value. However, this choice might violate~\eqref{eq:opt_prob_3}. We can view \eqref{eq:opt_prob_3} as an orthogonality constraint between $\K$ and a vector $\uv_0 = \left[ \sqrt{P_X}, x \in \cX \right]^T$, and still carry out the optimization. It turns out that the SVD structure of the $B$ matrix makes this particularly simple. 

\begin{lemma}
Let the singular values of the DTM $B$ be $\sigma_0 \geq \sigma_1 \geq \ldots \geq \sigma_m$, with the corresponding right singular vectors $\uv_0, \uv_1, \ldots , \uv_m$, where $m = \min \left\{ |\cX| , |\cY| \right\} -1$, then $\sigma_0 =1$ and $\uv_0 = \left[ \sqrt{P_X}(x), x \in \cX \right]^T$.
\end{lemma}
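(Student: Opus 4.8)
The plan is to prove the lemma in two halves. First I would show that the divergence transition matrix $B = \left[\sqrt{P_Y}^{-1}\right] W \left[\sqrt{P_X}\right]$ is a Euclidean contraction, i.e.\ $\norm{B\K} \le \norm{\K}$ for every $\K \in \mathbb{R}^{|\cX|}$, so that every singular value of $B$ is at most $1$. Second I would exhibit $\uv_0 = \left[\sqrt{P_X(x)}, x \in \cX\right]^T$ as a unit vector with $\norm{B\uv_0} = \norm{\uv_0} = 1$, which then forces $\sigma_0 = 1$ and identifies $\uv_0$ as a corresponding right singular vector. The two structural facts about the channel matrix $W$ that drive the argument are (i) column stochasticity, $\sum_{y} W(y|x) = 1$ for all $x$, and (ii) $W P_X = P_Y$, the defining relation of the output marginal.

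For the contraction bound I would write
\[
\norm{B\K}^2 = \sum_{y} \frac{1}{P_Y(y)} \left( \sum_{x} W(y|x)\, \sqrt{P_X(x)}\, \K(x) \right)^{2},
\]
and apply the Cauchy--Schwarz inequality inside the square with the splitting $W(y|x)\sqrt{P_X(x)}\,\K(x) = \sqrt{W(y|x) P_X(x)}\cdot \sqrt{W(y|x)}\,\K(x)$. This gives $\left(\sum_x W(y|x)\sqrt{P_X(x)}\K(x)\right)^{2} \le \left(\sum_x W(y|x) P_X(x)\right)\left(\sum_x W(y|x)\K(x)^{2}\right) = P_Y(y) \sum_x W(y|x)\K(x)^{2}$, where the last equality is fact (ii). Substituting this back, the $P_Y(y)$ factors cancel, and interchanging the order of summation and using fact (i) yields $\norm{B\K}^2 \le \sum_x \left(\sum_y W(y|x)\right)\K(x)^2 = \sum_x \K(x)^2 = \norm{\K}^2$. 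Hence $\sigma_0 = \max_{\K \ne 0} \norm{B\K}/\norm{\K} \le 1$.

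For the second half, a direct computation using fact (ii) gives $B\uv_0 = \left[\sqrt{P_Y}^{-1}\right] W \left[\sqrt{P_X}\right]\uv_0 = \left[\sqrt{P_Y}^{-1}\right] W P_X = \left[\sqrt{P_Y}^{-1}\right] P_Y = \uu_0$, where $\uu_0 \triangleq \left[\sqrt{P_Y(y)}, y\in\cY\right]^T$; and since $\norm{\uv_0}^2 = \sum_x P_X(x) = 1$ and $\norm{\uu_0}^2 = \sum_y P_Y(y) = 1$ we get $\norm{B\uv_0}/\norm{\uv_0} = 1$. Combined with the contraction bound, $\sigma_0 = 1$, and because $\uv_0$ attains the maximum of the Rayleigh quotient $\K^{T} B^{T} B \K / \K^{T}\K$ it is an eigenvector of $B^{T}B$ for its top eigenvalue $1$, hence a right singular vector of $B$ for $\sigma_0 = 1$ (with left singular vector $\uu_0$; one can also check directly that $B^{T}\uu_0 = \left[\sqrt{P_X}\right] W^{T}\mathbf{1} = \left[\sqrt{P_X}\right]\mathbf{1} = \uv_0$ via fact (i), since $\left[\sqrt{P_Y}^{-1}\right]\uu_0 = \mathbf{1}$). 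If, as assumed, $P_{XY}$ has strictly positive entries, the equality condition in Cauchy--Schwarz forces $\K$ proportional to $\sqrt{P_X}$ whenever $\norm{B\K} = \norm{\K}$, so $\sigma_0$ is simple and $\uv_0$ is the unique top right singular vector up to sign.

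I do not expect a genuine obstacle here: the statement is essentially the DTM incarnation of the elementary fact that a normalized stochastic matrix is a Euclidean contraction whose Perron direction is the square root of the marginal law, and all the computations are short. The only point requiring a little care is choosing the Cauchy--Schwarz split so that \emph{both} properties of $W$ --- column stochasticity and $W P_X = P_Y$ --- are actually used, and keeping straight which diagonal normalization ($\left[\sqrt{P_X}\right]$ on the right, $\left[\sqrt{P_Y}^{-1}\right]$ on the left) multiplies $W$ on which side.
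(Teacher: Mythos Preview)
Your proof is correct, but it takes a different route from the paper's. The paper verifies directly that $B^TB\,\uv_0=\uv_0$ and $BB^T\uw_0=\uw_0$ (so $\sigma=1$ is a singular value with the claimed singular vectors), and then bounds the remaining singular values by constructing a binary $U$ with $P_{X|U=0}=P_X+\epsilon[\sqrt{P_X}]\K$, $P_{X|U=1}=P_X-\epsilon[\sqrt{P_X}]\K$ for $\K\perp\uv_0$, and invoking the \emph{data processing inequality} $I(U;Y)\le I(U;X)$ together with the local quadratic approximation to conclude $\|B\K\|^2\le\|\K\|^2$. Your argument instead establishes the contraction $\|B\K\|\le\|\K\|$ for \emph{every} $\K$ directly via Cauchy--Schwarz, using only column stochasticity of $W$ and $WP_X=P_Y$. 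The advantage of your approach is that it is self-contained and elementary --- it does not rely on the $o(\epsilon^2)$ approximation of the KL divergence, and as you note, the equality case in Cauchy--Schwarz immediately yields simplicity of $\sigma_0$ when the joint law is strictly positive. The paper's approach, by contrast, emphasizes the information-theoretic content of the bound (it is literally the data processing inequality in local coordinates), which fits the narrative of the DTM as a ``divergence transition'' map. One small caveat: the paper only assumes $P_X$ has strictly positive entries, not $P_{XY}$, so your uniqueness remark is slightly stronger than what the paper's standing hypotheses guarantee; but this does not affect the lemma as stated.
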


\begin{proof}
First, it is easy to verify that $B$ has a singular value of $1$, corresponding to left singular vector of $\uw_0= [\sqrt{P_Y(y)}, y \in {\cal Y}]^T$ and right singular vector $\uv_0$, by checking from definition that 
\begin{align*}
B^T\cdot B \cdot \uv_0 = \uv_0, \qquad B\cdot B^T \cdot \uw_0 = \uw_0
\end{align*}

Observe that this $\uv_0$ is an invalid direction to perturb distributions, in that $P_X + \epsilon [\sqrt{P_X}] \cdot \uv_0$ is not a valid distribution. More importantly, any vector orthogonal to $\uv_0$ is a valid perturbation, from \eqref{eq:bbb_2}. That is, any linear combination of the other singular vectors satisfies this constraint. 

To see that all the other singular values must be no larger than $1$, we consider a Markov relation $U \rightarrow X \rightarrow Y$.  Let $U \sim $Bernoulli $(1/2)$, and 
\begin{align*}
P_{X|U=0} = P_X + \epsilon [\sqrt{P_X}] \cdot \K, \quad P_{X|U=1} = P_X - \epsilon [\sqrt{P_X}] \cdot \K
\end{align*}

where $\K$ is orthogonal to $\uv_0$, and hence guarantees the above are valid conditional distributions. Now from the data processing inequality, we have $I(U;Y) \leq I(U;X)$, which implies $\norm{B \K}^2 \leq \norm{\K}^2$. This shows that all other singular values of $B$ must be no larger than $1$. 

%This implies that for any weighted perturbation vector $L$ that satisfies~\eqref{eq:opt_prob_3}, the inequality $\| B \cdot L \|^2 \leq \| L \|^2$ holds. Because $L$ can be chosen as any singular vector of $B$ other than $\uv_0$, all singular values of $B$ are upper bounded by $1$, hence $\sigma_0 = 1$ is the largest singular value. 
%It is easy to verify that $1$ is a singular value of $B$ with the corresponding right singular vector $\uv_0 = \left[ \sqrt{P_X}, x \in \cX \right]^T$, i.e., $B^TB \cdot \uv_0 = \uv_0$. Thus, we only need to show that $1$ is the largest singular value of $B$. To show this, observe that an $|\cX|$-dimensional vector $L$ is a valid weighted perturbation vector if and only if $\sum_x \sqrt{P_X(x)} L(x) = 0$, thus every vector orthogonal to $\uv$ is a valid weighted perturbation vector. In other words, all linear combinations of the singular vectors other than $\uv$ of $B$ are valid weighted perturbation vectors. Now, since $U \rightarrow X \rightarrow Y$ denotes a Markov relation, from the data processing inequality, we have $I(U;Y) \leq I(U;X)$. This implies that for any valid weighted perturbation vector $L$, the inequality $\| B \cdot L \|^2 \leq \| L \|^2$ holds. Because $L$ can be chosen as any singular vector of $B$ other than $\uv$, all singular values of $B$ are upper bounded by $1$, hence $\sigma_0 = 1$ is the largest singular value. 
\end{proof}

%Moreover, it is easy to verify that $\uv_0$ is a right singular vector of $B$ corresponding to the singular value $1$, with the corresponding left singular vector $\uw_0 = \left[ \sqrt{P_Y}, y \in \cY \right]^T$. Therefore, the linear constraint~\eqref{eq:opt_prob_3} restricts the weighted perturbation vector $L$ in the subspace spanned by the singular vectors of $B$ other than $\uv_0$. This implies that the optimal weighted perturbation vector $L$ of~\eqref{eq:opt_prob} is the right singular vector of the matrix $B - \uw_0 \uv_0^T$ corresponding to its largest singular value.

%Here, we need the following key lemma on the singular value decomposition (SVD) of $B$.

%may not be a valid choice of the perturbation vector because of the constraint~\eqref{eq:opt_prob_3}. 

%The singular value decomposition (SVD) structure of this linear map has a critical role of our analysis, which is studied in the following lemma.

%In addition, applying the data processing inequality, the following lemma shows that all singular values of $B$ are upper bounded by $1$.  
%In particular, $\uv_0$ is a special singular vector of $B$, which is illustrated in the following lemma.

From this lemma, we can conclude that the optimal solution to  \eqref{eq:opt_prob} is to choose $\K$ to be along the right singular vector of $B$ with the second largest singular value, i.e., $\uv_1$.

%see that the largest singular value of $B$ is $1$, corresponding to an input singular vector $\left[ \sqrt{P_X}, x \in \cX \right]^T$, which is orthogonal to the simplex of probability distributions. This is not a valid choice for perturbation vectors. However, all vectors orthogonal to this vector, or equivalently, all linear combinations of other singular vectors are valid choices of the perturbation vectors $L_u$. Clearly, the optimum of the above problem is achieved by setting $L_u$ to be along the singular vector with the second largest singular value.

We can visualize as in Figure \ref{fig:Channel_Map} the orthonormal bases of the input and output spaces, respectively, according to the right and left singular vectors of $B$. The key point here is that while $I(U;X)$ measures how many bits of information is carried in $X$, depending on how the information is modulated, in terms of which direction the corresponding perturbation vector is, the information has different ``visibility" at the receiver end. Picking the weighted perturbation vector to be along $\uv_1$ results in the most ``efficient" way to carry information through the channel.

\begin{remark}
The above arguments imply that 
\begin{equation}\label{eq:sdpi}
I(U;Y) \leq \sigma_1^2 \cdot  I(U;X), 
\end{equation}
where $\sigma_1 \leq 1$ is the second largest singular value of $B$. Thus, comparing to the data processing inequality $I(U;Y) \leq I(U;X)$, \eqref{eq:sdpi} can be viewed as a ``strong data processing" inequality. Moreover, equality can be achieved if and only if for every $u$,  $P_{X|U=u}$ differs from $P_X$ along the weighted direction of $\uv_1$. If the perturbation is along other directions, then the output norm would be reduced even further according to other singular values of $B$. This result gives a clear view of how much information has to be lost when passing through a noisy channel. 

In the literature, there are several other notions of ``strong data processing inequalities". Our result only applies to the case where the locality constraints \eqref{eq:LICP_single-letter_constraint_2} holds. Without this constraint, one can indeed find tighter bounds \cite{Nair13}. The point here is that the local geometric picture is indeed very clean. 

%the solution of \eqref{eq:opt_prob} tells how the equality can be achieved. Note that the quantity $\sigma_1^2$ in~\eqref{eq:sdpi} comes from the linearized geometric structure, which requires the local constraint~\eqref{eq:LICP_single-letter_constraint_2}. So, the strong data processing inequality~\eqref{eq:sdpi} holds only when the local constraint is satisfied.

%More than $I(U;Y) \leq I(U;X)$ in the data processing inequality, we actually show that  Thus, we essentially come up with a stronger result than the data processing inequality. The equality can be achieved as solving \eqref{eq:opt_prob}.
%by setting the weighted perturbation vector to be along the right singular vector of $B$, with the second largest singular value.
% tells that, from the Markov chain $U \rightarrow X \rightarrow Y$, the mutual informations have the relation $I(U;X) \geq I(U;Y)$. Let us assume that the second largest singular value of $B$ is $\sigma \leq 1$, then the above derivations imply that $\sigma^2 \cdot I(U;X) \geq  I(U;Y)$. Thus, we actually come up with a stronger result than the data processing inequality, and the equality can be achieved by setting the perturbation vector to be along the right singular vector of $B$, with the second largest singular value.
\end{remark}

\begin{remark}
%The local approximation approach we used in this subsection defines orthonormal bases on the spaces of input (and output)  distributions around $P_X$ (and $P_Y)$. We can see that these orthonormal bases are key elements in describing the geometry structures of the spaces of input (and output) distributions, which are critical in solving the linear information coupling problem \eqref{eq:LICP_multi-letter}. 

In fact, these ideas are closely related to the method of information geometry \cite{SH00}, which studies the geometric structure of the space of probability distributions. In information geometry, the collection of probability distributions forms a manifold, and the K-L divergence behaves as the distance measure in this manifold. However, the K-L divergence is not symmetric, and this manifold is not flat, but has a rather complicated structure. On the other hand, our approach introduced in this subsection locally approximates this complicated manifold by a tangent hyperplane around $P_X$, which can be viewed as an Euclidean space, and the K-L divergence corresponds to the square norm in this linear space. For the linearized neighborhood around $P_X$, just like any other vector space, one can define many orthonormal bases. Here, we pick the orthonormal basis according to the SVD structure of the DTM B, which is particularly suitable as our goal is to study how much information can be coupled through this channel. This orthonormal basis illustrates the principle directions of conveying information to the receiver end under the channel map, and provides the insights of how to efficiently exploit the channel.

%To understand this more deeply, observing that most of the information theoretical quantities, entropy, mutual information, error exponents, are all special cases of the K-L divergence, which measures the distance between two distributions. For most point-to-point communication problems, describing the problem by K-L divergence is sufficient, since there are often no more than $2$ or $3$ distributions involved in such problems. This is no longer the case when we move on to multi-terminal network problems, where many distributions, one conditioned on each value of the auxiliary random variables, are involved. To make things worse, these distributions often have high or even infinite dimensionality, especially for multi-letter problems. Therefore, describing problems only with the distance measure is cumbersome, and solving these information theory problems turns out to be very hard. This explains why the geometry structure we developed in this sub-section is necessary and important in dealing with these information theory problems.
\end{remark}

\begin{remark}
In many network information theory problems, it is required to deal with the tradeoff between multiple K-L divergence (mutual information). Even though K-L divergence is a convex function of both arguments, tradeoff, or linear combinations of multiple convex functions is no longer convex. Therefore, many of such problems are by nature non-convex optimization over potentially high dimensional spaces. This is why analytical solutions can often be hard to find. 

The  local approximation approach is a fundamental simplification of these problems. We approximate the K-L divergence by a quadratic function; and the tradeoff between quadratic functions remains quadratic, which is much easier to deal with. Effectively, our approach tries to find the local optima in such problems, which is a natural step when the problems are non-convex. 
\end{remark}

\begin{remark}
The idea of local analysis on the space of distributions is not new. In fact, it has been used in a wide range of problems, often to produce the cleanest results. One example is non-random parameter estimation \cite{Vantrees}, where asymptotically efficient estimator (achieving the Cramer-Rao bound) always exists when a large number of i.i.d. observations are available. In contrast to the non-asymptotic cases where efficient estimator does not always exists. The underlying reason of this simplicity is that the empirical distribution of a large number of i.i.d. observations lies in a small neighborhood of the true distribution, and local analysis can be employed. 

The contribution of this work is to push this simplification one step further by defining an orthonormal basis on this neighborhood. As we will see in examples, this structure can be quite helpful in the analysis. 
\end{remark}

%\noindent{\bf{Example:}}
\begin{example} \label{example1}
%\begin{figure}
%\centering
%\def\svgwidth{0.7\columnwidth}
%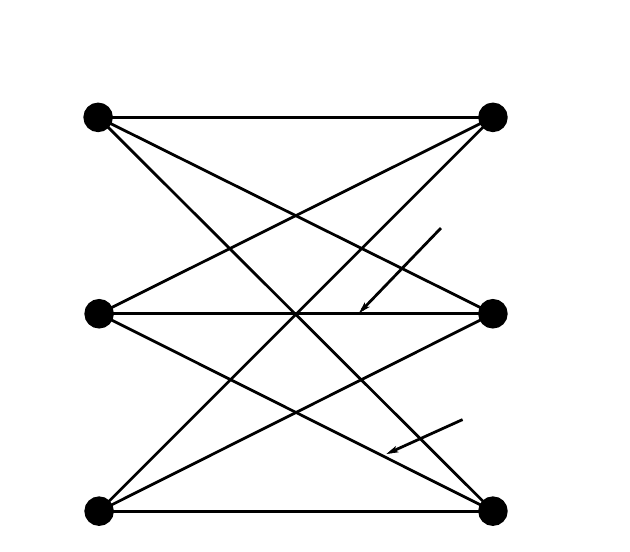
%\caption{The ternary input and binary output point-to-point channel with input symbols $X = \{0,1,2\}$ and output symbols $Y = \{0,1\}$.}
%\label{fig:example1}
%\end{figure}
%\begin{figure}
%\centering
%\def\svgwidth{0.7\columnwidth}
%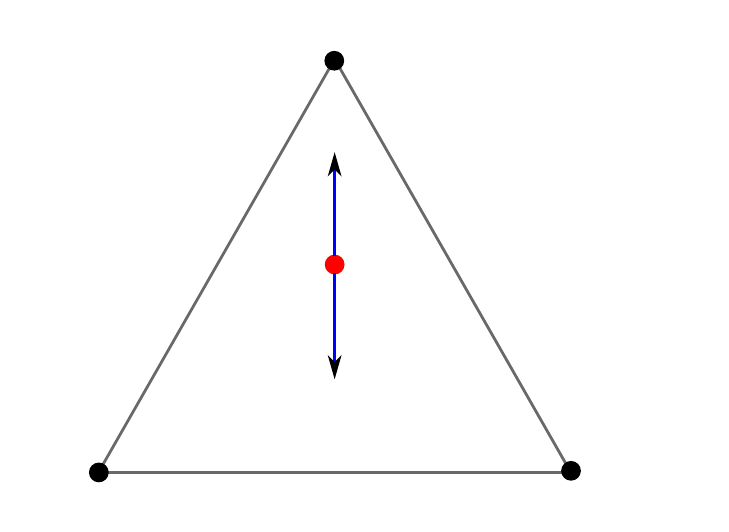
%\caption{The optimal perturbation direction for the ternary channel to convey information to the receiver end. Here, the triangle represents all valid input distributions, and the vertices are the deterministic input distributions of the three input symbols.}
%\label{fig:example1_1}
%\end{figure}
\begin{figure}
\centering 
\subfigure[]{
\def\svgwidth{0.35\columnwidth}
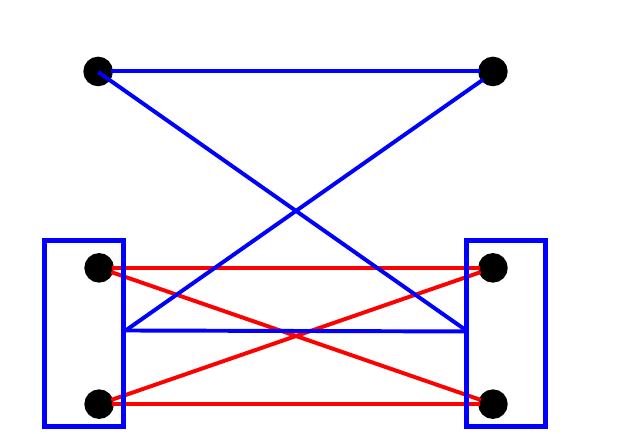
\label{fig:example1_3}
}
\subfigure[]{
\def\svgwidth{0.35\columnwidth}
\import{./Pictures/}{example1_2.pdf_tex}
\label{fig:example1}
}
\subfigure[]{
\def\svgwidth{0.35\columnwidth}
\import{./Pictures/}{example1_2_1.pdf_tex}
\label{fig:example1_1}
}
\caption{(a) The ternary point-to-point channel that is composed of two binary symmetric channels. (b) The channel transition probability of this ternary channel. (c) The optimal perturbation direction for the ternary channel to convey information to the receiver end. Here, the triangle represents all valid input distributions, and the vertices are the deterministic input distributions of the three input symbols.}
\end{figure}

In this example, we consider a ternary point-to-point channel with input symbols $\cX = \{ 1,2,3 \}$ and output symbols $\cY = \{ 1,2,3 \}$ such that: 
\begin{itemize}
\item [(i)] The sub-channel between the input symbols $\{ 2,3 \}$ and the output symbols $\{ 2,3 \}$ is a binary symmetric channel (BSC) with crossover probability $\frac{1}{2} - \gamma$.
\item [(ii)] If we employ the auxiliary input/output symbol $\bold{0}$ to represent the transmission/receiving of the input/output symbols $2$ and $3$, i.e., $\bold{0} = \{ 2 , 3 \}$, then the sub-channel between the input symbols $\{ \bold{0},1 \}$ and the output symbols $\{ \bold{0},1 \}$ is a BSC with crossover probability $\frac{1}{2} - \eta$.
\end{itemize}
%if we denote the input/output symbols $\{ 2,3 \}$ as a super-input/output symbol\footnote{Rigorously, this means that for the input symbols, the event $\{ X = \bold{0} \} \equiv \{ X = 2 \ \mbox{or} \ X=3 \}$, and similarly for the output symbols $\{ Y = \bold{0} \} \equiv \{ Y = 2 \ \mbox{or} \ Y=3 \}$.} $\bold0$
This ternary channel is illustrated in Figure~\ref{fig:example1_3}. Mathematically, the channel transition matrix of this ternary channel can be specified as
%Consider a ternary-input and ternary-output channel as shown in Figure~\ref{fig:example1}, with the channel matrix
\begin{align*}
W =  \left[
      \begin{array}{cccc}
        \frac{1}{2} + \eta & \frac{1}{2} - \eta & \frac{1}{2} - \eta \\
        \frac{1}{4} - \frac{1}{2} \eta & \left( \frac{1}{2} + \eta \right) \left( \frac{1}{2} + \gamma \right) & \left( \frac{1}{2} + \eta \right) \left( \frac{1}{2} - \gamma \right) \\
        \frac{1}{4} - \frac{1}{2} \eta & \left( \frac{1}{2} + \eta \right) \left( \frac{1}{2} - \gamma \right) & \left( \frac{1}{2} + \eta \right) \left( \frac{1}{2} + \gamma \right) \\
      \end{array}
    \right],
\end{align*}
which is illustrated in Figure~\ref{fig:example1}. We do not assume $\gamma$ and $\eta$ to be small, i.e., the channel does not have to be very noisy in this example. 

There are two modes that information can be transmitted through this channel, corresponding to communicating through the two sub-channels described above. That is, to modulate the message in the input symbols $1$ and $\bold{0} = \{ 2,3 \}$ of the BSC $(\frac{1}{2} - \eta)$; or to modulate the message in the input symbols $2$ and $3$ of the BSC $(\frac{1}{2} - \gamma)$. 

Now, to apply our approach, we fix the empirical distribution $P_X$ as $[\frac{1}{2} \ \frac{1}{4} \ \frac{1}{4}]^T$, and the corresponding output distribution $P_Y$ is $[\frac{1}{2} \ \frac{1}{4} \ \frac{1}{4}]^T$. Then, the DTM is
\begin{align*}
B &= \left[
      \begin{array}{cccc}
        \frac{1}{2} + \eta & \frac{1}{2\sqrt{2}} - \frac{1}{\sqrt{2}} \eta & \frac{1}{2\sqrt{2}} - \frac{1}{\sqrt{2}} \eta \\
        \frac{1}{2\sqrt{2}} - \frac{1}{\sqrt{2}} \eta & \left( \frac{1}{2} + \eta \right) \left( \frac{1}{2} + \gamma \right) & \left( \frac{1}{2} + \eta \right) \left( \frac{1}{2} - \gamma \right) \\
        \frac{1}{2\sqrt{2}} - \frac{1}{\sqrt{2}} \eta & \left( \frac{1}{2} + \eta \right) \left( \frac{1}{2} - \gamma \right) & \left( \frac{1}{2} + \eta \right) \left( \frac{1}{2} + \gamma \right) \\
      \end{array}
    \right].
\end{align*}
For this DTM, the singular values are $1$, $2 \eta$, and $\left( 1 + 2 \eta \right) \gamma$, with the corresponding right singular vectors $[ \frac{1}{\sqrt{2}} \ \frac{1}{2} \ \frac{1}{2} ]^T$, $[ \frac{1}{\sqrt{2}} \ \frac{-1}{2} \ \frac{-1}{2} ]^T$, and $[ 0 \ \frac{1}{\sqrt{2}} \ \frac{-1}{\sqrt{2}}]^T$. Translating back to un-weighted perturbations, these corresponds to vectors $[\frac{1}{2}, \frac{1}{4}, \frac{1}{4}]^T, [\frac{1}{2}, -\frac{1}{4}, -\frac{1}{4}]^T$, and $[0, \frac{1}{4}, -\frac{1}{4}]^T$. 

Note here if we perturb $P_X$ along the first vector in any amount, it would result in an invalid distribution. The second and third perturbation vectors, correspond to the two transmission modes described above. For example, if we perturb along the second vector, we would have $P_{X|U = 0} = \left[ \frac{1}{2} + \frac{1}{2} \epsilon \ \frac{1}{4} - \frac{1}{4} \epsilon \ \frac{1}{4} - \frac{1}{4} \epsilon \right]^T$ and $P_{X|U = 1} = \left[ \frac{1}{2} - \frac{1}{2} \epsilon \ \frac{1}{4} + \frac{1}{4} \epsilon \ \frac{1}{4} + \frac{1}{4} \epsilon \right]^T$. This corresponds to increasing or decreasing the fraction of symbol $1$ transmitted, according to the value of $U$. 

The efficiency of these two modes depends on the two corresponding singular values. Here, the comparison is more in favor of the first mode, since given $P_X=[\frac{1}{2}, \frac{1}{4}, \frac{1}{4}]^T$, we can transmit symbol $2$ or $3$ only half of the time. 

The point of this example is that for general problems, where we cannot identify naturally separable transmission modes by inspection, the SVD structure of the DTM matrix can always help us to do that like in this simple example.

\end{example}

%In particular, if we compare (\ref{eq:multi-letter}) and (\ref{eq:LICP_multi-letter}), it can be realized that the only difference is the locality assumption $\frac{1}{n} I(U;X^n) \leq \frac{1}{2} \epsilon^2$. Thus, if we pick a large enough $\epsilon^2$ in the constraint, the two problems are identical. It turns out that this assumption of small $\epsilon^2$ is exactly what helps us to simplify the problem in several critical ways.

%First, we used the local quadratic approximation of the KL divergence. Observing that the mutual information and KL divergence are concave functions of the input distribution. However, for general network problems, we are often interested in optimizing the tradeoff between multiple divergences, which can be written as the weighted sum of multiple divergences, and the tradeoff is neither concave nor convex. Finding global optimum for such problems is intrinsically intractable, since with a small change in the weights, the global optimal solution might move from one local optimal solution to another one or boundary, and there can be no way to trace the new global optimal solution form the old one. In comparison, when we focus only on the local picture, the divergence is approximated as quadratic. The tradeoff of any number of quadratic functions is still quadratic. Effectively, the local approximation allows us to focus on finding the local optimal solutions, which is a more meaningful task.

\subsection{The Single-Letterization} \label{sec:single}

The most important feature of the linear information coupling problem \eqref{eq:LICP_multi-letter} is that the single-letterization is simple. To illustrate the idea, we first consider a $2$-letter version of the point-to-point channel:
\begin{align}\label{eq:LICP_single-leterization}
\max_{U \rightarrow X^2 \rightarrow Y^2 } &\frac{1}{2} I(U;Y^2),\\ \notag
\mbox{subject to:} \ &\frac{1}{2} I(U;X^2) \leq \frac{1}{2} \epsilon^2, \\ \notag
& \frac{1}{2}  \| P_{X^2|U=u} - P_{X^2} \|^2 = O(\epsilon^2), \ \forall u,
\end{align}
%\begin{equation}\label{eq:LICP_single-leterization}
%\max_{U \rightarrow X^2 \rightarrow Y^2 : \frac{1}{2} I( U ; X^2 ) \leq  \frac{1}{2} \epsilon^2} \frac{1}{2} I( U ; Y^2 ).
%\end{equation}
Let $P_X$, $P_Y$, $W$, and $B$ be the input and output distributions, channel matrix, and the DTM, respectively, for the single letter version of the problem. Then, the 2-letter problem has $P^{(2)}_X = P_X \otimes P_X$, $P^{(2)}_Y = P_Y \otimes P_Y$, and $W^{(2)} = W \otimes W$, where $\otimes$ denotes the Kronecker product. As a result, the new DTM is $B^{(2)} = B \otimes B$. Thus, the optimization in \eqref{eq:LICP_single-leterization} has exactly the same form as in \eqref{eq:LICP_single-letter}, where the only difference is that we need to find the SVD of $B^{(2)}$ instead of $B$. For that, we have the following lemma, the proof of which is omitted. %It is easy to verify that if $\mu_i, \mu_j$ are the singular values of $B$, corresponding to singular vectors $v_i, v_j$, then $\mu_i \mu_j$ is a singular value of $B^{(2)}$, with singular vector $v_i \otimes v_j$.

\begin{lemma}\label{lem:1}
Let $\uv_i$ and $\uv_j$ denote two right (or left) singular vectors of $B$ with singular values $\sigma_i$ and $\sigma_j$. Then, $\uv_i \otimes \uv_j$ is a right (or left) singular vector of $B^{(2)}$ and the corresponding singular value is $\sigma_i \cdot \sigma_j$.
\end{lemma}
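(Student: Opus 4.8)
The plan is to reduce the statement to the mixed-product property of the Kronecker product, $(A \otimes C)(B \otimes D) = (AB) \otimes (CD)$, together with $(A \otimes C)^T = A^T \otimes C^T$, applied to the SVD relations of the single-letter DTM. First I would record the SVD of $B$ in the form $B \uv_k = \sigma_k \uw_k$ and $B^T \uw_k = \sigma_k \uv_k$ for $k = 0,1,\ldots,m$, where the right singular vectors $\uv_0,\ldots,\uv_m$ are orthonormal in $\mathbb{R}^{|\cX|}$ and the left singular vectors $\uw_0,\ldots,\uw_m$ are orthonormal in $\mathbb{R}^{|\cY|}$.

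The core computation is then a one-liner: since $B^{(2)} = B \otimes B$,
\[
B^{(2)}(\uv_i \otimes \uv_j) = (B \otimes B)(\uv_i \otimes \uv_j) = (B\uv_i) \otimes (B\uv_j) = \sigma_i \sigma_j\,(\uw_i \otimes \uw_j),
\]
and symmetrically $(B^{(2)})^T(\uw_i \otimes \uw_j) = (B^T \otimes B^T)(\uw_i \otimes \uw_j) = \sigma_i \sigma_j\,(\uv_i \otimes \uv_j)$. Since $\sigma_i, \sigma_j \ge 0$, this exhibits $(\uv_i \otimes \uv_j,\ \uw_i \otimes \uw_j)$ as a right/left singular pair of $B^{(2)}$ with singular value $\sigma_i \sigma_j$. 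To be sure this describes the whole SVD, I would then check orthonormality and count dimensions: $\langle \uv_i \otimes \uv_j,\, \uv_k \otimes \uv_l\rangle = \langle \uv_i,\uv_k\rangle \langle \uv_j,\uv_l\rangle = \delta_{ik}\delta_{jl}$, so the family $\{\uv_i \otimes \uv_j\}_{0 \le i,j \le m}$ is orthonormal, and it has $(m+1)^2 = \min\{|\cX|^2,|\cY|^2\}$ members, which is exactly the number of singular values of the $|\cY|^2 \times |\cX|^2$ matrix $B^{(2)}$; the same holds for $\{\uw_i \otimes \uw_j\}$.

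There is essentially no analytic obstacle here, which is presumably why the proof is omitted; the only care needed is bookkeeping. The one subtlety worth a sentence is multiplicity: when several of the products $\sigma_i \sigma_j$ coincide, the individual vectors $\uv_i \otimes \uv_j$ are not the unique singular vectors for that value, but they are still legitimate singular vectors spanning the corresponding singular subspace, so the statement as phrased holds verbatim. Finally, I would note that the identical argument, by induction on the number of letters, gives $B^{(n)} = B^{\otimes n}$ with singular vectors $\uv_{i_1} \otimes \cdots \otimes \uv_{i_n}$ and singular values $\sigma_{i_1} \cdots \sigma_{i_n}$; combined with the previous lemma ($\sigma_0 = 1$, $\uv_0 = [\sqrt{P_X}(x),\ x\in\cX]^T$), the largest admissible singular value of $B^{(n)}$ remains $\sigma_1$, attained along $\uv_1 \otimes \uv_0 \otimes \cdots \otimes \uv_0$ and its coordinate permutations, which is precisely the fact that drives the single-letterization of the multi-letter coupling problem.
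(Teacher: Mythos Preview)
Your proposal is correct and is exactly the standard argument the paper has in mind; the paper simply states the lemma and omits the proof, so there is nothing to compare against beyond the mixed-product identity $(B\otimes B)(\uv_i\otimes\uv_j)=(B\uv_i)\otimes(B\uv_j)$ that you invoke. Your additional remarks on orthonormality, dimension count, multiplicities, and the extension to $B^{(n)}$ are all accurate and anticipate precisely how the lemma is used in the subsequent single-letterization argument.
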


Recall that the largest singular value of $B$ is $\mu_0 = 1$, with the right singular vector $\uv_0 = \left[ \sqrt{P_X}, x \in \cX \right]^T$, which corresponds to the direction orthogonal to the distribution simplex. This implies that the largest singular value of $B^{(2)}$ is also 1, corresponding to the singular vector $\uv_0 \otimes \uv_0$, which is again orthogonal to all valid choices of the weighted perturbation vectors.

The second largest singular value of $B^{(2)}$ is a tie between $\sigma_0 \cdot \sigma_1$ and $\sigma_1 \cdot \sigma_0$, with right singular vectors $\uv_0 \otimes \uv_1$ and $\uv_1 \otimes \uv_0$, where $\sigma_1$ is the second largest singular value of $B$, and $\uv_1$ is the corresponding right singular vector. The optimal solution of (\ref{eq:LICP_single-leterization}) is thus the weighted perturbation vectors to be along the subspace spanned by these two vectors. This can be written as
\begin{align}
\label{eq:aaa}
P_{\uX|U=u} 
&= P_X \otimes P_X + \left[\sqrt{P_X \otimes P_X} \right] \cdot \left( \epsilon \uv_0 \otimes \uv_1 + \epsilon' \uv_1 \otimes \uv_0 \right) \\ \label{eq:aa}
%=& P_X \otimes P_X + \epsilon \cdot \left( \left[ \sqrt{P_X} \right] v_0 \otimes \left[\sqrt{P_X}\right] v_1 \right) + \epsilon' \cdot \left( \left[\sqrt{P_X}\right] v_1 \otimes \left[\sqrt{P_X}\right] v_0 \right) \\ \notag
%&+ \epsilon \cdot \epsilon' \cdot \left( \left[\sqrt{P_X}\right] v_1 \otimes \left[\sqrt{P_X}\right] v_1 \right) + O(\epsilon^2) \\ \label{eq:aa}
&= \left( P_X + \epsilon' \left[\sqrt{P_X}\right] \uv_1 \right) \otimes \left( P_X + \epsilon \left[\sqrt{P_X}\right] \uv_1 \right) + O(\epsilon^2),
\end{align}
where (\ref{eq:aa}) comes from noting that the vector $\uv_0 = \left[\sqrt{P_X}, x \in \cX \right]^T$, and adding the appropriate cross term for factorization. Here, we assume that $\epsilon$ and $\epsilon'$ are of the same order, which makes the cross term $O(\epsilon^2)$. This means that up to the first order approximation, the optimal choice of $P_{X^2|U=u}$, for any value of $u$, has a product form, i.e., the two transmitted symbols in $X^2$ are conditionally independent given $U$. With a simple time-sharing argument, we can see that it is optima to set $\epsilon = \epsilon'$. This implies that picking $P_{X^2|U=u}$ to be i.i.d. over the two symbols achieves the optimal, with the approximation in \eqref{eq:aa}. 

Finally, by considering the $n^{th}$ Kronecker product, we can generalize this procedure to the single-letterization of the $n$-letter problem \eqref{eq:LICP_multi-letter}. 

%Moreover, it is also easy to see that the single-letterization of the $n$-letter case can be obtain by the same procedure as the $2$-letter case.

\begin{remark}
This proof of showing the single-letter optimality is simple. All we have used is the fact that the singular vectors of $B^{(2)}$ corresponding to the second largest singular value has a special form, $\uv_0 \otimes \uv_1$ or $\uv_1 \otimes \uv_0$. We can visualize this as follows. The space of $2$-letter joint distributions $P_{X^2|U=u}$ has $\left( |\cX|^2 - 1 \right)$ dimensions. Around the i.i.d. marginal distribution $P_X \otimes P_X$, there is a $2 \cdot \left( |\cX| -1 \right)$-dimensional subspace, such that the distributions in this subspace take the product form $Q_1 \otimes Q_2$, for some distributions $Q_1$ and $Q_2$ around $P_X$. These distributions can be written as perturbations from $P_X \otimes P_X$, with the weighted perturbations of the form $\uv_0 \otimes \uv + \uv' \otimes \uv_0$, for some $\uv$ and $\uv'$ orthogonal to $\uv_0$. The above argument simply verifies that the optimal solution to~\eqref{eq:LICP_single-leterization}, which is the singular vectors of the $B^{(2)}$ matrix, has this form. We argue that this geometric view was not clear from the classical proofs of single-letterization. Moreover, it turns out that this procedure can be applied to more general problems. In section \ref{sec:broadcast} and \ref{sec:multiple}, we will demonstrate that in quite a few other multi-terminal problems, the similar structure can be proved and used for single-letterization. 
%Our developments are conceptually straightforward: we will always take the multi-letter problem directly, even for the well-known open problems. Then, under the local assumption like (\ref{eq:LICP_multi-letter}), we can simplify the multi-letter problem to the single-letter one, with similar techniques developed in this section.  

%We will demonstrate in the rest sections that in quite a few other multi-terminal problems, the similar structure can be proved and used for single-letterization. 

We would like to emphasize that the advantage of our approach is that it does not require any constructive proving technique, such as constructing auxiliary random variables. For any given problem, one can follow essentially the same procedure to find out the SVD structure of the corresponding DTM. The result either gives a proof of the local optimality of the single-letter solutions or disproves it without any ambiguity.%, which is obviously a desirable feature.
\end{remark}

\subsection{Remarks On The Local Constraint~\eqref{eq:LICP_multi-letter_constraint_2}} \label{sec:remarks}

Note that in our linear information coupling problem~\eqref{eq:LICP}, we not only assume that the mutual information $\frac{1}{n}I(U;X^n)$ is small, but also restrict that the conditional distributions $P_{X^n|U=u}$ satisfy the local constraint $\frac{1}{n}  \| P_{X^n|U=u} - P_{X^n} \|^2 = O(\epsilon^2)$, for all $u$. With the local constraint~\eqref{eq:LICP_multi-letter_constraint_2} on $P_{X^n|U=u}$, we can then guarantee the validity of the local approximation of K-L divergence in section~\ref{sec:local}. Therefore, the local constraint~\eqref{eq:LICP_multi-letter_constraint_2} is indeed critical in order to obtain the linearized geometric structure.

%{eq:LICP_multi-letter_constraint_1}

Importantly, the local constraint~\eqref{eq:LICP_multi-letter_constraint_2} has to be specified independently from the constraint~\eqref{eq:LICP_multi-letter_constraint_1}, because assuming $\frac{1}{n}I(U;X^n)$ to be small does not necessarily imply the local constraint on all the conditional distributions. It is possible that the joint distribution $P_{X^nU}$ satisfies $\frac{1}{n}I(U;X^n) \leq \frac{1}{2} \epsilon^2$, but the conditional distributions $P_{X^n|U=u}$ are far from $P_{X^n}$ for some $u$'s, with the corresponding $P_U(u)$'s very small. Therefore, optimizing the mutual information $\frac{1}{n}I(U;Y^n)$ with only the constraint $\frac{1}{n}I(U;X^n) \leq \frac{1}{2} \epsilon^2$ can be a different problem from our linear information coupling problem.

In fact, Ahlswede and G$\acute{\mbox{a}}$cs in~\cite{Ahlswede76}, and a recent paper by Nair \emph{et al.}~\cite{Nair13} considered the following quantity
\begin{align} \label{eq:s(x,y)}
s(X^n,Y^n) = \lim_{I(U;X^n) \rightarrow 0} \sup_{U \rightarrow X^n \rightarrow Y^n} \frac{I(U;Y^n)}{I(U;X^n)},
\end{align}
where they established two important statements:
\begin{itemize}
\item [(i)] For i.i.d. $P_{X^nY^n} = P^n_{XY}$, the $s(X^n,Y^n)$ can be tensorized (single-letterized), i.e., $s(X^n,Y^n) = s(X,Y)$.
\item [(ii)] In general, $s(X^n,Y^n)$ can be strictly larger than the $\sigma_1^2$, where $\sigma_1$ is the second largest singular value of the divergence transition matrix $B$ that we developed in section~\ref{sec:local}.
\end{itemize}

The statement (i) is an important property of $s(X^n,Y^n)$, because it addresses the single-letterization of the multi-letter problem in information theory, which reduces a computationally impossible problem to a computable one. On the other hand, the $\sigma_1^2$ we considered in our local geometry can also be tensorized by a linear algebra approach as we showed in section~\ref{sec:single}. So, both $s(X^n,Y^n)$ and $\sigma_1^2$ have the tensorization property in this case of point-to-point communications.

Moreover, the statement (ii) implies that, without the local constraint, the optimal achievable information rate $\frac{1}{n}I(U;Y^n)$, subject to $\frac{1}{n}I(U;X^n) \leq \frac{1}{2} \epsilon^2$, is $s(X^n,Y^n) \cdot \frac{1}{2} \epsilon^2$. This is strictly better than the case with the local constraint, where the optimal achievable information rate is $\sigma_1^2 \cdot \frac{1}{2} \epsilon^2$. In that sense, $s(X^n,Y^n)$ is a strictly more meaningful quantity for this problem. 

%This makes $\rho_m(X^n,Y^n)$ be operationally less interesting than $s(X^n,Y^n)$ in this specific problem. 

However, we would like to point out that it is still worth considering the quantity $\sigma_1^2$. The value of the development we have shown so far in this paper does not lie in the tensorization result, but rather in the geometric method we used to arrive at this result. As we have stated in several different ways, the local assumptions fundamentally simplifies the problems involving tradeoff between multiple mutual information, which is the core of many problems seeking to find the multi-terminal capacity-regions. By taking this simplification, we focus on finding the local optimal solutions to the problem. In some sense, we have thus given up the hope of finding the globally optimal solution, and hence the hope of finding in general the capacity regions in the classical formulations. In return, the geometric insights from this approach does offer valuable guidance to code designs; and more importantly, it turns out that this simplification allows us to generalize our technique to the studies of network problems, in a conceptually straight forward way, which will be demonstrated in Section \ref{sec:broadcast}. In contrast, the technique used in the non-local version of the problem, such as that used in proving the tensorization of $s(X^n, Y^n)$, is intrinsically based on the idea used in the study of degraded broadcast channels, and is difficult to generalize beyond a handful of canonical examples.

Before moving to the more interesting multi-terminal problems, we discuss in the rest two subsections that how the linear information coupling problems can be connected to the capacity problems, and also the relation between the linear information coupling problems and the R$\acute{\mbox{e}}$nyi maximal correlation. Readers, who are only interested in the application of our local approach to the multi-terminal problems, can directly turn to the section \ref{sec:broadcast} and \ref{sec:multiple}.

%In section \ref{sec:broadcast} and \ref{sec:multiple}, we will 
%present some more preliminary results, on the general broadcast channels and the multiple access channels, to 
%demonstrate how to apply our method to network problems. Our developments are conceptually straightforward: we will always take the multi-letter problem directly, even for the well-known open problems. Then, under the local assumption like (\ref{eq:LICP_multi-letter}) and (\ref{eq:LICP_multi-letter_2}), we can simplify the multi-letter problem to the single-letter one, with the techniques we developed in this section.  %One should probably describe our approach as ``brute force". However, we believe that a tool that allows us to use brute force is quite desirable.

\begin{figure*}
\centering 
\def\svgwidth{0.67\columnwidth}
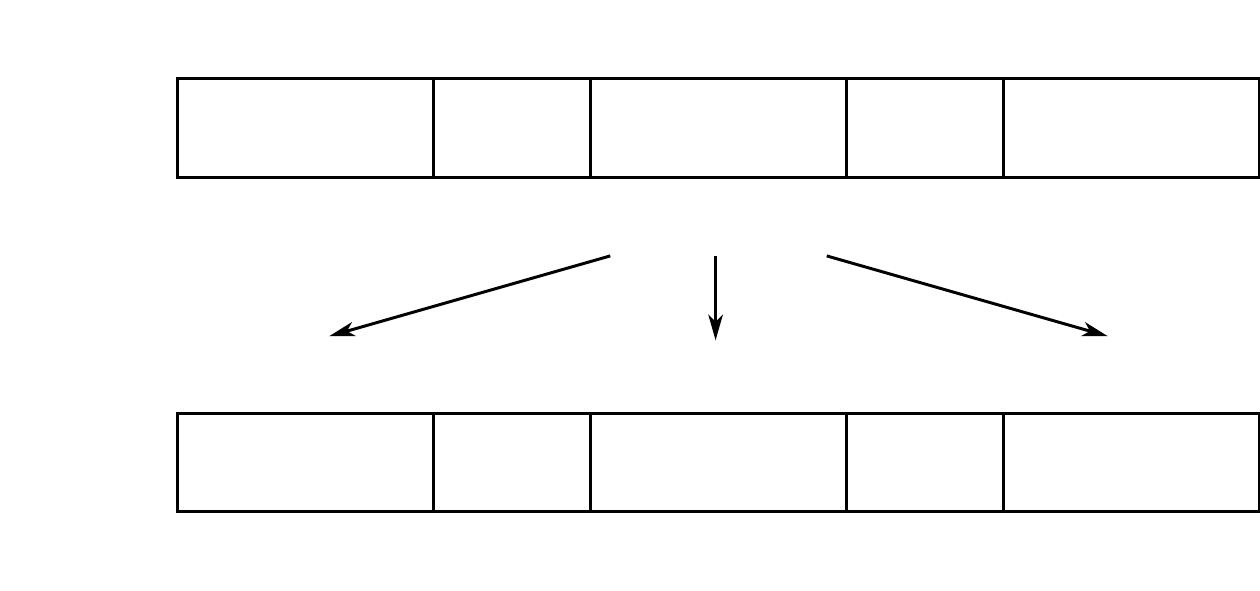
\caption{The empirical distribution of different sub-blocks in each layer after encoding.}
\label{fig:layer_coding}
\end{figure*}

\subsection{Capacity Achieving Layered Codes} \label{sec:CALC}

In this subsection, we discuss one operational meaning of the linear information coupling problem, and try to connect to that of the conventional capacity problem for the point-to-point channels. To do that, we construct a channel code as superposition of many layers of codes, each layer constructed from the solution of of a specific linear coupling problem. This construction is hardly useful in any practical situation, but rather serves as a conceptual tool to connect the two problems.

Let us start from the one-layer problem of this coding scheme. For a point-to-point channel with a transmitter $X$ and a receiver $Y$, the goal of the one-layer problem is to efficiently transmit information through the Markov relation $U_1 \rightarrow X \rightarrow Y$, subject to the constraint $I(U_1 ; X) \leq \frac{1}{2} \epsilon^2$, and the local constraint $\| P_{X|U_1=u_1} - P_X \|^2 = O(\epsilon^2) $. From the analyses of the linear information coupling problem, we know how to find the optimal $P^*_{X|U_1 = u_1}$ and $P^*_{U_1}$ to achieve the solution
\begin{align}\label{eq:R*}
r^*_1 = \mathop{ \max_{U_1 \rightarrow X \rightarrow Y : \  I(U_1;X) \leq \frac{1}{2} \epsilon^2,} }_ {\| P_{X|U_1=u_1} - P_X \|^2 = O(\epsilon^2), \ \forall u_1} I(U_1;Y) %\\ \notag 
%\text{subject to:} 
%\quad \mbox{bits/symbol}.
\end{align}
Now, we propose the following coding scheme to explain the operational meaning of this solution. Suppose that there is a block\footnote{In this paper, all the ``block length" and ``number of sub-blocks" are assumed to be large.} of $n_1 \cdot k_1$ i.i.d. $P_X$ distributed input symbols $ \ux(1), \ldots , \ux(k_1)$ generated at the transmitter, where $\ux(i) \in \cX^{n_1}$ represents a sub-block of $n_1$ input symbols $x_1(i) , \ldots , x_{n_1}(i) $, for $1 \leq i \leq k_1$. Then, we ``encode" a binary codeword $u_1(1),  \ldots , u_1(k_1)$, with empirical distribution $P^*_{U_1}$, into this input symbol block by altering some of the symbols, such that the empirical distribution of each sub-block $\ux(i)$ changes to $P^*_{X|U_1 = u_1(i)}$. Note that the empirical distribution of the entire symbol block remains approximately the same as $P^*_{X}$. The receiver decodes this codeword according to different empirical output distributions of the $k_1$ sub-blocks. From \eqref{eq:R*}, there exists binary block codes $u_1(1),  \ldots , u_1(k_1)$ with rate $R^*_1 = n_1 \cdot r^*_1$ bits/$U_1$ symbol, which can be reliably transmitted and decoded by using the above coding scheme. The empirical distributions of different blocks of input symbols, after this encoding procedure, are illustrated in Figure~\ref{fig:layer_coding}.

Now, we can add another layer to the one-layer problem. Theoretically, this is to consider a new set of linear information coupling problems
\begin{align} \label{eq:R*2}
r_2^*(u_1) =   \mathop{ \max_{(U_1, U_2) \rightarrow X \rightarrow Y :  \  I(U_2;X|U_1 = u_1) \leq \frac{1}{2} \epsilon^2,} }_{ \| P_{X|U_1=u_1,U_2=u_2} - P^*_{X|U_1=u_1} \|^2 = O(\epsilon^2), \ \forall u_2 } I(U_2;Y|U_1 = u_1), 
\end{align}
where the conditional distribution of $X$ given $U_1 = u_1$ is specified as $P^*_{X|U_1 = u_1}$. We can solve \eqref{eq:R*2} with the same procedure as \eqref{eq:R*}, and find the optimal solutions $P^*_{X|U_1 = u_1 , U_2 = u_2}$ and $P^*_{U_2|U_1 = u_1}$. 

Then, we can encode this one more layer of codewords to the original layer with a similar coding scheme. To do this, we further divide each sub-block $\ux(i)$ into $k_2$ small sub-blocks, and each of the small sub-block has $n_2$ symbols, where $n_2 \cdot k_2 = n_1$. Then, for a binary code $u_2(1),  \ldots , u_2(k_2)$ with rate $R_2^*(u_1(i)) = n_2 \cdot r_2^*(u_1(i))$ bits/$U_2$ symbol, where the distribution of the bits in the codewords is $P^*_{U_2|U_1 = u_1(i)}$, we encode the codewords into small sub-blocks of $\ux(i)$ by exactly the same coding scheme as the one-layer problem. The transmission rate of this coding scheme over the entire input symbol block $ \ux(1), \ldots , \ux(k_1)$ is then 
\begin{align*}
\sum_{u_1}  r_2^*(u_1) P^*_{U_1} (u_1) = I(U_2;Y|U_1) \quad \mbox{bits/transmission.}
\end{align*}
After this, the empirical distribution of the $j$-th small sub-block of $\ux(i)$ changes to $P^*_{X|U_1 = u_1(i), U_2 = u_2(j)}$, which is illustrated in Figure~\ref{fig:layer_coding}. On the other hand, the empirical distribution of the entire $\ux(i)$ remains approximately the same as $P^*_{X|U_1 = u_1(i)}$. Thus, the decoding of the codewords $u_1(1),  \ldots , u_1(k_1)$ of the first layer is not effected by adding the second layer, and can be proceeded as in the one-layer problem. The codewords of the second layer are then decoded after the first layer is decoded.

We can keep adding layers by recursively solving new linear information coupling problems, and sequentially applying the above layered coding scheme. Assuming that there is a sequence of messages $U_1,U_2,...,U_K$ that we want to encode. % with Markov relation $U_K \rightarrow U_{K-1} \rightarrow \cdots \rightarrow U_1 \rightarrow X \rightarrow Y$. 
First, we can find a perturbation of the $P_X$ distribution according to the value of $U_1$ by solving the corresponding linear information coupling problem. Then, by solving the new set of information coupling problems conditioned on each value of $U_1 = u_1$, we can find further perturbations of that according to the value of $U_2$, and so on. The corresponding perturbations in the output distribution space is illustrated in Figure~\ref{fig:Layer}. 
% generalize this layered coding scheme to more than two layers. This is achieved by further dividing each sub-block $\ux(i)$ to smaller sub-blocks, and encoding messages $U_3(1), U_3(2), \ldots$, with similar Markov relation as $U_1$ and $U_2(i)$, to these smaller sub-blocks, and so on. In each of the small sub-blocks, this layered coding scheme can be viewed geometrically as follows. Supposing that messages $U_1,U_2,...,U_K$ are encoded in this small sub-block with the Markov relation $U_K \rightarrow U_{K-1} \rightarrow \cdots \rightarrow U_1 \rightarrow X \rightarrow Y$. First, we can find a perturbation of the $P_X$ distribution according to the value of $U_1$ by solving the corresponding linear information coupling problem. Then, with solving the new set of information coupling problems conditioned on each value of $U_1 = u_1$, we can find further perturbations of that according to the value of $U_2$, and so on, as illustrated in Figure \ref{fig:Layer}. 

\begin{figure}
\centering 
\includegraphics{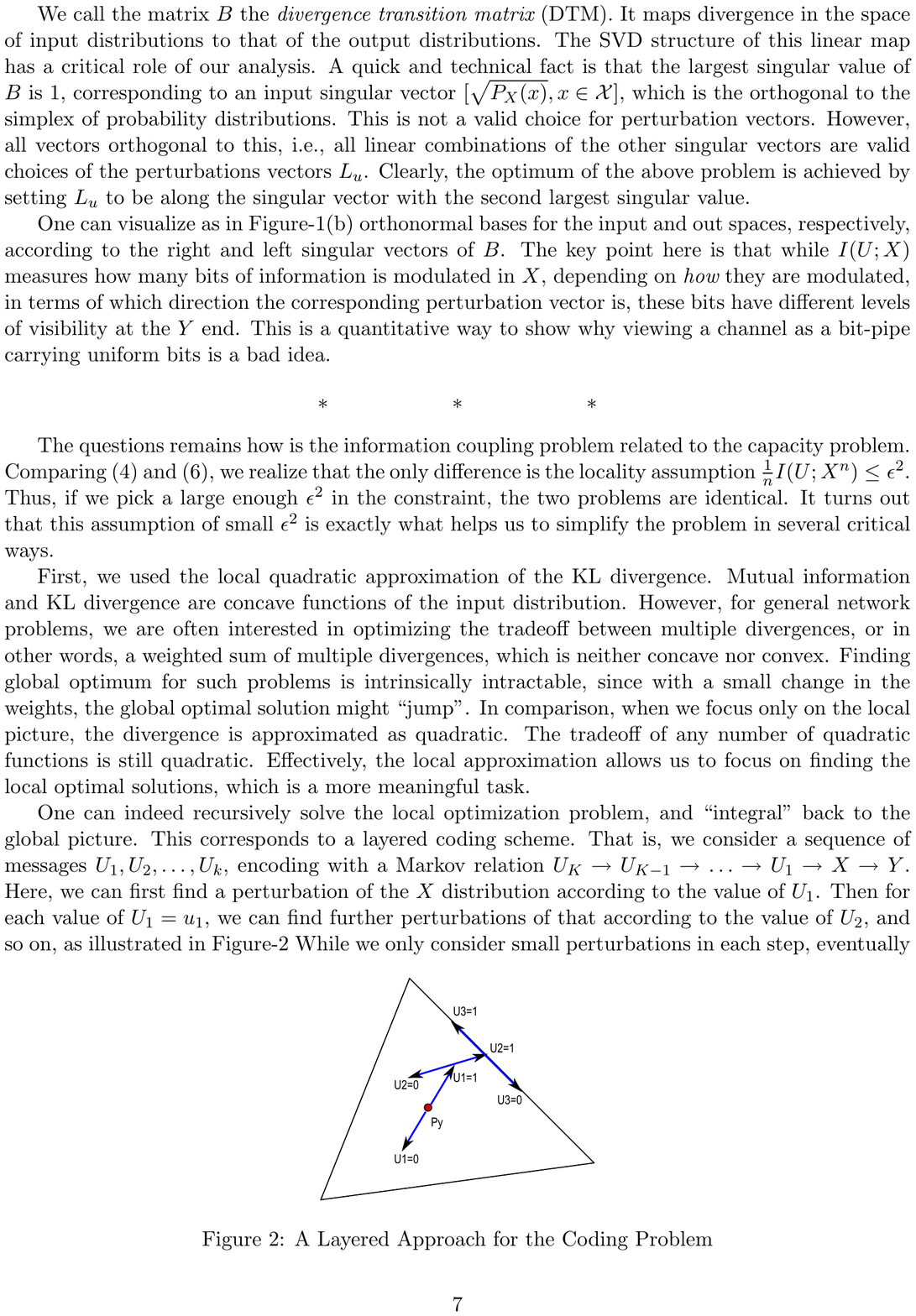}
\caption{A layered approach for the coding problem.}
\label{fig:Layer}
\end{figure}

Note that for each layer, say, layer $k$, while the channel matrix $W$ remains the same, as we perturb only a sub-block of the symbols, the operating point is the empirical distribution of the sub-block $P_{X|U_1^{k-1}=u_1^{k-1}}$, which differs from the original $P_X$. Thus the resulting $B$ matrix is also different. By this construction, we demonstrate that a channel code can be constructed through a sequence of layers, and thus can be viewed as an ``integral" of a sequence of local perturbation problems. There are however several important differences between the two problems. 

First, the most obvious issue is that since we divide the sub-blocks further with each layer of information, it appears that the sub-blocks gets very short as the number of layers increases. In fact, we can group all the sub-blocks with the same empirical distribution together before further division. For example, all the sub-blocks in Figure \ref{fig:Layer} with length $n_1$ with the corresponding $u_1(i)=0$ can be grouped together before further dividing. In more general cases, some sub-blocks from different branches might end up having the same or approximately the same empirical distributions, and thus grouped together. Thus, the total number of sub-blocks at each layer is limited by the granularity of empirical distributions we choose to group sub-blocks, and does not have to increase exponentially with the number of layers.

%First, observing that the divergence transition map $B$ depends on the neighborhood, in both the input and the output distribution spaces, through the weighting matrices $\left[\sqrt{P_X}\right]$ and $\left[\sqrt{P_Y}^{-1}\right]$. As we sequentially add layers, the neighborhood, and hence the weights, change, from $P_X$, to $P_{X|U_1 = u_1}$, to $P_{X|U_1 = u_1, U_2 = u_2, \ldots}$. Thus, the sequence of local problems we need to solve changes gradually, and the underlying manifold structure of the space of probability distributions has to be taken into consideration.

The second issue, as shown in Figure \ref{fig:Layer}, the valid choices of distributions on the channel output must be in a convex region. For a given channel matrix $W$, whose column vectors are the conditional distributions of the output $Y$, conditioned on different values of the channel input $X = x$, the output distributions must belong to the convex region specified by these column vectors. As we add more layers, at some point the boundary of this convex region is reached. From which point, further layering is restricted to be along the hypersurface of this convex region. Conceptually, there is not much difference, since moving the output distributions on the hypersurface corresponds to not use a subset of the input alphabet. Hence, a local problem can in principle be written out with a reduced input alphabet. This can indeed be done in some special cases \cite{SL08}. However, for general problems, especially multi-letter problems, specifying this high dimensional convex region and all its boundary constraints seems to be a reason that forbids general analytical solutions.

Finally, the most significant difference between the two problems is that although a channel code is constructed as superposition of many layers of codes, optimizing the coupling efficiency at each individual layer, i.e., using the solution we specified in \eqref{eq:sdpi}, for  each layer, does not necessarily yield the optimality of the overall code. This is because at each layer we not only convey the corresponding layer of information through the channel, but also the resulting empirical distributions of sub-blocks become the operating points for the future layers. Thus, the construction of overall channel code can be viewed as a dynamic programming, where each layer not only needs to carry as much information as possible, but also needs to set up favorable operating points for the future layers. Our solution based on SVD analysis can thus be viewed as a greedy solution to the channel coding problem. 

For some special cases, especially if the channel considered is very noisy, we can indeed use the above approach to design a capacity achieving channel code. The following is one of such examples.

%while we only consider small perturbations in each step, eventually the perturbations reach the vertices of the simplex of valid distributions with probability $1$. This implies that
%\begin{align*}
%&\lim_{K \rightarrow \infty} I(U_1, \ldots , U_K ; X)\\ 
%&= H(X) - \lim_{K \rightarrow \infty} H(X | U_1, \ldots , U_K ) = H(X),
%\end{align*}
%where $H(\cdot)$ is the entropy function. 
%From \eqref{eq:xxx} and using the Markov relation $(U_1, \ldots , U_K) \rightarrow X \rightarrow Y$, we have
%\begin{align*}%\label{eq:xx}
%\sum_{i = 1}^{\infty} I(U_i ; Y | U_1, \ldots , U_{i-1}) &= \lim_{K \rightarrow \infty} I(U_1, \ldots , U_K ; Y) \\ 
%&= \lim_{K \rightarrow \infty} \left( I(U_1, \ldots , U_K , X ; Y) - I( X ; Y | U_1, \ldots , U_K) \right)\\ 
%&= I(X;Y) - \lim_{K \rightarrow \infty} I( X ; Y | U_1, \ldots , U_K) = I(X;Y).
%\end{align*}
% Thus, if we keep encoding layers in each step by recursively solving the local optimization problems, we can integral the local solutions in each step back to a global solution. This in fact provides a greedy solution to the global problem, and potentially demonstrates how the capacity can be obtained.
%instead of the global optimum.

\begin{example} %[Ternary very noisy channel cont.]
We continue to use the example \ref{example1}, but make the channel very noisy by setting both the parameters $\gamma$ and $\eta$ to be close to $0$. We assume $\gamma/\eta$ remains constant, and consider only the case that $\gamma < \eta$. We apply the layered coding scheme to construct a capacity achieving code. 

First, ignoring the higher order terms, the channel capacity of the ternary channel in Figure~\ref{fig:example1} is $2 \eta^2 + \left( \frac{1}{2} + \eta \right) \gamma^2$, with the optimal input distribution $P_X = [\frac{1}{2} \ \frac{1}{4} \ \frac{1}{4}]^T$.  From example~\ref{example1}, we know that when $\gamma > \eta$, 
the optimal perturbation vector is $[ \frac{1}{2} \ \frac{-1}{4} \ \frac{-1}{4} ]^T$, and the corresponding conditional distributions are $P_{X|U_1 = 0} = \left[ \frac{1}{2} + \frac{1}{2} \epsilon \ \frac{1}{4} - \frac{1}{4} \epsilon \ \frac{1}{4} - \frac{1}{4} \epsilon \right]^T$ and $P_{X|U_1 = 1} = \left[ \frac{1}{2} - \frac{1}{2} \epsilon \ \frac{1}{4} + \frac{1}{4} \epsilon \ \frac{1}{4} + \frac{1}{4} \epsilon \right]^T$. To apply the layered coding scheme, we keep increasing the perturbation vector until the boundary is reached, i.e., increasing\footnote{Since we assume both $\eta$ and $\gamma$ are small, the local approximation of all divergence and mutual information of interests remains valid even if $\epsilon$ is not small. This is why we can increase $\epsilon$ here from a small number to $1$ without violating the local approximation.} $\epsilon$ to $1$. Then, the conditional distribution $P_{X|U_1 = 0}$ reaches the vertex $[1 \ 0 \ 0]^T$, and $P_{X|U_1 = 1}$ reaches the boundary at $[0 \ \frac{1}{2} \ \frac{1}{2}]^T$. This is shown in Figure~\ref{fig:example1_2}. The achievable information rate by the first layer of perturbation is $I(U_1;Y) = \frac{1}{2} \epsilon^2 (2\eta)^2 = 2 \eta^2$. 

To achieve this rate, we divide the $n$ bit codeword into $k_1$ sub-blocks, each with $n_1$ bits, and use a binary code of length $k_1$, with rate $R_1= (2\eta^2) \cdot n_1$. We choose $n_1$ appropriately to make sure $R_1<1$. The total number of information bits encoded is $k_1R_1 =n\cdot  2\eta^2$. The $k_1$ coded bits are assigned to each sub-block. The sub-blocks corresponding to a coded bit of $0$ are filled with channel symbol '1's. The rest of sub-blocks should be filled with half '2's and half '3's. We group these sub-blocks together, of total length $n/2$, for the second layer of information. To do that, we further divide these $n/2$ symbols into $k_2$ sub-blocks, each with $n_2$ symbols.

We perturb the conditional distribution $P_{X|U_1 = 1} = [0 \ \frac{1}{2} \ \frac{1}{2}]^T$. Note that this distribution has already reached the boundary of the simplex, and we cannot further reduce the probability of '0'. Thus, the perturbation is along this boundary. This corresponds to a linear information coupling problem with reduced input alphabet of just $\{2,3\}$.  Therefore, the DTM of this problem has reduced dimension, and can be explicitly computed as
\begin{align*}
B_2 &= \left[
      \begin{array}{ccc}
        \sqrt{\frac{1}{4} - \frac{1}{2} \eta} & \sqrt{\frac{1}{4} - \frac{1}{2} \eta} \\
        \sqrt{\frac{1}{2} + \eta} \left( \frac{1}{2} + \gamma \right) & \sqrt{\frac{1}{2} + \eta} \left( \frac{1}{2} - \gamma \right) \\
        \sqrt{\frac{1}{2} + \eta} \left( \frac{1}{2} - \gamma \right) & \sqrt{\frac{1}{2} + \eta} \left( \frac{1}{2} + \gamma \right) \\
      \end{array}
    \right].
\end{align*}
The second largest singular value of this DTM is $\sigma_1 = \sqrt{2 + 4\eta} \cdot \gamma$, and the corresponding singular vector is $\uv_1 = [ 0 \ \frac{1}{\sqrt{2}} \ \frac{-1}{\sqrt{2}} ]^T$. Observe that this new singular value is smaller than that for the first layer, indicating less efficient coupling of information. The optimal perturbation vector is $[ 0 \ \frac{1}{2} \ \frac{-1}{2} ]^T$, and the conditional distributions are $P_{X|U_1 = 1, U_2 = 0} = \left[ 0 \ \frac{1}{2} + \frac{1}{2} \epsilon \ \frac{1}{2} - \frac{1}{2} \epsilon \right]^T$ and $P_{X|U_1 = 1, U_2 = 1} = \left[ 0 \ \frac{1}{2} - \frac{1}{2} \epsilon \ \frac{1}{2} + \frac{1}{2} \epsilon \right]^T$. We choose $\epsilon=1$, so the perturbed distributions reach the two vertices $[0 \ 1 \ 0]^T$ and $[0 \ 0 \ 1]^T$, as shown in Figure~\ref{fig:example1_2}.

The achievable information rate by the second layer of perturbation is 
\begin{align*}
I(U_2;Y|U_1) = I(U_2;Y|U_1 = 1) \cdot P(U_1 = 1) = \frac{1}{2} \epsilon^2 (\sqrt{2 + 4\eta} \cdot \gamma)^2 \cdot \frac{1}{2}  = \left( \frac{1}{2} + \eta \right) \gamma^2.
\end{align*}

In terms of the code construction, this second layer of information is conveyed by using a binary code of length $k_2$, with rate $R_2 = 2 \cdot n_2 \cdot \left(\frac{1}{2} + \eta \right) \gamma^2$. The total number of information bits carried is thus $k_2 \cdot R_2 = n \cdot \left(\frac{1}{2} + \eta \right) \gamma^2$.  These $k_2$ coded bits are assigned to the corresponding sub-blocks. Those assigned with a $0$ are filled with transmitted symbol '2's, and the others with '3's. 

After these two layers of perturbations, all the conditional distributions reach the vertices, and the total achievable information rate is $2 \eta^2 + \left( \frac{1}{2} + \eta \right) \gamma^2$, which achieves the channel capacity of this ternary channel. 

Note that resulting code repeats symbol '1' for $n_1$ times, symbols '2' and '3' for $n_2$ times. Such repetition is indeed expected for codes that achieve the capacity for very noisy channels. 

%Interestingly, from these two perturbation directions, we can see that these two layers of coding correspond to transmitting messages through the two communication modes describes in example~\ref{example1}

\begin{figure}
\centering
\def\svgwidth{0.35\columnwidth}
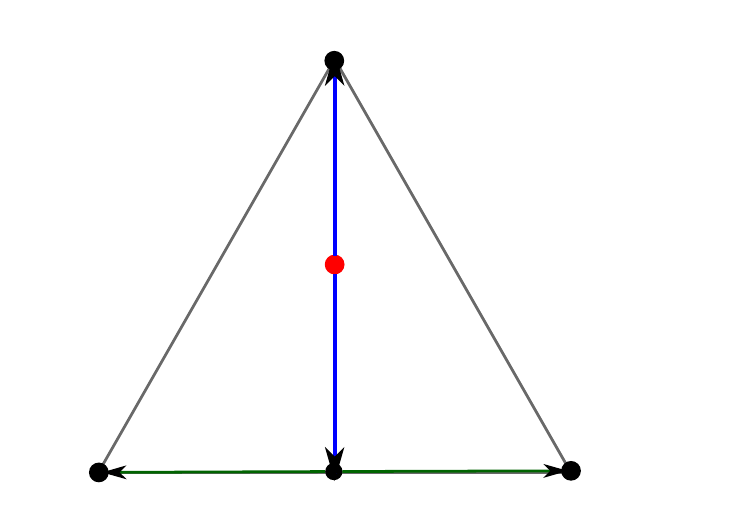
\caption{In the first layer, the input distribution $P_X$ is perturbed to $P_{X|U_1 = 0}$ and $P_{X|U_1 = 1}$, where $P_{X|U_1 = 0}$ reaches one of the vertices, and $P_{X|U_1 = 1}$ reaches the boundary. In the second layer, the distribution $P_{X|U_1 = 1}$ is further perturbed to $P_{X|U_1 = 1, U_2 = 0}$ and $P_{X|U_1 = 1, U_2 = 1}$, where the rest two vertices are reached.}
\label{fig:example1_2}
\end{figure}

\end{example}

%Combining all these observations, we can see that the capacity problems for general networks require global optimization and has some combinatoric boundary constraints. The locality assumption helps us to avoid these issues, and focus only on understanding the nature of the statistical dependency between signals at different ends of a channel, or more generally at different points of networks, and find out how we can use such dependency to convey information. Solving the single-layer local problem can give us more insights to understand network behaviors. On the other hand, if we wish to improve over the local results and seek for the solutions to the standard capacity problems, it appears necessary to resort to the recursive method described above, although sometimes numerical aids is required.

\subsection{The relation to R$\acute{\mbox{e}}$nyi maximal correlation} \label{sec:HGR}

In this subsection, we show that the second largest singular value of the DTM is precisely the same as the R$\acute{\mbox{e}}$nyi maximal correlation between random variables $X$ and $Y$, where the marginal distributions $P_X$ and $P_Y$ are the given input and output distributions in the linear information coupling problem \eqref{eq:LICP_single-letter}, and the transition probability kernel $P_{Y|X}$ is the channel $W$. Let us begin with the following definition.

\begin{definition}{\cite{A59-2}}
The \emph{R$\acute{\mbox{e}}$nyi maximal correlation} $\rho_m(X,Y)$ between two random variables $X$ and $Y$ is defined by 
\begin{align}\label{eq:HGR}
\rho_m(X,Y) = \sup E \left[ f(X) g(Y) \right],
\end{align}
where the supremum is over all Borel-measurable functions $f$ and $g$ such that
\begin{align}\label{eq:HGR_con}
\begin{array}{ll}
&E \left[ f(X) \right] = E \left[ g(Y) \right] = 0, \\ 
&E \left[ f^2(X) \right] = E \left[ g^2(Y) \right] = 1.
\end{array}
\end{align}
\end{definition}

The R$\acute{\mbox{e}}$nyi maximal correlation is a measure of dependence of random variables that is stronger and more general than the correlation coefficient, since it allows arbitrary zero-mean, unit-variance functions of $X$ and $Y$. The R$\acute{\mbox{e}}$nyi maximal correlation is first introduced by Hirschfeld \cite{H35} and Gebelein \cite{H41} for discrete random variables and absolutely continuous random variables. R$\acute{\mbox{e}}$nyi \cite{A59,A59-2} compared the R$\acute{\mbox{e}}$nyi maximal correlation to other measures of dependence, and provided sufficient conditions for which the supremum of \eqref{eq:HGR} is achieved. In particular, for discrete random variables $X$ and $Y$, the sufficient conditions are met, and R$\acute{\mbox{e}}$nyi maximal correlation can be attained. Moreover, R$\acute{\mbox{e}}$nyi showed that if the function pair $(\hat{f},\hat{g})$ achieves \eqref{eq:HGR}, then 
\begin{align}\label{eq:Renyi}
\begin{array}{ll}
&E \left[ \hat{g} (Y)  | X \right] = \rho_m(X,Y) \hat{f} (X), \\
&E \left[ \hat{f} (X)  | Y \right] = \rho_m(X,Y) \hat{g} (Y).
\end{array}
\end{align}

Now to see the connection between DTM and the above results, we write 
${\cal F}_X$ and ${\cal G}_Y$ as the spaces of real-valued functions on ${\cal X}$ and ${\cal Y}$, resp; and consider the {\em conditional expectation operator} $E \left[ \cdot \right | X ]: {\cal G}_Y \mapsto {\cal F}_X$ as a map that takes a function of $y$ to a function of $x$. We use $\langle \cdot , \cdot \rangle_{P_X}$ and $\langle \cdot , \cdot \rangle_{P_Y}$  as inner products on ${\cal F}_X$ and ${\cal G}_Y$. This is convenient, as for example \eqref{eq:HGR_con} can be written as 
\begin{align*}
E[f(X)] &= \langle f, {\bold 1}_X\rangle_{P_X} =0,\\
E[g(Y)] &= \langle g, {\bold 1}_Y\rangle_{P_Y} =0,\\
E[f^2(X)] & =\langle f, f\rangle_{P_X} = \norm{f}^2 =1, \\
E[g^2(Y)] &= \langle g, g\rangle_{P_Y} = \norm{g}^2 =1,
\end{align*}
which are simple orthogonality and norm constraints. In the above, we used the notation ${\bold 1}_X$ and ${\bold 1}_Y$ as the constant $1$ functions. 

We also define functions

\begin{align} \label{eq:ONB}
\begin{array}{ll}
&\left\{ \phi_{x'}(\cdot) = \sqrt{P_X(\cdot)}^{-1} \cdot \delta_{x'}(\cdot) , \ x' \in {\cal X} \right\} \\  %\mbox{and} \ 
&\left\{ \varphi_{y'}(\cdot) = \sqrt{P_Y(\cdot)}^{-1} \cdot \delta_{y'}(\cdot) , \ y' \in {\cal Y} \right\},
\end{array}
\end{align}
where
\[
\delta_{x'} (x) = \left\{
\begin{array}{ll}
1, \ \mbox{if} \  x= x' \\
0, \ \mbox{otherwise}
\end{array}
\right. ,\
\delta_{y' }(y) = \left\{
\begin{array}{ll}
1, \ \mbox{if} \ y = y' \\
0, \ \mbox{otherwise}
\end{array}
\right. ,
\]

It can be verified that these two groups of functions in \eqref{eq:ONB} all have unit norm, and are orthogonal within each group. Thus, they form orthonormal bases of ${\cal F}_X$ and ${\cal G}_Y$ respectively. We now can write the conditional expectation operator $E[\cdot |X]$ in matrix form, with respect to these bases. To do that, consider the $(x',y')$ entry
\begin{align*}
& \langle \phi_{x'} , E \left[ \varphi_{y'} | X = x \right] \rangle_{P_X} \\
=& \sum_{x \in \cX} P_X(x) \cdot \left(  \sqrt{P_X(x)}^{-1} \delta_{x'}(x) \right)\cdot E \left[ \varphi_{y'} | X = x \right]\\
=& \sqrt{P_X(x')}\cdot \left( \sum_{y \in \cY} \sqrt{P_Y(y)}^{-1} \delta_{y'}(y)P_{Y|X} \left( y | x' \right) \right) \\
=& \sqrt{P_Y(y')}^{-1} P_{Y|X} \left( y' | x' \right) \sqrt{P_X(x')}.
\end{align*}
Therefore, this matrix is precise the DTM as we defined. Repeating the same derivation for the operator $E[\cdot |Y]$ reveals that the two operators $E[\cdot |X]$ and $E[\cdot|Y]$ are indeed conjugates of each other. Furthermore, there is a one-to-one correspondence between the singular vectors of the DTM and the singular functions of the conditioned expectation operator. In particular, note that the all-$1$ function $\bold{1}_X$ and $\bold{1}_Y$ has the nice properties
$E \left[ \bold{1}_Y | X \right] = \bold{1}$, and $E\left[\bold{1}_X|Y\right] = \bold{1}_Y$. Thus, $\bold{1}_X$ and $\bold{1}_Y$ are a pair of input and output singular functions of the conditional expectation operator, with singular vector 1. This corresponds to the first singular vector $\left[ \sqrt{P_X(x)}, x \in \cX \right]^T$ of the DTM. In addition, note that the zero-mean constraint~\eqref{eq:HGR_con} on functions in ${\cal F}_X$ is equivalent to the orthogonality to $\bold{1}_X$. Therefore, the rest singular functions of the conditioned expectation operator satisfy~\eqref{eq:HGR_con}, and have a one-to-one correspondence to the singular vectors of the DTM other than the first one.

%the functions in ${\cal F}_X$ satisfying the zero-mean constraint~\eqref{eq:HGR_con} are corresponded to the linear combination of singular vectors of the DTM other than the first one.

%If we restrict to functions with zero-mean, this is equivalent to finding functions that are orthogonal to , which corresponds to the vectors that are linear combination of singular vectors of the DTM other than the first one.

%The right singular vector $\left[ \sqrt{P_X(x)}, x \in \cX \right]^T$ of the DTM then corresponds to the left singular function
%\begin{align*}
%f_0 (x) = \sum_{x' \in \cX} \sqrt{P_X(x')} \cdot \sqrt{P_X(x')}^{-1} \delta_{x'} (x) = 1, \ \forall \ x \in \cX,
%\end{align*}
%of the operator $E \left[ \cdot \right | X ]$. This function does not satisfy the constraint \eqref{eq:HGR_con}; however, every unit-norm function orthogonal to $f_0$ satisfies \eqref{eq:HGR_con}.

From \eqref{eq:Renyi}, we know that the pair of function $(\hat{f}, \hat{y})$ that achieves the maximal correlation must be a pair of input/output singular vectors of the conditional expectation operator. The fact that such correlation $|\rho_m(X,Y)|$ is no larger than $1$ quantifies the contraction behavior of the operators, and is a manifestation of the data processing inequality. We summarize in the following proposition.

%we know that $\hat{f}$ has to be the left singular function of the operator $E \left[ \cdot \right | X ]$ with respect to the largest singular value $\rho_m(X,Y)$, subject to the constraint that $\hat{f}$ is orthogonal to $\bold{1}_X$. Therefore, we have the following proposition. %Comparing to the DTM, we can see that $\hat{f}$ is the left singular function of $E \left[ \cdot \right | X ]$, corresponding to the right singular vector of the DTM with the second largest singular value. This immediately implies the following.

\begin{prop}\label{prop:DTM_max}
The second largest singular value of the DTM is the R$\acute{\mbox{e}}$nyi maximal correlation $\rho_m(X,Y)$. Moreover, the functions $\hat{f}$ and $\hat{g}$ maximizing \eqref{eq:HGR} can be obtained from the left and right singular vectors of the DTM with the second largest singular value, corresponding to the orthonormal bases \eqref{eq:ONB}.
\end{prop}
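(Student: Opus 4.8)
The plan is to recast the definition \eqref{eq:HGR}--\eqref{eq:HGR_con} of $\rho_m(X,Y)$ as an operator-norm problem for the conditional expectation operator $E[\,\cdot\mid X]\colon{\cal G}_Y\to{\cal F}_X$, whose matrix in the orthonormal bases \eqref{eq:ONB} was just shown to be the DTM $B$ of \eqref{eqn:B}, and then to read off the value and the optimizers from the singular value decomposition of $B$. The starting point is the identity, valid for all $f\in{\cal F}_X$ and $g\in{\cal G}_Y$ by the tower property,
\begin{align*}
E\!\left[f(X)\,g(Y)\right]=\sum_{x}P_X(x)\,f(x)\,E\!\left[g(Y)\mid X=x\right]=\big\langle f,\,E[g\mid X]\big\rangle_{P_X}.
\end{align*}
Combined with the reformulation of the moment constraints \eqref{eq:HGR_con} recorded just before \eqref{eq:ONB}, this identifies $\rho_m(X,Y)$ with the supremum of $\big\langle f,E[g\mid X]\big\rangle_{P_X}$ over unit vectors $f\perp\mathbf{1}_X$ in ${\cal F}_X$ and $g\perp\mathbf{1}_Y$ in ${\cal G}_Y$.

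The next step is to show this supremum equals $\sigma_1$, the second largest singular value of $B$. I would use three facts, all immediate from computations already in the excerpt. (i) $\mathbf{1}_X$ and $\mathbf{1}_Y$ form a singular pair of $E[\,\cdot\mid X]$ with singular value $1$: from $E[\mathbf{1}_Y\mid X]=\mathbf{1}_X$ and $E[\mathbf{1}_X\mid Y]=\mathbf{1}_Y$, the vector $\mathbf{1}_Y$ is a unit eigenvector of the positive self-adjoint operator $E[\,\cdot\mid Y]\circ E[\,\cdot\mid X]$ on ${\cal G}_Y$ with eigenvalue $1$, and $1$ is its largest eigenvalue by the earlier lemma ($\sigma_0=1$, i.e.\ the data processing inequality). (ii) $E[\,\cdot\mid X]$ sends $\mathbf{1}_Y^{\perp}$ into $\mathbf{1}_X^{\perp}$, since $\big\langle E[g\mid X],\mathbf{1}_X\big\rangle_{P_X}=E[g]=\big\langle g,\mathbf{1}_Y\big\rangle_{P_Y}$. (iii) $E[\,\cdot\mid Y]$ is the adjoint of $E[\,\cdot\mid X]$ for $\langle\cdot,\cdot\rangle_{P_X}$ and $\langle\cdot,\cdot\rangle_{P_Y}$ (verified in the text), hence it sends $\mathbf{1}_X^{\perp}$ into $\mathbf{1}_Y^{\perp}$. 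By (i)--(iii) the orthogonal complements $\mathbf{1}_X^{\perp}$ and $\mathbf{1}_Y^{\perp}$ are the invariant subspaces obtained by deleting a top singular pair, so the restriction $E[\,\cdot\mid X]\big|_{\mathbf{1}_Y^{\perp}}\colon\mathbf{1}_Y^{\perp}\to\mathbf{1}_X^{\perp}$ has singular values exactly $\sigma_1\ge\sigma_2\ge\cdots$; its operator norm, which by (ii) is precisely the supremum computed in the previous paragraph, equals $\sigma_1$. Passing through the coordinate map $f\mapsto\big[\sqrt{P_X(x)}\,f(x)\big]_{x\in\cX}$ attached to the basis \eqref{eq:ONB}, this $\sigma_1$ is the second largest singular value of the DTM.

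For the optimizers, I would argue in two stages. For a fixed admissible $g$, Cauchy--Schwarz gives $\sup_{\norm f=1,\,f\perp\mathbf{1}_X}\big\langle f,E[g\mid X]\big\rangle_{P_X}=\norm{E[g\mid X]}$, attained at $f=E[g\mid X]/\norm{E[g\mid X]}$, which by (ii) already lies in $\mathbf{1}_X^{\perp}$. Maximizing $\norm{E[g\mid X]}$ over unit $g\perp\mathbf{1}_Y$ is the restricted-operator-norm problem just solved; by the Rayleigh-quotient characterization its maximizers are exactly the top right singular functions $\hat g$ of $E[\,\cdot\mid X]\big|_{\mathbf{1}_Y^{\perp}}$, with value $\sigma_1$, and then $\hat f=E[\hat g\mid X]/\sigma_1$ is the matching left singular function. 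The relations $E[\hat g\mid X]=\sigma_1\hat f$ and $E[\hat f\mid Y]=\sigma_1\hat g$ are precisely R\'enyi's \eqref{eq:Renyi} with $\rho_m(X,Y)=\sigma_1$, and reading $\hat f,\hat g$ in the bases \eqref{eq:ONB} identifies them with the left and right singular vectors of $B$ at singular value $\sigma_1$.

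The main obstacle, though essentially bookkeeping, is justifying in (ii)--(iii) above that deleting the top singular pair $(\mathbf{1}_X,\mathbf{1}_Y)$ leaves an operator whose singular values are the remaining $\sigma_i$; this is the standard invariant-subspace argument for an operator and its adjoint and uses only adjointness together with the invariance of the two orthogonal complements. A point worth flagging is that it does not require the top singular value $1$ to be simple, so the argument remains valid in the degenerate case $\rho_m(X,Y)=1$. Everything else---finite dimensionality (so every supremum is attained), the tower-property identity, and the matrix form of the conditional expectation operator---is routine and largely already in the excerpt.
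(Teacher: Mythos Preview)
Your proof is correct and shares the paper's core setup: identify $E[\,\cdot\mid X]$ as a linear map $({\cal G}_Y,\langle\cdot,\cdot\rangle_{P_Y})\to({\cal F}_X,\langle\cdot,\cdot\rangle_{P_X})$ whose matrix in the bases \eqref{eq:ONB} is the DTM $B$, and observe that the constraints \eqref{eq:HGR_con} confine $f,g$ to the orthogonal complements of the top singular pair $(\mathbf 1_X,\mathbf 1_Y)$.

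The one difference worth noting is how the identification $\rho_m=\sigma_1$ is completed. The paper quotes R\'enyi's equations \eqref{eq:Renyi} as a known external result, reads them as saying that any maximizing pair $(\hat f,\hat g)$ is already a singular pair of the conditional-expectation operator, and then concludes from the zero-mean constraint that $\rho_m$ must be the second singular value. You instead derive everything internally: the tower-property identity $E[f(X)g(Y)]=\langle f,E[g\mid X]\rangle_{P_X}$, together with Cauchy--Schwarz and the invariant-subspace argument (ii)--(iii), gives $\rho_m=\sigma_1$ directly and recovers \eqref{eq:Renyi} as a consequence rather than a hypothesis. Your route is more self-contained and, as you point out, handles the degenerate case $\sigma_1=1$ uniformly; the paper's version is shorter because it offloads the variational step to R\'enyi.
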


The relation between \eqref{eq:LICP_single-letter} and the R$\acute{\mbox{e}}$nyi maximal correlation $\rho_m(X,Y)$ was also shown in~\cite{ET98} and~\cite{Witsenhausen75} by different approaches. The value of this connection, to our purpose, is that it provides yet another view of the local approximation approach. More interestingly, 
if we recall that R$\acute{\mbox{e}}$nyi's original work on maximal correlation was to characterize the dependence between two random variables; now the spectrum analysis of the DTM, or the conditional expectation operator, provides us with a broader range of quantities that might be of interests. For example, now we can ask not only about the maximal correlation, but also ``second" maximal correlation, which corresponds to the largest two singular values of the DTM. In the following sections, we will generalize the concept to multi-terminal cases, which potentially leads to concepts such as the maximal correlation between more than two random variables. As we will show, this generalization is indeed quite simple if we keep the local approximations.  

%With our local approximation approach, we can see that Proposition \ref{prop:DTM_max} can be obtained with simple linear algebra. Moreover, the geometric insight of the R$\acute{\mbox{e}}$nyi maximal correlation is also provided through our analyzation of the divergence transition matrix. In section \ref{sec:broadcast} and \ref{sec:multiple}, we will apply our local approach to multi-terminal communication systems, and the relation between the local geometry and the R$\acute{\mbox{e}}$nyi maximal correlation suggests the potential of generalizing the R$\acute{\mbox{e}}$nyi maximal correlation to the case of more than two random variables.

%In other words, $f_0$ is a singular function of the operator $E \left[ \cdot \right | Y ]$ with the corresponding singular value $\rho_m(X,Y)$.

\section{The General Broadcast Channels} \label{sec:broadcast}

In this section, we apply the local approximation approach to general broadcast channels, and study the corresponding linear information coupling problems. We first illustrate our technique by considering the $2$-user broadcast channel, and then the extension to the $K$-user case is straight forward. %Finally, with the local geometry structure derived in this section, we will discuss the local optimality of the Marton's coding scheme on general broadcast channels.
%\subsection{The Linear Information Coupling Problem for General Broadcast Channel}

A $2$-user general broadcast channel with input $X \in \cX$, and outputs $Y_1 \in \cY_1$, $Y_2 \in \cY_2$, is specified by the memoryless channel matrices $W_1$ and $W_2$. These channel matrices specify the conditional distributions of the output signals at two users, $1$ and $2$, as $W_i ( y_i | x ) = P_{Y_i|X} ( y_i | x )$, for $i = 1,2$. Let $M_1$, $M_2$, and $M_0$ be the two private messages and the common message, with rate $R_1$, $R_2$, and $R_0$, respectively. Then, using Fano's inequality, the multi-letter capacity region of the general broadcast channel is the set of rate tuple $(R_0, R_1, R_2)$ such that 
\begin{equation} \label{eq:private1}
\left\{
\begin{array}{clr}
R_0 &\leq \frac{1}{n} \min \{ I(U ; \uY_1) , I(U ; \uY_2) \}, \\ 
R_1 &\leq \frac{1}{n} I(V_1 ; \uY_1), \\
R_2 &\leq \frac{1}{n} I(V_2 ; \uY_2),
\end{array} \right.
\end{equation}
for some mutually independent random variables $U$, $V_1$, and $V_2$, such that $( U , V_1 , V_2 ) \rightarrow \uX \rightarrow \uY_1$ and $( U , V_1 , V_2 ) \rightarrow \uX \rightarrow \uY_2$, are both Markov chains. 
The signal vectors here all have the same dimension $n$. In principle, one should just optimize this rate region by finding the optimal coding distributions. However, since $n$ can potentially be arbitrarily large, finding the structure of these optimal input distributions is necessary.

Now, we want to apply the local approximation technique we developed in section \ref{sec:P2P} to this broadcast channel problem. As a natural generalization from the point-to-point channel case (\ref{eq:LICP_multi-letter}), the linear information coupling problem of this $2$-user broadcast channel is the characterization of the rate region:
\begin{align}\label{LICP-BC1}
&\left\{
\begin{array}{ll}
R_0 \leq \frac{1}{n} \min \{ I(U ; \uY_1) , I(U ; \uY_2) \} \\
R_1 \leq \frac{1}{n} I(V_1 ; \uY_1) \\
R_2 \leq \frac{1}{n} I(V_2 ; \uY_2) 
\end{array}\right. \\ \notag
\mbox{subject to:} \quad &
( U , V_1 , V_2 ) \rightarrow \uX \rightarrow (\uY_1, \uY_2), \\ \notag 
&\frac{1}{n} I( U , V_1 , V_2 ; \uX ) \leq \frac{1}{2} \epsilon^2, \\ \notag
& \frac{1}{n} \| P_{\uX | (U,V_1,V_2) = (u,v_1,v_2)} - P_{\uX} \|^2 = O(\epsilon^2), \\ \notag
& \forall \ (u,v_1,v_2) \in {\cal U} \times {\cal V}_1 \times {\cal V}_2,
\end{align}
where $U , V_1 , V_2$ are mutually independent random variables.

%\begin{align}
%\label{LICP-BC1}
%\vast\{ (R_0, R_1, R_2) : 
% U \rightarrow \uX \rightarrow (\uY_1, \uY_2), \ \frac{1}{n} I( U ; \uX ) \leq \frac{1}{2} \epsilon^2, \\ \notag
%\begin{array}{ll}
%R_0 &\leq \frac{1}{n} \min \{ I(U_0 ; \uY_1) , I(U_0 ; \uY_2) \} \\
%R_1 &\leq \frac{1}{n} I(U_1 ; \uY_1) \\
%R_2 &\leq \frac{1}{n} I(U_2 ; \uY_2)
%\end{array}
%\vast\}.
%\end{align}
This rate region is the same as the capacity region (\ref{eq:private1}) except for the local constraints. The operational meaning of these constraints are similar to that for the point-to-point case. That is, we consider modulating all the common and private messages entirely as a thin layer of information into the input symbol sequence $\uX$. 

%Then, the goal of the linear information coupling problem (\ref{LICP-BC1}) is to describe its boundary points by specifying the directions of perturbing $P_{\uX}$, or the conditional distributions $P_{\uX|( U , V_1 , V_2 ) = ( u , v_1 , v_2 )}$ that have the marginal $P_{\uX}$. 

%Different from the case of point-to-point channel, 

The first simplifying observation is that the characterization of (\ref{LICP-BC1}) involves the optimization over multiple rates $R_0$, $R_1$, and $R_2$, with respect to different messages $M_0$, $M_1$, and $M_2$. This can indeed be separated into three sub-problems. The idea here is that, while the conditional distribution $P_{\uX|( U , V_1 , V_2 ) = ( u , v_1 , v_2 )}$ is perturbed from $P_{\uX}$ by some vector $J_{u , v_1 , v_2}$ that is in general a joint function of $U$, $V_1$, and $V_2$, by the first order approximation, it is enough to only consider perturbation vectors that can be written as the linear combination of three vectors $J_{u}$, $J_{v_1}$, and $J_{v_2}$. This fact is shown in the following Lemma.

%so that we can apply the local geometric approach that was developed in section \ref{sec:P2P} to each individual sub-problem. The idea here is that, while the conditional distribution $P_{\uX|( U , V_1 , V_2 ) = ( u , v_1 , v_2 )}$ is perturbed from $P_{\uX}$ by some vector $J_{u , v_1 , v_2}$ that is in general a joint function of $U$, $V_1$, and $V_2$, by the first order approximation, it is enough to only consider perturbation vectors that can be written as the linear combination of three vectors $J_{u}$, $J_{v_1}$, and $J_{v_2}$.

\begin{lemma} \label{prop:BC}
The rate region \eqref{LICP-BC1} is, up to the first order approximation, the same as the following rate region with the constraints separated for $U,V_1$, and $V_2$:
\begin{align}\label{LICP-BC2}
&\left\{
\begin{array}{ll}
R_0 \leq \frac{1}{n} \min \{ I(U ; \uY_1) , I(U ; \uY_2) \} \\
R_1 \leq \frac{1}{n} I(V_1 ; \uY_1) \\
R_2 \leq \frac{1}{n} I(V_2 ; \uY_2) 
\end{array}\right. \\ \notag
\mbox{subject to:} \ &
( U , V_1 , V_2 ) \rightarrow \uX \rightarrow (\uY_1, \uY_2), \frac{1}{n} I(U ; \uX) \leq \frac{1}{2} \epsilon^2_0, \\ \notag
&  \frac{1}{n} I(V_1 ; \uX) \leq \frac{1}{2} \epsilon^2_1, \frac{1}{n} I(V_2 ; \uX) \leq \frac{1}{2} \epsilon^2_2,\ \sum_{i=1}^{3} \epsilon_i^2 = \epsilon^2 \\ \notag
& \frac{1}{n} \| P_{\uX | (U,V_1,V_2) = (u,v_1,v_2)} - P_{\uX} \|^2 = O(\epsilon^2), \\ \notag
& \forall \ (u,v_1,v_2) \in {\cal U} \times {\cal V}_1 \times {\cal V}_2,
\end{align}
\end{lemma}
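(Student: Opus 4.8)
The plan is to exploit the mutual independence of $U$, $V_1$, $V_2$ to split the single joint perturbation into orthogonal components, and to observe that only the ``main-effect'' components of that decomposition influence the three rates. First I would write every conditional distribution as a local perturbation, $P_{\uX|(U,V_1,V_2)=(u,v_1,v_2)} = P_{\uX} + \epsilon\,[\sqrt{P_{\uX}}]\,\K_{u,v_1,v_2}$, with each $\K_{u,v_1,v_2}$ orthogonal to $\uv_0 = [\sqrt{P_{\uX}}]$ and $\sum_{u,v_1,v_2}P_U(u)P_{V_1}(v_1)P_{V_2}(v_2)\,\K_{u,v_1,v_2}=0$ (the marginal constraint). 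Since $(V_1,V_2)$ is independent of $U$, the conditional $P_{\uX|U=u}$ is the average of $P_{\uX|(u,v_1,v_2)}$ against $P_{V_1}\times P_{V_2}$, hence a perturbation of $P_{\uX}$ along $\bar{\K}_u := \sum_{v_1,v_2}P_{V_1}(v_1)P_{V_2}(v_2)\,\K_{u,v_1,v_2}$; likewise $P_{\uX|V_1=v_1}$ and $P_{\uX|V_2=v_2}$ are perturbations along $\bar{\K}_{v_1}$ and $\bar{\K}_{v_2}$. Because $I(U;\uY_i)$ depends on $P_{\uX|U=u}$ only, it is, to first order, a function of $\{\bar{\K}_u\}$ alone; similarly $I(V_1;\uY_1)$ of $\{\bar{\K}_{v_1}\}$ and $I(V_2;\uY_2)$ of $\{\bar{\K}_{v_2}\}$. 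So the three rate bounds in \eqref{LICP-BC1} see only these marginal perturbations.

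Next I would introduce the ANOVA (Hoeffding) decomposition of $\K_{u,v_1,v_2}$ with respect to the product measure $P_U\times P_{V_1}\times P_{V_2}$, namely $\K_{u,v_1,v_2} = \K^{U}_u + \K^{V_1}_{v_1} + \K^{V_2}_{v_2} + (\text{pairwise and triple interaction terms})$, where each summand is centered in each of its own arguments and the constant term vanishes by the marginal constraint. Mutual independence makes the summands mutually orthogonal under $P_U\times P_{V_1}\times P_{V_2}$, and the two facts I need are: averaging out $V_1,V_2$ kills every term but $\K^{U}_u$, so $\bar{\K}_u=\K^{U}_u$ (and similarly for $V_1,V_2$); and, by the Pythagorean theorem, $\sum P_U P_{V_1} P_{V_2}\,\|\K_{u,v_1,v_2}\|^2 = \|\K^{U}\|^2 + \|\K^{V_1}\|^2 + \|\K^{V_2}\|^2 + (\text{squared norms of interaction terms})$, writing $\|\K^{U}\|^2 := \sum_u P_U(u)\|\K^{U}_u\|^2$, etc. Under the local approximation of Section~\ref{sec:local} the joint constraint $\frac1n I(U,V_1,V_2;\uX)\le\frac12\epsilon^2$ is equivalent to $\sum P_U P_{V_1} P_{V_2}\,\|\K_{u,v_1,v_2}\|^2\le 1$, so discarding the interaction terms from any feasible $\K$ leaves the three rates unchanged and can only slacken this constraint; hence it is without loss of optimality to take $\K_{u,v_1,v_2} = \K^{U}_u + \K^{V_1}_{v_1} + \K^{V_2}_{v_2}$. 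For such $\K$ the joint constraint is exactly $\|\K^{U}\|^2 + \|\K^{V_1}\|^2 + \|\K^{V_2}\|^2 \le 1$; setting $\epsilon_0^2 := 2\cdot\frac1n I(U;\uX)$, $\epsilon_1^2 := 2\cdot\frac1n I(V_1;\uX)$, $\epsilon_2^2 := 2\cdot\frac1n I(V_2;\uX)$ (and padding arbitrarily if $\sum_i\epsilon_i^2<\epsilon^2$) shows every rate tuple of \eqref{LICP-BC1} lies in \eqref{LICP-BC2}. Conversely, given any configuration feasible for \eqref{LICP-BC2} with marginal perturbations $\K^{U}_u,\K^{V_1}_{v_1},\K^{V_2}_{v_2}$ obeying the three separate constraints, the choice $\K_{u,v_1,v_2}:=\K^{U}_u+\K^{V_1}_{v_1}+\K^{V_2}_{v_2}$ yields valid conditional distributions (each summand is orthogonal to $\uv_0$ and centered), reproduces the same $P_{\uX|U=u}$, $P_{\uX|V_i=v_i}$ and hence the same rates, and satisfies $\frac1n I(U,V_1,V_2;\uX)=\frac12\epsilon^2\big(\|\K^{U}\|^2+\|\K^{V_1}\|^2+\|\K^{V_2}\|^2\big)\le\frac12\epsilon^2$ together with the local constraint (the norm of a sum of a bounded number of bounded vectors is bounded). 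This is the reverse inclusion, so the two regions coincide up to first order.

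The main obstacle I expect is the bookkeeping of the ANOVA decomposition: one must check that the cross terms really vanish — this is exactly where the mutual independence of $U,V_1,V_2$ is essential, since without it $P_{\uX|U=u}$ is not the clean average used above and the orthogonality breaks — that the marginal constraint forces the constant component to zero, and that all the first-order replacements are legitimate, i.e.\ that the discarded $o(\epsilon^2)$ terms are genuinely negligible, which is precisely what the last constraint of \eqref{LICP-BC1} guarantees. By contrast, the rate-side observation that $I(U;\uY_i)$ depends only on the $U$-marginal perturbation is conceptually the crux but technically immediate once the perturbation picture is set up.
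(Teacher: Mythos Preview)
Your proposal is correct, and the reverse inclusion (\eqref{LICP-BC2} $\subseteq$ \eqref{LICP-BC1}) matches the paper's argument essentially verbatim: build $\K_{u,v_1,v_2}:=\K^{U}_u+\K^{V_1}_{v_1}+\K^{V_2}_{v_2}$, expand the squared norm, and use the centering conditions $\sum_u P_U(u)\K^{U}_u=0$ etc.\ to kill the cross terms.

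For the forward inclusion (\eqref{LICP-BC1} $\subseteq$ \eqref{LICP-BC2}) you take a genuinely different route. The paper does not localize at all here: it simply invokes the exact chain rule $I(U,V_1,V_2;\uX)=I(U;\uX)+I(V_1;\uX|U)+I(V_2;\uX|U,V_1)$ and the fact that conditioning on an independent variable cannot decrease mutual information, giving $I(U,V_1,V_2;\uX)\ge I(U;\uX)+I(V_1;\uX)+I(V_2;\uX)$ in one line. Your ANOVA/Hoeffding decomposition stays entirely inside the local geometric picture: it shows constructively that the interaction components of $\K_{u,v_1,v_2}$ are invisible to all three rate expressions yet contribute nonnegatively to the constraint, so they can be dropped. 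The paper's argument is shorter; yours is more in keeping with the Euclidean framework of Section~\ref{sec:local} and makes explicit \emph{why} the additive perturbation $\K^{U}_u+\K^{V_1}_{v_1}+\K^{V_2}_{v_2}$ is without loss of optimality rather than merely sufficient. Either way the mutual independence of $U,V_1,V_2$ is the load-bearing hypothesis, exactly as you flag.
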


\begin{proof}
Appendix~\ref{app:lem3}.
\end{proof}

%While the rate regions \eqref{LICP-BC1} and \eqref{LICP-BC2} are effectively the same, working with \eqref{LICP-BC2} can be easier since each type of massage has its own local constraint, and 

Now for a tuple of $(\epsilon_0, \epsilon_1, \epsilon_2)$ with $\sum_{i=1}^{3} \epsilon_i^2 = \epsilon^2$,  the optimization problem \eqref{LICP-BC2} reduces to three sub-problems: for $i = 1,2$, the optimization problems for transmitting the private messages $M_i$
\begin{align}
\label{eq:BC_private}
\max. & \quad \frac{1}{n}  I(V_i ; \uY_i)\\ \notag
\mbox{subject to:} & \quad V_i \rightarrow \uX \rightarrow \uY_i , \frac{1}{n} I(V_i ; \uX) \leq \frac{1}{2} \epsilon^2_i,\\ \notag
& \quad \frac{1}{n} \| P_{\uX | V_i = v_i} - P_{\uX} \|^2 = O(\epsilon_i^2), \ \forall \ v_i,
\end{align}
and the optimization problem for the common message $M_0$
\begin{align}
\label{eq:BC_single_public}
\max. & \quad \frac{1}{n} \min \left\{ I(U ; \uY_1) , I(U ; \uY_2) \right\} \\ \notag
\mbox{subject to:} & \quad U \rightarrow \uX \rightarrow (\uY_1, \uY_2) , \frac{1}{n} I(U ; \uX) \leq \frac{1}{2} \epsilon^2_0,\\ \notag
& \quad \frac{1}{n} \| P_{\uX | U=u} - P_{\uX} \|^2 = O(\epsilon_0^2), \ \forall \ u.
\end{align}

As in the point-to-point channel case, we assume that the input distribution of $\uX$, as the operating point, is i.i.d. $P_X$. Hence, the output distributions of the two outputs $Y_1$ and $Y_2$ are also i.i.d. $P_{Y_1}$ and $P_{Y_2}$. The conditional distributions is denoted as perturbations from the marginal distribution, which are written as $P_{\uX|U = u} = P_X^{(n)} + \sqrt{n} \epsilon_0 \cdot J_{u} $ and $P_{\uX|V_i = v_i} = P_X^{(n)} + \sqrt{n} \epsilon_i \cdot J_{v_i} $ for $i = 1,2$.

Then, the optimization problems (\ref{eq:BC_private}) for private messages are the same as the linear information coupling problem for the point-to-point channel (\ref{eq:LICP_multi-letter}). Thus, by defining the single-letter DTM's $B_i \triangleq \left[\sqrt{P_{Y_i}}^{-1}\right] W_i \left[\sqrt{P_X}\right]$ for $i = 1,2$, we can solve (\ref{eq:BC_private}) with the same procedure as (\ref{eq:LICP_multi-letter}), and the single-letter solutions are optimal. 

The optimization problem (\ref{eq:BC_single_public}) is, however, fundamentally different from the other two. Suppose that the weighted perturbation vector $\K_{u} = \left[\sqrt{P^{(n)}_{X}}^{-1}\right]J_{u}$, the local version of the problem can be written as
\begin{align}\label{eq:abcd}
\max_{\K_u} \min \left\{ \sum_{u} P_U(u) \left\| B_1^{(n)} \K_{u}  \right\|^2 , \sum_{u} P_U(u) \left\| B_2^{(n)} \K_{u}  \right\|^2 \right\},
%\sum_{u} P_U(u) \cdot \left\| \left[\sqrt{P_Y}^{-1} \right] W \left[\sqrt{P_X} \right] \cdot L_u \right\|^2, \\ \label{eq:bbb}
%\mbox{subject to:} \ &\sum_{u} P_U(u) \cdot \| L_u \|^2 = 1.
\end{align}
subject to
\begin{align}\notag
\sum_{u} P_U(u) \cdot \| \K_u \|^2 = 1,
\end{align}
and also the constraints~\eqref{eq:bbb_2} and~\eqref{eq:bbb_3} that guarantee the validity of the weighted perturbation vector. Here, the $B_i^{(n)}$ is the $n^{th}$ Kronecker product of the single-letter DTM $B_i$, for $i = 1,2$. 

Our goal here is to check whether there exists a single-letter optimal solution  for the problem of transmitting the common message \eqref{eq:abcd}. Similar to the process of analyzing the point-to-point problem, this is carried out in several steps: 1) we need to check whether the optimal solution in \eqref{eq:abcd}, perturbation vectors $\K_u^*$, leads to $P_{\vec{x}|U=u}$ that take product form; 2) we hope to find a time-sharing/convexity argument to show that the optimal choices of $P_{\vec{x}|U=u}$ are not only independent, but also identical from letter to letter; 3) we need to have control over the  cardinality of $U$. In particular, we would hope that the cardinality of $U$ does not change with $n$, and should not increase with $K$, the number of receivers of the broadcast channel. 

Before answering these questions in a formal statement, we would first make some intuitive discussions, to point out the key differences between the broadcast channel and the point-to-point case. 

The key difference between the optimization problem \eqref{eq:abcd} and its counterpart for the point-to-point channel \eqref{eq:bbb200} is that we want to design a set of perturbation vectors, whose images through two separate linear maps, $B^{(n)}_1$ and $B^{(n)}_2$, are large at the same time. This involves the tradeoff between two linear systems, and is exactly the key issue in broadcasting common messages. 

While generally the tradeoff between two linear systems can be a rather messy problem, the special structure of $B^{(n)}_1$ and $B^{(n)}_2$ turns out to be quite useful. Since both $i=1,2$, $B^{(n)}_i$ is the tensor product of the corresponding single-letter DTM's $B_i$, the singular vectors of $B^{(n)}_i$ also take the form of tensor products of the singular vectors of $B_i$. That is, if $\vec{v}_{i,0}, \vec{v}_{i,1}, \ldots, \vec{v}_{i, M}$ are the singular vectors of $B_i$, with the corresponding singular values $\sigma_{i,0} \geq \sigma_{i,1}\geq \ldots\geq \sigma_{i,M}$, then for any $j_1, j_2, \ldots, j_n \in \{0, \ldots, M \}$ 

\begin{align}
\label{eqn:singularvector}
\vec{v}_{i, j_1} \otimes \vec{v}_{i, j_2} \otimes \ldots \otimes \vec{v}_{i, j_n}
\end{align}
is a singular vector for $B^{(n)}_i$, with the corresponding singular value $\sigma_{i, j_1} \cdot \sigma_{i, j_2} \ldots \sigma_{i, j_n}$. For each $i \in \{1,2\}$, the corresponding collection of singular vectors form an orthonormal basis on the space of joint distributions over $X_1, \ldots, X_n$. 

Moreover, $B_1$ and $B_2$ share the same largest singular value $\sigma_{i,0}=1$ and the same corresponding singular vector $\vec{v}_{i, 0} = [\sqrt{P_{X}(x)}, x \in {\cal X}]^T$. This means that both linear systems $B_1^{(n)}$ and $B_2^{(n)}$ output larger images for those singular vectors of the form \eqref{eqn:singularvector} with smaller indices $j_1, \ldots, j_n$. Note that the vector with $j_1=j_2=\ldots=j_n=0$ is an invalid choice of perturbation vector, thus a direction with large output images through both systems should have all but one of the indices equal to $0$. Put it another way, if we pick a perturbation vector that has a non-zero image in the form of \eqref{eqn:singularvector}, with any choice of $j_1, \ldots, j_n$, then replacing all but one indices by $0$ would result in a larger image through both $B_1^{(n)}$ and $B_2^{(n)}$. Formalizing this argument, we can conclude that the optimizer of \eqref{eq:abcd} must be $\K_u$'s that are linear combinations of these singular vectors. This corresponds precisely to a resulting $P_{\vec{X}|U=u}$ in the product form, i.e., $X_1, X_2, \ldots, X_n$ are independent conditioned on $U$, just like in the point-to-point case. 

The main difference between the broadcasting channel and the point-to-point channel comes with the choice of $P_U$. Recall previously, in \eqref{eq:opt_prob}, we used the fact that for any choice of the direction of $\K$, there is a linear relation between the constraint on $\norm{\K}^2$ and the objective $\norm{B\K}^2$ to conclude that one can without loss of generality pick $P_U$ to be the binary uniform distribution, and focus only on choosing a single direction of $\K$ that maximizes the ratio $\norm{B\K}^2/\norm{\K}^2$. In the broadcast channel, however, since our objective function is the minimum output image from multiple linear systems, the above argument no longer holds. In fact, in some cases, it can be beneficial to choose multiple $\K_u$'s with different directions, and average over the output squared norms, as without the averaging, some of such choices might be eliminated by the $\min\{\cdot, \cdot\}$ operation. In fact, when we have $K>2$ receivers, the objective function becomes the minimum among the outputs of $K$ linear systems, and the number of $\K_u$'s we need to achieve the optimum increases with $K$. This is summarized in the following Theorem and example.

\begin{thm} \label{thm:BC}
Let $B_i$ be the DTM of some DMC with respect to the same input distributions $P_X$, for $i = 1,2, \ldots , K$, let $\uv_0 = [\sqrt{P_X(x)}, x\in {\cal X}]^T$ be the common singular vector of $B_i$ with the largest singular value of $1$, and  $\uv_0^{(n)} = \uv_0 \otimes \uv_0, \ldots, \uv_0$ be the $n$-th Kronecker product, then for the linear information coupling problem 
\begin{align} \label{eq:k-BC}
\lambda^{(n)} = \max \min_{1 \leq i \leq K} \left\{ \sum_u P_U(u) \cdot \left\| B_i^{(n)} \K_{u}  \right\|^2 \right\},
\end{align}
where the maximization is taken over all $P_U(\cdot) $ and $\K_u$ such that 
\begin{align*}
&\sum_u  P_U(u) \cdot \| \K_{u} \|^2 = 1\\
&\langle \K_u, \uv_0^{(n)}  \rangle =0, \quad \forall u
\end{align*}
we have
\begin{itemize}
\item[a)] There exists an optimal choice where $\K_u$ for all $u$ take product form: 
\begin{align*}
\K_u = \uv^{(1)} \otimes \uv^{(2)} \otimes \ldots \otimes \uv^{(n)}
\end{align*}
where all but one of the $\uv^{(i)}$'s are equal to $\uv_0$. 
\item[b)] $\lambda^{(n)}= \lambda^{(1)}$ with the cardinality of $U$ no larger than $K$
\item[c)] With the extra constraint that $U$ is binary and the two corresponding $\K_u$'s lie in the same direction, the problem becomes 
\begin{align*}
\lambda_B^{(n)} = \max_{\K: \norm{\K}^2=1, \langle \K, \vec{v}_0\rangle =0} \min_{1\leq i \leq K} \left\{ \norm{B_i^{(n)} \K}^2\right\}
\end{align*} 
We have 
\begin{align*}
\sup_n \lambda_B^{(n)} = 
\left\{
\begin{array}{ll}
\lambda_B^{(1)} \ \mbox{if $k = 2$},\\
\lambda_B^{(k)} \ \mbox{if $k > 2$},
\end{array}\right.
\end{align*}
and for $k>2$, there exists $k$-user broadcast channels such that
\begin{align*}
\lambda^{(k)} > \lambda^{(k-1)}.
\end{align*}
\end{itemize}
In other words, with a constraint on the cardinality of $U$, single-letter solutions are not optimal in general. However, there always exists a  $K$-letter optimal solution. 
\end{thm}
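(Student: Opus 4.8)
The backbone is the single-letterization $\lambda^{(n)}=\lambda^{(1)}$, from which part (a) and half of part (c) are immediate; the cardinality bound and the rest of (c) are separate, more combinatorial/geometric matters. I would proceed in the steps: (1) $\lambda^{(1)}\le\lambda^{(n)}$ by embedding; (2) $\lambda^{(n)}\le\lambda^{(1)}$ by an $i$-independent tensor decomposition plus ``slicing''; (3) cardinality $\le K$ via LP duality and Carathéodory; (4) deduce (a); (5) handle the restricted quantity $\lambda_B^{(n)}$ and the $K=2$ versus $K>2$ dichotomy. For (1): given an optimal single-letter ensemble $(P_U^*,\{\K_u^*\})$, set $\K_u=\K_u^*\otimes\uv_0^{\otimes(n-1)}$; since $B_i^{(n)}=B_i^{\otimes n}$, $B_i\uv_0=[\sqrt{P_{Y_i}}]$ is a unit vector and $B_i(\uv_0^{\perp})\subseteq[\sqrt{P_{Y_i}}]^{\perp}$, all of $\norm{\K_u}$, $\langle\K_u,\uv_0^{(n)}\rangle$ and $\norm{B_i^{(n)}\K_u}$ are unchanged, so $\lambda^{(n)}\ge\lambda^{(1)}$. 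For (2), the point is to pick coordinates not depending on $i$: decompose $\mathbb{R}^{|\cX|}=\mathbb{R}\uv_0\oplus\uv_0^{\perp}$ in each slot, giving $\mathbb{R}^{|\cX|^n}=\bigoplus_{A\subseteq\{1,\dots,n\}}V_A$, where $V_A$ is the summand that lies in $\uv_0^{\perp}$ exactly in the slots of $A$. Because all the $B_i$ share $\uv_0$ and send $\uv_0^{\perp}$ into the orthocomplement of their common top left singular vector, $B_i^{(n)}$ maps the $V_A$ into mutually orthogonal output subspaces, so for a feasible $\K_u$ (whose $V_{\emptyset}$-component vanishes, as $V_{\emptyset}=\mathbb{R}\uv_0^{(n)}$),
\[
\norm{B_i^{(n)}\K_u}^{2}=\sum_{A\neq\emptyset}\norm{B_i^{(n)}\K_{u,A}}^{2}.
\]

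Writing $\K_{u,A}=\uv_0^{\otimes\bar A}\otimes\xi_{u,A}$ with $\xi_{u,A}\in(\uv_0^{\perp})^{\otimes|A|}$, and peeling the factors of $\xi_{u,A}$ one at a time against a fixed orthonormal basis of $\uv_0^{\perp}$, using $\norm{(M\otimes N)\zeta}\le\norm{M}\cdot\norm{(\mathrm{Id}\otimes N)\zeta}$ and $\norm{B_i|_{\uv_0^{\perp}}}=\sigma_{i,1}\le1$ (the point-to-point lemma), one gets
\[
\norm{B_i^{(n)}\K_{u,A}}^{2}\le\sum_{\vec a}\norm{B_i\,v^{(u,A)}_{\vec a}}^{2},\qquad\sum_{\vec a}\norm{v^{(u,A)}_{\vec a}}^{2}=\norm{\K_{u,A}}^{2},
\]
where the one-dimensional slices $v^{(u,A)}_{\vec a}\in\uv_0^{\perp}$ \emph{do not depend on $i$} --- only the discarded factors $\sigma_{i,1}^{2(|A|-1)}\le1$ did. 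Collecting all slices over $u$ and $A$ yields a single-letter ensemble indexed by triples $(u,A,\vec a)$, with probabilities $P_U^{(n)}(u)\norm{v^{(u,A)}_{\vec a}}^{2}$ and unit directions $v^{(u,A)}_{\vec a}/\norm{v^{(u,A)}_{\vec a}}\in\uv_0^{\perp}$, whose channel-$i$ value is $\ge\sum_u P_U^{(n)}(u)\norm{B_i^{(n)}\K_u}^{2}$ for every $i$ simultaneously; hence $\lambda^{(1)}\ge\lambda^{(n)}$. (If one also wants $\sum_uP_U(u)\K_u=0$ preserved, replace each direction by the antipodal pair $\pm\K$ with half the mass, changing nothing else.)

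For (3): absorbing the norms $\norm{\K_u}$ into $P_U$ we may take all directions unit, so $\lambda^{(1)}=\max_{\mu\in\mathrm{conv}({\cal D})}\min_i\mu_i$ with ${\cal D}=\{(\norm{B_1w}^2,\dots,\norm{B_Kw}^2):\norm{w}=1,\ w\perp\uv_0\}\subseteq[0,1]^{K}$ compact. Using $\min_i\mu_i=\min_{\theta}\langle\theta,\mu\rangle$ over the probability simplex in $\mathbb{R}^K$ and Sion's minimax, $\lambda^{(1)}=\min_{\theta}\max_{d\in{\cal D}}\langle\theta,d\rangle$; for an optimal $\theta^\star\neq0$, complementary slackness forces the primal optimizer $\mu^\star$ to be a convex combination of points of ${\cal D}$ on the hyperplane $\{\langle\theta^\star,\cdot\rangle=\lambda^{(1)}\}$, and Carathéodory inside that $(K-1)$-dimensional hyperplane gives $\mu^\star=\sum_{j=1}^{K}\alpha_j d_j$ with $d_j\in{\cal D}$; the corresponding $K$ directions and weights achieve $\lambda^{(1)}$, proving (b). For (4): the cardinality-$\le K$ single-letter optimizer, embedded as in step (1), is an optimal $n$-letter ensemble in which every $\K_u=\K_u^*\otimes\uv_0\otimes\cdots\otimes\uv_0$ has exactly the asserted product form (all slots but one equal to $\uv_0$).

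For (5): $\lambda_B^{(n)}$ is nondecreasing in $n$ (embed $\K\mapsto\K\otimes\uv_0$) and $\lambda_B^{(n)}\le\lambda^{(n)}=\lambda^{(1)}$. From the cardinality-$K$ optimizer $\{(\alpha_j,w_j)\}_{j=1}^{K}$ of step (3), form the single $K$-letter direction $\K=\sum_{j=1}^{K}\sqrt{\alpha_j}\,\uv_0^{\otimes(j-1)}\otimes w_j\otimes\uv_0^{\otimes(K-j)}$; orthogonality of the $B_i^{(K)}$-images of distinct summands gives $\norm{B_i^{(K)}\K}^2=\sum_j\alpha_j\norm{B_iw_j}^2$, hence $\lambda_B^{(K)}\ge\lambda^{(1)}$ and $\sup_n\lambda_B^{(n)}=\lambda_B^{(K)}=\lambda^{(1)}$. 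When $K=2$, the classical convexity of the joint range of two quadratic forms (Brickman/Dines) makes ${\cal D}$ ``one-point--sufficient'': the maximizer of the coordinatewise-monotone $\min_i\mu_i$ over $\mathrm{conv}({\cal D})$ lies on $\partial\,\mathrm{conv}({\cal D})$, which here coincides with ${\cal D}$, so $\lambda_B^{(1)}=\lambda^{(1)}=\sup_n\lambda_B^{(n)}$. For $K\ge3$ this convexity fails, and to exhibit the strict gap $\lambda_B^{(K)}>\lambda_B^{(K-1)}$ one constructs a $K$-receiver channel whose efficiency vectors $(\norm{B_1w}^2,\dots,\norm{B_Kw}^2)$ are arranged cyclically so that $K$ suitably chosen directions --- each strong for $K-1$ receivers and weak for one --- are all needed to balance the $\min$, while no $K-1$ of them do. The main obstacle is this last point: steps (a)--(b) rest on one clean structural fact (the $V_A$-decomposition is common to all $B_i$, so the slicing estimates are uniform in $i$), whereas (c) is genuinely about the geometry of $K$ quadratic forms --- it is the loss of convexity of their joint numerical range for $K\ge3$ that forces multi-letter solutions, and producing an explicit channel that witnesses the \emph{strict} inequality $\lambda_B^{(K)}>\lambda_B^{(K-1)}$ (not merely $\lambda_B^{(K)}>\lambda_B^{(1)}$) is the delicate part.
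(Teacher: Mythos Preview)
Your argument is correct and takes a genuinely different route from the paper's. For the tensorization $\lambda^{(n)}=\lambda^{(1)}$, the paper expands each $\K_u$ in the SVD basis of one fixed $B_1$, tracks the change of basis to $B_2$ through a unitary $\Phi$, and performs an explicit combinatorial reduction over index sets $\cS_k,\cT_k$ before invoking a dimension--counting lemma on diagonal systems (their Lemma~\ref{lem:dim_count}); this is written only for $K=2$, with the general $K$ left to the reader. Your $V_A$--decomposition is cleaner: because $\uv_0$ is common to all $B_i$ and $B_i(\uv_0^{\perp})\perp B_i\uv_0$, the splitting is $i$--independent, and the single operator inequality $B_i^{T}B_i\big|_{\uv_0^{\perp}}\preceq I$ yields $i$--uniform one--letter slices directly, for arbitrary $K$. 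For the cardinality bound you use Sion plus Carath\'eodory in the supporting hyperplane of $\theta^\star$, while the paper uses the bespoke Lemma~\ref{lem:dim_count}; both give $|\cU|\le K$, yours by a more standard mechanism. For the $K=2$ part of (c) the paper constructs an explicit single vector $\uc$ dominating a convex combination of two (with separate arguments for $\dim\uv_0^{\perp}=2$ and $>2$), whereas you appeal to convexity of the joint numerical range; this is fine, but note that Brickman's theorem needs $\dim\uv_0^{\perp}\ge 3$, and for $\dim\uv_0^{\perp}=2$ you should say instead that $\mathcal D$ is the boundary of an ellipse, so $\partial\,\mathrm{conv}(\mathcal D)=\mathcal D$ still holds. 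Finally, for the strict gap $\lambda_B^{(K)}>\lambda_B^{(K-1)}$ when $K>2$, neither you nor the paper give a general proof: the paper exhibits the ``windmill'' channel as a $K=3$ witness, and your sketch would be completed by the same example.
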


%\begin{thm} \label{thm:BC}
%For the information coupling problem over general broadcast channels described in (\ref{eq:BC_single_public}), there exists optimal single letter solution for the case with $2$ receivers. In general, when there are $k > 2$ receivers, single letter solutions cannot be optimal. However, there always exists $k$-letter solutions that are optimal.
%\end{thm}

\begin{proof}
See Appendix \ref{ap:proof 1}.
\end{proof}

%The existence of optimal single-letter solutions for the $2$ receivers case, and optimal $k$-letter solutions for the case of $k > 2$ receivers are presented in appendix \ref{ap:proof 1}. It remains to show that when $k>2$, to achieve the optimality of \eqref{eq:k-BC}, the k-letter version of \eqref{eq:k-BC} is in general necessary for broadcast channels with $k$ receivers.
%$k$-letter solutions are in general necessary for broadcast channels with $k$ receivers. 
The following example illustrates that when there are more than $2$ receivers, i.i.d. distributions simply do not have enough degrees of freedom to be optimal in the tradeoff of the $K$ linear systems. Therefore, one has to design multi-letter product distributions to achieve the optimal. %This is another significant result, since to our knowledge, this is the only case where prove that single letter solutions are not optimal, but still can construct the optimal solution. 
The following example, constructed with the geometric method, illustrates the key ideas. 

\begin{figure}
\centering 
\subfigure[]{
\def\svgwidth{.5\columnwidth}
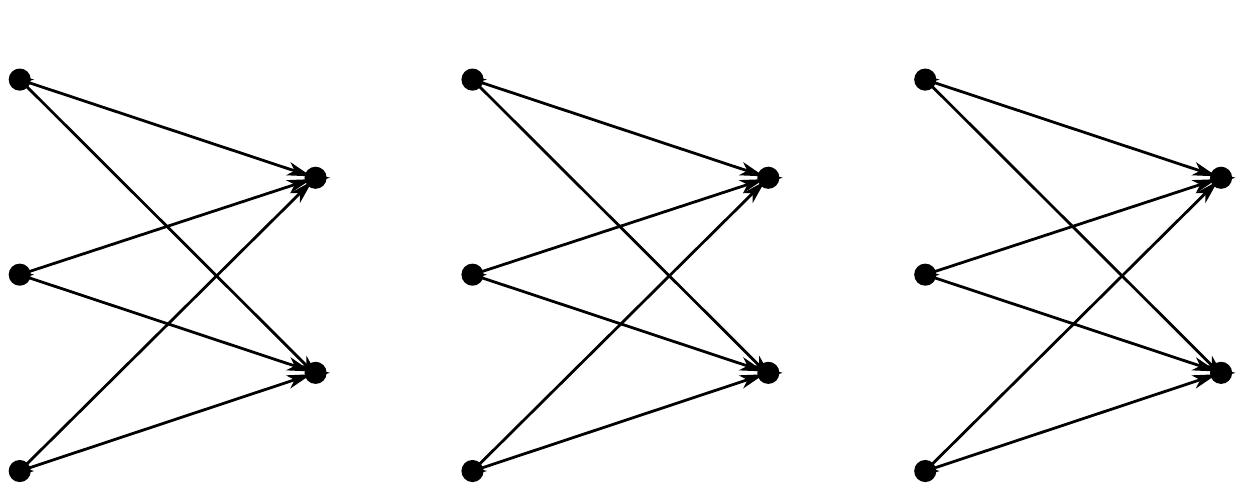
\label{fig:Windmill_Channel}
}
\subfigure[]{
\def\svgwidth{0.35\columnwidth}
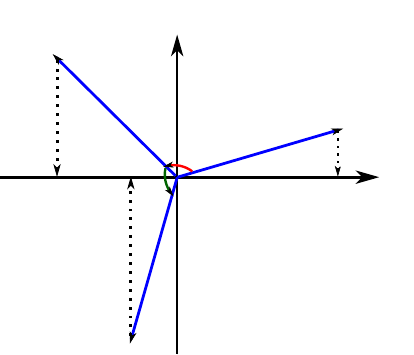
\label{fig:windmill_1}
}
\subfigure[]{
\def\svgwidth{0.25\columnwidth}
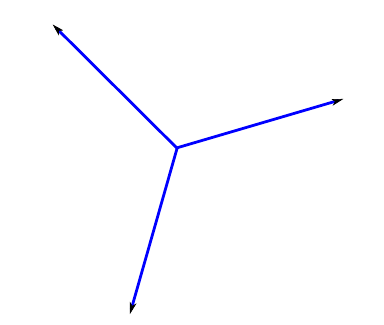
\label{fig:Windmill}
}
\caption{(a) A ternary input broadcast channel. (b) The lengths of the output images of $\K$ through matrices $B_1$, $B_1$, and $B_1$ are the scaled projection of $\K$ and its rotated version to the horizontal axis. (c) The optimal perturbations over $3$ time slots.}
\end{figure}

%\noindent{\bf{Example:}}\\ 
\begin{example}[The windmill channel]
We consider a $3$-user broadcast channel as shown in Figure \ref{fig:Windmill_Channel}. The input alphabet $\cX$ is ternary, so that the perturbation vectors have $2$ dimensions and can be easily visualized. Suppose that  $P_X$ is fixed as $[\frac{1}{3} \ \frac{1}{3} \ \frac{1}{3}]^T$, then the DTM $B_1$ for the first receiver $Y_1$ is
\begin{align*}
B_1
 = \sqrt{\frac{2}{3}}
\begin{bmatrix}                
  \frac{1}{2} & 1 - \delta & \delta \\
  \frac{1}{2} & \delta & 1 - \delta
\end{bmatrix}
=\sigma_0 \uu_0 \uv_0^T + \sigma_1 \uu_1 \uv_1^T,
\end{align*}
where $\sigma_0 = 1$ and $\sigma_1 = \sqrt{\frac{2}{3}} (1 - 2 \delta)$ are the singular values of $B_1$, and $\uu_0 = [\frac{1}{\sqrt{2}} \ \frac{1}{\sqrt{2}}]^T$, $\uu_1 = [\frac{1}{\sqrt{2}} \ \frac{-1}{\sqrt{2}}]^T$, $\uv_0 = [\frac{1}{\sqrt{3}} \ \frac{1}{\sqrt{3}} \ \frac{1}{\sqrt{3}}]^T$, and $\uv_1 = [0 \ \frac{1}{\sqrt{2}} \ \frac{-1}{\sqrt{2}}]^T$ are the corresponding left and right singular vectors. 

Intuitively, one can think of this channel from the transmitter to the first receiver as a projection operator. It takes the input variation and write it with respect to an orthonormal basis $\uv_1$ and $\uv_2 = [\frac{2}{\sqrt{6}} \ \frac{-1}{\sqrt{6}} \ \frac{-1}{\sqrt{6}}]^T$, takes the first element along $\uv_1$, scales  by $\sigma_1$, and maps to the output direction of $\uu_1$. The input variation in the direction of $\uv_2$ is wiped out by the channel, and has no effect to the output distribution. 

Similarly, we can write the  the DTM's $B_2$ and $B_3$ for receivers $Y_2$ and $Y_3$  as
\begin{align*}
B_2
%&= \frac{1}{\sqrt{6}}
%\begin{bmatrix}                
%  0 & 1 & 2 \\
%  2 & 1 & 0
%\end{bmatrix}\\
= \sqrt{\frac{2}{3}}
\begin{bmatrix}                
  \delta & \frac{1}{2} & 1 - \delta  \\
  1 - \delta & \frac{1}{2} & \delta
\end{bmatrix}
=\sigma_0 \uu_0 \uv_0^T + \sigma_1 \uu_1
[1 \ 0]
\begin{bmatrix}                
  \cos \frac{2\pi}{3} & -\sin \frac{2\pi}{3} \\
  \sin \frac{2\pi}{3} & \cos \frac{2\pi}{3}
\end{bmatrix}
\begin{bmatrix}                
  \underline{v}_1^T \\
  \underline{v}_2^T
\end{bmatrix},
\end{align*}
and
\begin{align*}
B_3
%&= \frac{1}{\sqrt{6}}
%\begin{bmatrix}                
%  1 & 2 & 0 \\
%  1 & 0 & 2
%\end{bmatrix}\\
= \sqrt{\frac{2}{3}}
\begin{bmatrix}                
  1 - \delta & \delta & \frac{1}{2}   \\
  \delta & 1 - \delta & \frac{1}{2} 
\end{bmatrix} 
=\sigma_0 \uu_0 \uv_0^T + \sigma_1 \uu_1
[1 \ 0]
\begin{bmatrix}                
  \cos \frac{4\pi}{3} & -\sin \frac{4\pi}{3} \\
  \sin \frac{4\pi}{3} & \cos \frac{4\pi}{3}
\end{bmatrix}
\begin{bmatrix}                
  \underline{v}_1^T \\
  \underline{v}_2^T
\end{bmatrix},
\end{align*}

The channel $B_2$ also represents the input variation with the orthonormal basis $\uv_1, \uv_2$. However, it rotates the 2-dimensional coefficient vector by $\frac{2\pi}{3}$, before taking the first element, and similarly scales by $\sigma_1$ and maps to $\uu_1$. The channel $B_3$ does the same thing, except the rotation is $\frac{4\pi}{3}$. This is illustrated in Figure \ref{fig:windmill_1}, from which the name ``windmill" channel should be obvious.

%where $\uv_2 = [\frac{2}{\sqrt{6}} \ \frac{-1}{\sqrt{6}} \ \frac{-1}{\sqrt{6}}]^T$ is the vector such that $\uv_0$, $\uv_1$, and $\uv_2$ form an orthonormal basis of $\mathbb{R}^3$. 
%Since the weighted perturbation vector $L_u$ has to be orthogonal to $\uv_0$, or equivalently, a linear combination of $\uv_1$ and $\uv_2$, we can see that the lengths of the output images of a weighted perturbation vector $L_u$ through $B_1$, $B_2$, and $B_3$ are the projection of the scaled version of $\tilde{L}_u = [\uv_1^T \ \uv_2^T]^T L_u$, as well as its rotated version with angle $\frac{2\pi}{3}$ and $\frac{4\pi}{3}$ to the horizontal axis. This is shown in Figure \ref{fig:windmill_1}.

%Note that the weighted perturbation vector $L_u$ has to be orthogonal to $\uv_0$, so the three DTMÕs $B_1$, $B_2$, and $B_3$ are effectively rotations of angle $0$, $2 \pi / 3$, $4 \pi / 3$, respectively, followed (left multiplied) by the scaled projection to the horizontal axis. Therefore, as shown in Figure \ref{fig:windmill_1}, the lengths of the output images of a weighted perturbation vector $L_u$ through $B_1$, $B_2$, and $B_3$ are the scaled projection of $\tilde{L}_u = [\uv_1^T \ \uv_2^T]^T L_u$, and its rotated version with angle $\frac{2\pi}{3}$ and $\frac{4\pi}{3}$, to the horizontal axis.  

Now if we pick a single input direction $\K$, it can be shown that for any $\K$ with $\| \K \|^2 = 1$, $\min \left\{ \| B_1 \K \|^2, \| B_2 \K \|^2, \| B_3 \K \|^2 \right\} \leq \frac{1}{4} \sigma_1^2$. To see that, for example if we simply take $\K= \uv_1$. The output squared norm $\norm{B_1 \K}^2$ is $\sigma_1^2$, but $\norm{B_2 \K}^2= \norm{B_3\K}^2 = \frac{1}{4} \sigma_1^2$. A variation from this choice would reduce either $\norm{B_2 \K}$ or $\norm{B_3 \K}$, hence further reduce the minimum. The point here is that the largest output $\norm{B_1 \K}$ is not used. 

For a better choice, we take ${\cal U}= \{0,1,2\}$, with $P_U(u) = 1/3$ for all $u$.  Write 
\begin{align*}
\phi_\theta = \cos \theta \cdot \uv_1 + \sin \theta \cdot \uv_2
\end{align*}
we pick for any $\theta_0$
\begin{align*}
\K_{0} = \phi_{\theta_0}, \qquad \K_{1} = \phi_{\theta_0+ \frac{2\pi}{3}}, \qquad \K_{2} = \phi_{\theta_0+ \frac{4\pi}{3}}
\end{align*}
This corresponds to choosing the conditional distributions $P_{X|U=u}(x) = P_X (x) + \epsilon \sqrt{P_X(x)} \K_u(x), \forall x,u$. It is easy to verify that 
\begin{align*}
\sum_u P_U(u) \norm{B_i \cdot \K_u}^2 = \frac{1}{2} \sigma_1^2
\end{align*}
for all $i=1,2,3$ and regardless of the value of $\theta_0$. 

Equivalently, we can turn this averaging over $U$ into an average over time. To do that, we consider a 3-letter solution to the problem, where we can keep $U$ to be binary, and set
\begin{align*}
P_{X^3|U} 
= (P_X \pm \epsilon \sqrt{P_X} \K_0) \otimes  (P_X \pm \epsilon \sqrt{P_X} \K_1) \otimes (P_X \pm \epsilon \sqrt{P_X} \K_2)  
\end{align*}
where the signs depend on the value of $U$. This corresponds to perturbations along the vector 
\begin{align*}
\K_0 \otimes \uv_0 \otimes \uv_0 + \uv_0 \otimes \K_1 \otimes \uv_0 + \uv_0 \otimes \uv_0 \otimes \K_2
\end{align*}
The point is that with a $K=3$ letter solution, we can limit the cardinality of $U$ to be not increasing with the number of receivers $K$.

%The problem here is that no matter which direction $L_{u}$ takes, the three output norms are unequal, and the minimal one always limits the performance. However, if we use $3$-letter input, and denote $\phi_{\theta}$ as the perturbation vector such that 
%$$[\uv_1^T \ \uv_2^T]^T \cdot \left[\sqrt{P_X}^{-1} \right] \cdot \phi_{\theta} = [ \cos \theta , \sin \theta ]^T,$$
%then we can take 
%$$P_{X^3| U = u} = (P_X + \epsilon \phi_{\theta}) \otimes (P_X + \epsilon \phi_{\theta + \frac{2 \pi}{3}}) \otimes (P_X + \epsilon \phi_{\theta + \frac{4 \pi}{3}})$$
% for any value of $\theta$, as shown in Figure \ref{fig:Windmill}. Intuitively, this input equalizes the three channels, and gives for all $i = 1, 2, 3$, $\| B^{(3)}_i L^{(3)}_{u} \|^2 = \frac{1}{3} \cdot (1 - 2\delta)^2$, which doubles the information coupling rate. 

Translating this solution to the coding language, the later solution can be interpretted as repeating the binar message $U$ in three time slots. This can be thought as feeding the common information in turn to three individual recievers. 

\end{example}

\begin{remark}
The above analysis can be generalized to general broadcast channels with $K$ receivers. In such problems, there are $2^K-1$ different types of messages, one to be decoded by a particular subset of $K' \leq K$ receivers. With the same argument that we separated the design of the private messages from the common message in Lemma \ref{prop:BC}, we can without loss of the optimality separately design the perturbation vectors for each of such messages, which is equivalent as a $K'$ receiver broadcast channel. The overall transmitted codeword is then the superposition of all such messages. It is then modulated on a sequence of i.i.d. $P_X$ symbols as described in section \ref{sec:CALC}. Each receiver will decode all the messages designated to him. Since all perturbations are local, even the order of decoding these messages can be arbitrary. 

This scheme above is by no means designed to achieve the capacity region. The local approximations are so crude that some critical issues in achieving the capacity, such as the order of decoding different messages, have no effect on the approximated performance, and thus can not be addressed with this approach. Moreover, some of observations, such as single-letter solutions are not generally optimal for $K>2$ receiver channels, have been reported \cite{CA09}, with more general terms. The geometric analysis, however, does help to reveal some issues in a very explicit way, and suggests new directions of designing coding schemes that might be even applicable to the non-local problems, such as using a larger (but finite) cardinality of $U$ to balance the performance between multiple users.  
\end{remark}

\section{The Multiple Access Channels with Common Sources} \label{sec:multiple}

In this section, we want to apply the local geometric approach to the multiple access channel (MAC), and study the corresponding linear information coupling problem. Specifically, we consider the set-up, where the transmitters can not only have the knowledge of their own private sources, but each subset of transmitters also share the knowledge of a common source. In particular, all these private and common sources are assumed to be independent with each other. 

The MAC with common sources is a celebrated information theory problem~\cite{TJ91},~\cite{TAM80}. The main challenge of investigating efficient transmission schemes for common messages lies on modeling the benefit of the collaboration between transmitters that shares common knowledge. This collaboration gain is well studied as the beam-forming gain for Gaussian additive channels; however, for general discrete memoryless MACs, there still lacks a systematic and simple approach to quantitively compute this gain. In this section, we aim to provide new perspective on understanding the transmitter collaboration gain via our local approach. To illustrate how our technique is applied to this problem, let us first consider the $2$-transmitter MAC with the common source.

Suppose that the $2$-transmitter multiple access channel has the inputs $X_1 \in \cX_1$, $X_2 \in \cX_2$, and the output $Y \in \cY$. The memoryless channel is specified by the channel matrix $W$, where $W (y | x_1, x_2 ) = P_{Y|X_1,X_2} (y|x_1,x_2)$ is the conditional distribution of the output signals. We want to communicate three messages $M_1$, $M_2$, and $M_0$ to the receiver $Y$ with rate $R_1$, $R_2$, and $R_0$, where $M_1$ and $M_2$ are privately observed by transmitters $1$ and $2$, respectively, and both transmitters have the common knowledge on $M_0$. Then, following the same arguments as the broadcast channel in section~\ref{sec:broadcast}, the single-letter version of the linear information coupling problem of the MAC is formulated as three sub-problems: the optimization problems for the private sources $M_1$ and $M_2$
\begin{align}
\label{eq:MAC_private}
\max. & \quad \frac{1}{n}  I(V_i ; Y)\\ \notag
\mbox{subject to:} & \quad V_i \rightarrow X_i \rightarrow Y , \frac{1}{n} I(V_i ; X_i) \leq \frac{1}{2} \epsilon^2_i, \\ \notag
& \quad \frac{1}{n} \| P_{X_i | V_i = v_i} - P_{X_i} \|^2 = O(\epsilon_i^2), \ \forall \ v_i.
\end{align}
for $i = 1,2$, and the optimization problem for the common source $M_0$
\begin{align}
\label{eq:MAC_single_public}
\max. & \quad \frac{1}{n}  I(U ; Y) \\ \notag
\mbox{subject to:} & \quad U \rightarrow (X_1, X_2) \rightarrow Y , \frac{1}{n} I(U ; X_1, X_2) \leq \frac{1}{2} \epsilon^2_0, \\ \notag
&  \frac{1}{n} \| P_{X_1, X_2 | U = u } - P_{X_1, X_2} \|^2 = O(\epsilon_0^2), \ \forall \ u.
\end{align}
Now, let us employ the notations $P_{X_i|V_i=v_i} = P_{X_i} +  \epsilon_i \cdot J_{v_i} $ and $P_{X_i|U=u} = P_{X_i} +  \epsilon \cdot J_{i,u} $, for $i = 1,2$, and let $\K$ denote the scaled perturbation vectors. Then, note that $P_{X_1, X_2 | U  } = P_{X_1|U} \cdot P_{X_2|U}$ since $U$ is the only common message shared by $X_1$ and $X_2$, the problems~\eqref{eq:MAC_private} and~\eqref{eq:MAC_single_public} can be simplified to local problems
\begin{align} \label{eq:local_private_MAC}
\max. \quad & \sum_{v_i} P_{V_i} (v_i) \cdot \left\| B_{i} \K_{v_i} \right\|^2 \\ \notag
\mbox{subject to:} \quad & \sum_{v_i} P_{V_i} (v_i) \cdot  \left\| \K_{v_i} \right\|^2 = 1,
\end{align}
for $i = 1,2$, and
\begin{align} \label{eq:local_common_MAC}
\max. \quad & \sum_{u} P_{U} (u) \cdot \left\| B_{1} \K_{1,u} + B_2 \K_{2,u} \right\|^2 \\ \notag
\mbox{subject to:} \quad & \sum_{u} P_{U} (u) \cdot \left(  \left\| \K_{1,u} \right\|^2 + \left\| \K_{2,u} \right\|^2 \right) = 1.
\end{align}
Here, the DTM $B_i$ is defined as $B_i \triangleq \left[ \sqrt{P_{Y}}^{-1} \right] W_i \left[ \sqrt{P_{X_i}} \right]$, for $i = 1,2$, with the channel matrix
\begin{align*} %\label{eq:Mac_marginal}
W_i ( y | x_i ) \triangleq \sum_{x_{3-i} \in \cX_{3-i}} W (y|x_1, x_2) P_{X_{3-i}} (x_{3-i}).
\end{align*}
%Moreover, $B_0$ represents the linear map for transmitting common message. This matrix can be computed as the concatenation $[B_1 \ B_2]$ by noting that $U$ is the only common message shared by $X_1$ and $X_2$, and $X_1$ and $X_2$ are independent given $U$.
\begin{remark}
Compare~\eqref{eq:MAC_private} and~\eqref{eq:MAC_single_public}, the main difference is that when optimizing~\eqref{eq:MAC_single_public}, the perturbation vector can be chosen around the joint distribution $P_{X_1, X_2}$. On the other hand, the problem~\eqref{eq:MAC_private} is optimized over a projected space, namely, the perturbation vector is only allowed to be designed around the marginal distribution $P_{X_i}$. Thus, the optimal solution of~\eqref{eq:MAC_single_public} is larger than~\eqref{eq:MAC_private} due to the more freedom of designing the perturbation vectors over a higher dimensional space. Moreover, the solution of~\eqref{eq:MAC_single_public} quantitively illustrates how benefit the collaboration between transmitters is. Therefore, our approach in fact provides a way to visualize the structure the collaboration gain.
\end{remark}

In general, the problems~\eqref{eq:MAC_private} and~\eqref{eq:MAC_single_public} shall be formulated as multi-letter problems. Simplifying to linear algebra problems, the multi-letter version of~\eqref{eq:MAC_private} is precisely the same as the point-to-point problem in section~\ref{sec:P2P}. On the other hand, the multi-letter version of~\eqref{eq:MAC_single_public}, simplified by taking $U$ as Bernoulli(1/2) random variable\footnote{Following the same argument as the point-to-point case, this choice of $U$ is indeed without loss of the optimality.}, can be written as
\begin{align} \label{eq:multi_local_common_MAC}
\max_{ \left\| \K_{1,u} \right\|^2 + \left\| \K_{2,u} \right\|^2 = 1} \left\| B_{1}^{(n)} \cdot \K_{1,u} + B_2^{(n)} \cdot \K_{2,u} \right\|^2.
\end{align} 
In addition, for both $i = 1,2$, $\K_{i,u}$ has to be orthogonal to $\left[ \sqrt{P_{X_i}}, x_i \in \cX_i \right]^T$ to guarantee that $P_{X_i|U}$ is a valid probability distribution. This constraint is slightly stronger than the corresponding one in the point-to-point case. Therefore, the verification of the single-letter optimality of~\eqref{eq:multi_local_common_MAC} is carried out in two steps: 1) we need to show that the second largest singular value of multi-letter linear map $B_{0,n} = [B_1^{(n)} \ B_2^{(n)}]$ is the same as the corresponding single-letter one; 2) we need to demonstrate that the right singular vector of $B_{0,1}$ w.r.t. the second largest singular value satisfies the stronger orthogonality constraint. The following theorem~\ref{thm:MAC} summarizes these two steps for a more general $K$-user case, and concludes that~\eqref{eq:multi_local_common_MAC} is in fact single-letter optimal.

\begin{thm} \label{thm:MAC}
For a multiple access channel with transmitters $1,2, \ldots , k$, let $B_i$ be the corresponding DTM's. Then, the second largest singular value of $B_{0,n} = \left[ B_1^{(n)} \ \ldots \ B_k^{(n)} \right]$ is the same as $B_0 \triangleq B_{0,1}$. Moreover, let $\K_{U=u} = \left[ \K_1^T \ \ldots \ \K_k^T \right]^T$ be the singular vector of $B_0$ with the second largest singular value, where $\K_i$ is an $| \cX_i |$-dimensional vector. Then, $\K_i$ is orthogonal to $\left[ \sqrt{P_{X_i}}, x_i \in \cX_i \right]^T$, for all $1 \leq i \leq k$.
\end{thm}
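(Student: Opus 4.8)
The plan is to reduce everything to the spectral theory of the Gram matrix $M_n \triangleq B_{0,n} B_{0,n}^T$. By the mixed‑product rule for Kronecker products, $M_n = \sum_{i=1}^k \bigl(B_i^{(n)}\bigr)\bigl(B_i^{(n)}\bigr)^T = \sum_{i=1}^k (B_i B_i^T)^{(n)}$. Write $A_i \triangleq B_i B_i^T$ and $\uw_0 = [\sqrt{P_Y(y)},\, y\in\cY]^T$. Since the reduced channels $W_1,\dots,W_k$ all induce the same output distribution $P_Y$, each $B_i$ is a legitimate DTM with output marginal $P_Y$, so the point-to-point Lemma gives $A_i \uw_0 = \uw_0$, $0 \preceq A_i \preceq I$, and $B_i \uv_{i,0} = \uw_0$, where $\uv_{i,0} = [\sqrt{P_{X_i}(x_i)},\, x_i\in\cX_i]^T$. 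In particular each $A_i$ leaves invariant $V_0 \triangleq \mathrm{span}(\uw_0)$ and $V_1 \triangleq V_0^{\perp}$; let $\tilde A_i$ be the restriction of $A_i$ to $V_1$, which is again PSD with $\tilde A_i \preceq I$.

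For the first assertion I would decompose $(\mathbb{R}^{|\cY|})^{\otimes n} = \bigoplus_{S\subseteq\{1,\dots,n\}} H_S$, where $H_S \cong V_1^{\otimes|S|}$ is the subspace using $V_1$ in the coordinates of $S$ and $V_0$ elsewhere. Because each $A_i$ acts as the identity on $V_0$ and as $\tilde A_i$ on $V_1$, the operator $M_n$ preserves every $H_S$ and acts there as $\sum_i \tilde A_i^{\otimes|S|}$. On $H_\emptyset$ this is the scalar $k$ (the direction $\uw_0^{(n)}$, which is the invalid perturbation direction). On $H_S$ with $|S| = m \geq 1$ I would invoke the positive-semidefinite bound $\tilde A_i^{\otimes m} \preceq \tilde A_i \otimes I^{\otimes(m-1)}$ — valid since $\tilde A_i \succeq 0$, $I - \tilde A_i^{\otimes(m-1)} \succeq 0$ (the eigenvalues of $\tilde A_i$ lie in $[0,1]$), and the Kronecker product of PSD matrices is PSD — and sum over $i$ to get $\sum_i \tilde A_i^{\otimes m} \preceq \bigl(\sum_i \tilde A_i\bigr)\otimes I^{\otimes(m-1)}$. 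Hence every eigenvalue of $M_n$ coming from a block with $m\geq 1$ is at most $\lambda_{\max}(\sum_i \tilde A_i)$, and this value is attained on any block with $|S| = 1$, where $M_n$ acts exactly as $\sum_i \tilde A_i$. Since $\lambda_{\max}(\sum_i \tilde A_i) \leq k$, the second largest eigenvalue of $M_n$ equals $\lambda_{\max}(\sum_i \tilde A_i)$, which for $n=1$ is the second largest eigenvalue of $M_1 = B_0 B_0^T$; taking square roots proves that the second largest singular value of $B_{0,n}$ equals that of $B_0$.

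For the second assertion I would argue on the singular vectors of $B_0 = [B_1\ \cdots\ B_k]$ directly. The largest singular value of $B_0$ is $\sqrt{k}$ with left singular vector $\uw_0$ (from $M_1\uw_0 = k\uw_0$). Let $\sigma_1 > 0$ be the second largest singular value (the case of interest), take the corresponding left singular vector $\uw_1$ orthogonal to $\uw_0$, and let $\K = [\K_1^T\ \cdots\ \K_k^T]^T$ be the matching right singular vector, so $B_i^T \uw_1 = \sigma_1 \K_i$ for every $i$. Then $\sigma_1\langle \K_i, \uv_{i,0}\rangle = \langle B_i^T\uw_1, \uv_{i,0}\rangle = \langle \uw_1, B_i\uv_{i,0}\rangle = \langle \uw_1, \uw_0\rangle = 0$, using $B_i\uv_{i,0} = \uw_0$; since $\sigma_1 > 0$ this gives $\K_i \perp \uv_{i,0}$ for each $i$ separately, which is exactly the stated (stronger-than-the-sum) orthogonality.

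I expect the main obstacle to be the per-block estimate in the first part: the obvious route, bounding $\lambda_{\max}\bigl(\sum_i \tilde A_i^{\otimes m}\bigr) \leq \sum_i \lambda_{\max}(\tilde A_i)^m$, goes the wrong way, because what is needed is a bound by $\lambda_{\max}(\sum_i \tilde A_i)$, which is generally strictly smaller than $\sum_i \lambda_{\max}(\tilde A_i)$. The fix is to pass to the PSD ordering $\tilde A_i^{\otimes m} \preceq \tilde A_i \otimes I^{\otimes(m-1)}$ \emph{before} summing over $i$, after which the tensorization falls out; everything else is routine bookkeeping with Kronecker products and the single-letter DTM structure already established.
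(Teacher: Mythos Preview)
Your proof is correct and takes a genuinely different route from the paper's. For the tensorization, the paper proceeds by a doubling argument: it first treats $n=2$ by observing that $B_{0,2}B_{0,2}^T=\sum_i A_i\otimes A_i \preceq \bigl(\sum_i A_i\bigr)\otimes\bigl(\sum_i A_i\bigr)=(B_0B_0^T)^{\otimes 2}$ (adding the nonnegative cross terms $A_i\otimes A_j$), which on the $V_1\otimes V_1$ block bounds eigenvalues by $\sigma_{1,1}^4$, and then uses $\sigma_{1,1}^4\le\sigma_{1,1}^2$; it extends to $n=2^N$ by iteration and fills in general $n$ by monotonicity of $\sigma_{1,n}$. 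Your block decomposition together with the PSD ordering $\tilde A_i^{\otimes m}\preceq \tilde A_i\otimes I^{\otimes(m-1)}$ handles every $n$ in one stroke and, notably, never invokes $\sigma_{1,1}\le 1$ (which in the paper is implicit and really follows only \emph{after} the orthogonality statement, via data processing). For the orthogonality, the paper argues variationally: if some $\langle \K_i,\uv_{i,0}\rangle\neq 0$, projecting those components out and renormalizing strictly increases $\|B_0\K\|$ (using $\sum_i\langle \K_i,\uv_{i,0}\rangle=0$ and $B_i\uv_{i,0}=\uw_0$), contradicting maximality. Your one-line computation from $B_i^T\uw_1=\sigma_1\K_i$ and $B_i\uv_{i,0}=\uw_0$ is more direct and makes the role of the \emph{left} singular vector transparent. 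Net: both routes rest on the same single-letter DTM facts, but yours is shorter, uniform in $n$, and cleanly decouples the two claims.
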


\begin{example} \label{example:MAC}
Consider the binary adder channel as shown in Figure \ref{fig:example3}, where $X_1$ and $X_2$ are both binary inputs, and the tenary output $Y = X_1 + X_2$ (the arithmetic addition, not modulo $2$). The empirical distribution of both $P_{X_1}$ and $P_{X_2}$ are fixed as $[\frac{1}{2} \ \frac{1}{2}]^T$, and the corresponding output distribution is $[\frac{1}{4} \ \frac{1}{2} \ \frac{1}{4}]^T$. The DTM's for transmitter $1$ and $2$ are
\begin{align*}
B_1 = B_2 = \left[
      \begin{array}{cccc}
        \frac{1}{\sqrt{2}} & 0 \\
        \frac{1}{2} & \frac{1}{2} \\
        0 & \frac{1}{\sqrt{2}} \\
      \end{array}
    \right].
\end{align*}
Thus, the DTM for the common source is
\begin{align*}
B_0 = \left[
      \begin{array}{cccc}
        \frac{1}{\sqrt{2}} & 0 & \frac{1}{\sqrt{2}} & 0 \\
        \frac{1}{2} & \frac{1}{2} & \frac{1}{2} & \frac{1}{2} \\
        0 & \frac{1}{\sqrt{2}} & 0 & \frac{1}{\sqrt{2}} \\
      \end{array}
    \right].
\end{align*}
The second largest singular value of $B_0$ is $1$ with right singular vector $[\frac{1}{2} \ \frac{-1}{2} \ \frac{1}{2} \ \frac{-1}{2}]^T$. In comparison, the second largest singular of both $B_1$ and $B_2$ are $\frac{1}{\sqrt{2}}$ with right singular vector $[\frac{1}{\sqrt{2}} \ \frac{-1}{\sqrt{2}}]^T$. Therefore, there is a $3$dB \emph{coherent combining gain} that arises from the cooperation between transmitters due to their common knowledge.
\end{example}

\begin{figure}
\centering
\def\svgwidth{0.4\columnwidth}
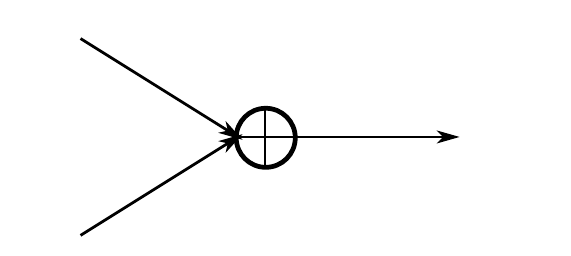
\caption{The binary addition channel with two binary inputs $X_1$ and $X_2$, and the ternary output $Y = X_1 + X_2$.}
\label{fig:example3}
\end{figure}

\begin{remark}
It is straight forward to extend the analysis in this section to $K$-user MAC, in which there are $2^K-1$ different types of sources, and each source is accessible by a particular subset of transmitters. With a similar argument as the broadcast channel, the optimal perturbation vectors for transmitting each type of message are separately designed, and the corresponding subset of transmitters cooperatively modulate that source into the input symbols. In particular, the cooperation between transmitters increases the efficiency of information transmission, which provides a coherent combing gain, or also called as the beam-forming gain. This gain is quantitively reflected from the larger singular value of the DTM of the common source as demonstrated in example~\ref{example:MAC}. Finally, the overall transmitted codeword is the superposition of all such sources. 
\end{remark}

\section{Conclusion} \label{sec:con}

In this paper, we developed a local approximation approach, which approximates the K-L divergence by a squared Euclidean metric in an Euclidean space. Under the local approximation, we constructed the coordinates and the notion of orthogonality for the probability distribution space. With this approach, we can solve a certain class of information theory problems, which we call the linear information coupling problems, for different types communication channels. We also showed that the single-letterization of multi-terminal problems can be simplified as some linear algebra problems, which can be solved in a systematic way. Moreover, applying our approach to the general broadcast channels, the transmission of the common message can be formulated as the trade-off between multiple linear systems. In order to achieve the optimal trade-off, it is required to code over multiple letters with the number of letters proportional to the number of receivers. Finally, for the multiple access channel with common sources, there exists some coherent combing gains due to the cooperation between transmitters, and we can evaluate this gain quantitively by using our technique.

The development of this paper can be extended to more general multi-terminal communication networks, such as layered networks. Thus, our approach can be considered as a useful tool to study efficient information flow in networks. Moreover, the coding insights obtained in this paper can be easily carried to designing efficient network communication channel codes. This provides interesting directions for future researches.

%Our approach in fact quantitively provides the cost of generating common information, and characterize how beneficial the common source can be. Therefore, with our approach, one can formulate a new type of information flow optimization problem. The solutions of these problem can tell that for which set of receivers we should generate the common information, and which set of transmitters should cooperate with each other to optimize the network throughput. This potentially provides much insights in designing network systems and can be interesting future research directions.

\appendices

\section{Proof of Lemma~\ref{prop:BC}} \label{app:lem3}

%\begin{proof}

Since $U$, $V_1$, and $V_2$ are mutually independent, we have
\begin{align*}
\notag
&I( U, V_1 , V_2; \uX ) \\
&= I(U ; \uX) + I(V_1 ; \uX | U) + I(V_2 ; \uX | U, V_1) \\ 
&\geq I(U ; \uX) + I(V_1 ; \uX) + I(V_2 ; \uX).
\end{align*}
Thus, the rate region \eqref{LICP-BC1} is belong to \eqref{LICP-BC2}. On the other hand, for any rate tuple $(R_0, R_1, R_2)$ in (\ref{LICP-BC2}) achieved by some mutually independent $U$, $V_1$, and $V_2$, with $\frac{1}{n} I(U;\uX) \leq \frac{1}{2} \epsilon_0^2$, and $\frac{1}{n} I(V_i;\uX) \leq \frac{1}{2} \epsilon_i^2$ for $i = 1,2$, where $\sum_{i=1}^{3} \epsilon_i^2 = \epsilon^2$, we assume that the conditional distributions achieving this rate tuple have the perturbation forms $P_{\uX | U = u} = P_{\uX} + \sqrt{n} \epsilon_0 \cdot J_{u}$, and $P_{\uX | V_i = v_i} = P_{\uX} + \sqrt{n} \epsilon_i \cdot J_{v_i}$, for $i = 1,2$. Then, it is easy to verify that
\begin{align*}
Q_{\uX | ( U , V_1 , V_2 ) = ( u , v_1 , v_2 ) }= P_{\uX} + \sqrt{n} \epsilon_0 J_{u} + \sqrt{n} \epsilon_1 J_{v_1} + \sqrt{n} \epsilon_2 J_{v_2}
\end{align*}
is a valid conditional distribution with marginal $P_{\uX}$. Therefore, using $Q_{\uX | ( U , V_1 , V_2 ) = ( u , v_1 , v_2 ) }$ as the conditional distribution, the mutual information
\begin{align} \notag
&\frac{1}{n} I( U , V_1 , V_2 ; \uX ) \\ \notag
&= \frac{1}{2} \sum_{u,v_1,v_2} P_U(u) P_{V_1}(v_1) P_{V_2}(v_2)  \cdot \left\| \epsilon_0 J_{u} + \epsilon_1 J_{v_1} + \epsilon_2 J_{v_2} \right\|_{P_X}^2 + o(\epsilon^2) \\ \label{eq:123}
&\leq \frac{1}{2} \left( \epsilon_0^2 +  \epsilon_1^2 +  \epsilon_2^2 \right) + o(\epsilon^2) = \frac{1}{2} \epsilon^2 + o(\epsilon^2),
\end{align}
where \eqref{eq:123} is resulted from the definition of the perturbation vectors:
\begin{align*}
\sum_u P_U(u)  J_{u} = 0, \quad \sum_{v_i} P_{V_i} (v_i)  J_{v_i}  = 0, \ \mbox{for} \ i = 1,2, 
\end{align*}
and
\begin{align*}
&\sum_u P_U(u) \left\|  J_{u}  \right\|_{P_X}^2 \leq 1,\\  
&\sum_{v_i} P_{V_i} (v_i) \left\|  J_{v_i}  \right\|_{P_X}^2 \leq 1, \ \mbox{for} \ i = 1,2. 
\end{align*}
Hence, we can take $Q_{\uX | ( U , V_1 , V_2 ) = ( u , v_1 , v_2 ) }$ as the conditional distribution in \eqref{LICP-BC1}, and obtain a rate tuple that is equal to $(R_0, R_1, R_2)$, up to the first order approximation.
%\end{proof}

\section{Proof of Theorem \ref{thm:BC}} \label{ap:proof 1}

%\begin{thm}
%Let $B_i$ be the DTM of some DMC with respect to the same input distributions for $i = 1,2, \ldots , k$, then the linear information coupling problem 
%\begin{align} \label{eq:k-BC}
%\max_{L_{u} : \| L_{u} \|^2 = 1}  \min_{1 \leq i \leq k} \left\{ \left\| B_i^{(n)} L_{u}  \right\|^2 \right\},
%\end{align}
%has optimal single letter solutions for the case with $k = 2$ receivers. In general, when there are $k > 2$ receivers, single letter solutions can not be optimal, when the cardinality $|{\cal U}|$ is bounded by some function of $|\cX|$. However, there still exists $k$-letter solutions that are optimal.
%\end{thm}

%\begin{proof} [proof (the existence of $k$-letter solutions)]

The part a) and b) of Theorem~\ref{thm:BC} are directly followed from the following lemma.
\begin{lemma} [Optimal solutions have product forms] \label{lem:pf}
For any $n\geq K$, there exist $| \cX |$-dimensional vectors $\uv^*_1, \uv^*_2, \ldots , \uv^*_K$, which satisfy $\sum_{u = 1}^K \| \uv^*_u \|^2 = 1$, and $\langle \uv_0 , \uv^*_i \rangle = 0$, for $u = 1,2, \ldots , K$, such that 
\begin{align} \label{lemma}
\K_u = \uv_0^{(n-K)} \otimes  \left( \uv_0^{(u-1)} \otimes \uv_u^* \otimes \uv_0^{(K - u)} \right),
\end{align} 
and $|\cU| = K$ with $P_U(u) = 1/K$ is an optimal solution of~\eqref{eq:k-BC}. Here, $\uv_0^{(\cdot)}$ is the Kronecker product of $\uv_0$.  %In other words, the optimal solutions of (\ref{eq:ap1}) can be described with the ``product form".
\end{lemma}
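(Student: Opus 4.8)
The plan is to prove Lemma~\ref{lem:pf} directly (parts (a) and (b) of Theorem~\ref{thm:BC} then follow), in two movements: an upper bound showing that no feasible $(P_U,\{\K_u\})$ beats a certain single-letter quantity, and a matching construction of the product form \eqref{lemma} with $|\cU|\le K$ and equi-probable $U$. The starting point is a block decomposition adapted to the common top singular vector. Writing $H=\uv_0^\perp\subset\mathbb{R}^{|\cX|}$, the hypothesis of Theorem~\ref{thm:BC} (that $\uv_0$ is the common top right singular vector of every $B_i$ with singular value $1$ and common left singular vector, say $\uw_0$) gives a block form $B_i = 1\oplus\tilde B_i$ with respect to $\mathbb{R}\uv_0\oplus H$, where $\tilde B_i\colon H\to\uw_0^\perp$ and $A_i:=\tilde B_i^{\,T}\tilde B_i$ satisfies $0\preceq A_i\preceq I$ (the last inequality because every singular value of $B_i$ is $\le1$). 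Tensoring, $\mathbb{R}^{|\cX|^n}=\bigoplus_{S\subseteq[n]}E_S$ with $E_S=\bigotimes_{\ell\in S}H\otimes\bigotimes_{\ell\notin S}\mathbb{R}\uv_0$; each $E_S$ is invariant under $(B_i^{(n)})^TB_i^{(n)}=(1\oplus A_i)^{\otimes n}$, on which it acts (under $E_S\cong H^{\otimes|S|}$) as $A_i^{\otimes|S|}$, and the validity constraint $\langle\K_u,\uv_0^{(n)}\rangle=0$ kills exactly the $E_\emptyset$ component.

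For the upper bound I would establish the L\"owner-order domination $\sum_{i=1}^K\alpha_i A_i^{\otimes k}\preceq M_\alpha\otimes I^{\otimes(k-1)}\preceq\lambda_{\max}(M_\alpha)\,I^{\otimes k}$ for every probability vector $\alpha$ and every $k\ge1$, where $M_\alpha=\sum_i\alpha_i A_i$; this is immediate from the elementary fact that $A_i\succeq0$ together with $A_i^{\otimes(k-1)}\preceq I^{\otimes(k-1)}$ forces $A_i^{\otimes k}=A_i\otimes A_i^{\otimes(k-1)}\preceq A_i\otimes I^{\otimes(k-1)}$. Decomposing $\K=\sum_{S}\K_S$ with $\K_\emptyset=0$ then yields $\sum_i\alpha_i\norm{B_i^{(n)}\K}^2=\sum_{S\neq\emptyset}\langle\K_S,(\sum_i\alpha_iA_i^{\otimes|S|})\K_S\rangle\le\lambda_{\max}(M_\alpha)\norm{\K}^2$, so for any feasible point $\min_i\sum_uP_U(u)\norm{B_i^{(n)}\K_u}^2\le\sum_uP_U(u)\sum_i\alpha_i\norm{B_i^{(n)}\K_u}^2\le\lambda_{\max}(M_\alpha)$; minimizing over $\alpha$ gives $\lambda^{(n)}\le\min_\alpha\lambda_{\max}(M_\alpha)=:v^\ast$. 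The conceptual content is that steering the perturbation along higher-order (non-product) tensor directions, or splitting it across several coordinates, can never help: only the $E_{\{\ell\}}$ components matter, and those reduce to the single-letter operators $\tilde B_i$.

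For the matching construction I would apply von Neumann's minimax theorem to the bilinear form $(\rho,\alpha)\mapsto\sum_i\alpha_i\,\mathrm{tr}(A_i\rho)$ over the compact convex sets of density matrices $\rho$ on $H$ and probability vectors $\alpha$, obtaining $v^\ast=\min_\alpha\lambda_{\max}(M_\alpha)=\max_\rho\min_i\mathrm{tr}(A_i\rho)$, attained at some $\rho^\ast$. By the Fenchel--Bunt strengthening of Carath\'eodory's theorem applied to the connected compact set $\{(\norm{\tilde B_1 w}^2,\dots,\norm{\tilde B_K w}^2):w\in H,\ \norm{w}=1\}\subset\mathbb{R}^K$, the point $(\mathrm{tr}(A_i\rho^\ast))_i$ is a convex combination of at most $K$ of its elements, equivalently $K\rho^\ast=\sum_{u=1}^K v_uv_u^T$ for some $v_u\in H$ (padding with zero vectors if fewer than $K$ suffice). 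Taking $\uv^\ast_u:=v_u$, $\K_u:=\uv_0^{(n-K)}\otimes(\uv_0^{(u-1)}\otimes\uv^\ast_u\otimes\uv_0^{(K-u)})$ (which uses exactly the hypothesis $n\ge K$), and $P_U$ uniform on $\{1,\dots,K\}$, one checks $\langle\K_u,\uv_0^{(n)}\rangle=0$, $\sum_uP_U(u)\norm{\K_u}^2=\tfrac1K\sum_u\norm{v_u}^2=\mathrm{tr}(\rho^\ast)=1$, and $\sum_uP_U(u)\norm{B_i^{(n)}\K_u}^2=\tfrac1K\sum_u\langle v_u,A_iv_u\rangle=\mathrm{tr}(A_i\rho^\ast)\ge v^\ast$ for every $i$; hence this product-form, equi-probable solution attains $v^\ast=\lambda^{(n)}$, proving (a) and (b). (The normalization produced here is $\sum_u\norm{\uv^\ast_u}^2=K$ rather than $1$, a harmless rescaling of the statement.)

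The main obstacle is the upper-bound step, i.e.\ the L\"owner domination $\sum_i\alpha_iA_i^{\otimes k}\preceq\lambda_{\max}(M_\alpha)I^{\otimes k}$: everything hinges on recognizing that the positivity $A_i\succeq0$ and the contractivity $A_i\preceq I$ combine to make the higher-order blocks harmless, so that the multi-letter tradeoff collapses onto the single-letter one. Once that is in place, the minimax duality and the Carath\'eodory/Fenchel--Bunt count that pins $|\cU|$ at $K$ are comparatively routine, as is lifting the resulting single-letter optimizers back to the $n$-letter product form, which is precisely where the hypothesis $n\ge K$ is consumed.
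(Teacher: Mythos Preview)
Your argument is correct and takes a genuinely different route from the paper's. The paper fixes the SVD bases of $B_1$ and $B_2$, expands an arbitrary feasible $\K'$ in those coordinates, and by explicit index bookkeeping upper-bounds each $\|B_i^{(n)}\K'\|^2$ by a sum of single-letter quantities $\sum_j\|\Sigma\,\ual_j\|^2$ and $\sum_j\|\Omega\Phi^T\ual_j\|^2$; the cardinality bound $|\cU|\le K$ is then obtained from a separately proved dimension-counting lemma (Lemma~\ref{lem:dim_count} and Corollary~\ref{cor:1}), established via a perturbation argument on the nonzero entries of the optimizer. Your approach replaces all of this machinery with three clean steps: the L\"owner inequality $\sum_i\alpha_iA_i^{\otimes k}\preceq\lambda_{\max}(M_\alpha)I^{\otimes k}$, the minimax identity $\min_\alpha\lambda_{\max}(M_\alpha)=\max_\rho\min_i\operatorname{tr}(A_i\rho)$, and Fenchel--Bunt applied to the joint numerical range. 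The gain is a basis-free, conceptually transparent proof that also makes the tensorization $\lambda^{(n)}=\lambda^{(1)}$ immediate; the cost is importing heavier convex-analytic tools, which the paper avoids by staying in explicit coordinates. One small slip in your write-up: Fenchel--Bunt only gives that the \emph{vector} $(\operatorname{tr}(A_i\rho^\ast))_i$ is a $K$-term convex combination of points of the joint numerical range, not the matrix identity $K\rho^\ast=\sum_u v_uv_u^T$; but your subsequent verification only uses the former, so the argument stands. The normalization discrepancy you flag ($\sum_u\|\uv_u^\ast\|^2=K$ versus $1$) is real and is also present in the paper's own statement; it is indeed a harmless rescaling.
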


In order to prove lemma~\ref{lem:pf}, we need to first establish the following lemma~\ref{lem:dim_count}, which illustrates the required degree of freedom for describing the optimal tradeoff between multiple linear systems.

\begin{lemma} \label{lem:dim_count}
%Defining the ``square'' of a vector $\ux = \left[ x_1 \ x_2 \ \ldots \ x_m \right]^T$ as $\ux^{\{2\}} = \left[ x^2_1 \ x^2_2 \ \ldots \ x^2_m \right]^T$. 
Assume that $\Theta_i = \diag \left\{ \theta_{i,1} , \theta_{i,2} , \ldots , \theta_{i,M} \right\}$, for $i = 1,2, \ldots , k$, if the optimization problem
%If the vector $\uc^* = \left[ c^*_1 \ c^*_2 \ \ldots \ c^*_m \right]^T$, with $l>k$ nonzero entries, is one global optimal of the optimization problem
\begin{align} \label{eq:lemma_6}
%\max \sum_{i}^n y_i^2 c_i^2, \  {\rm subject \ to} \sum_{i}^n x_i^2 c_i^2 = X_0, \ {\rm and} \ \sum_{i}^n c_i^2 = 1,
\max. \quad &\| \Theta_k \cdot \uc \|^2 \\ \notag
\ {\rm subject \ to}: \quad &\| \uc \|^2 = 1, \  {\rm and} \  \| \Theta_i \cdot \uc \|^2 = \lambda_i, \\ \notag
&{\rm for} \ i = 1,2,\ldots, k-1,
\end{align}
has global optimal solutions, then there exists a global optimal solution $\uc^* = \left[ c^*_1 \ c^*_2 \ \ldots \ c^*_M \right]^T$ with at most $k$ nonzero entries.
%a vector $\left[ \tilde{c}_1^2 , \tilde{c}_2^2 , \ldots , \tilde{c}_n^2 \right]^T$ with at most $l-1$ nonzero entries, %which satisfies the constraints of the optimization problem, and $\max | \langle \uth_k , \uc \rangle |^2$
%\begin{align*}
%\max \sum_{i}^n y_i^2 \tilde{c}_i^2 = \max \sum_{i}^n y_i^2 \tilde{c}_i^2 .
%\end{align*}
\end{lemma}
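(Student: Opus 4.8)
The plan is to linearize the problem via the substitution $y_j = c_j^2$. Since every $\Theta_i$ is diagonal, each quantity appearing in \eqref{eq:lemma_6} depends on $\uc$ only through the nonnegative vector $\uy = (c_1^2, c_2^2, \ldots, c_M^2)$: the objective is $\|\Theta_k \uc\|^2 = \sum_j \theta_{k,j}^2\, c_j^2$, the norm is $\|\uc\|^2 = \sum_j c_j^2$, and each constraint reads $\|\Theta_i \uc\|^2 = \sum_j \theta_{i,j}^2\, c_j^2$. Hence \eqref{eq:lemma_6} is equivalent to the linear program
\begin{align*}
\mbox{max.} \quad & \sum_{j=1}^M \theta_{k,j}^2 \, y_j \\
\mbox{subject to:} \quad & \sum_{j=1}^M y_j = 1, \qquad \sum_{j=1}^M \theta_{i,j}^2 \, y_j = \lambda_i, \quad i = 1, \ldots, k-1, \\
& y_j \geq 0, \quad j = 1, \ldots, M,
\end{align*}
in the sense that $\uc \mapsto \uy$ sends feasible points of \eqref{eq:lemma_6} to feasible points of the LP while preserving the objective value, and conversely $\uy \mapsto (\sqrt{y_1}, \ldots, \sqrt{y_M})$ does the same in the other direction. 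In particular, the hypothesis that \eqref{eq:lemma_6} attains its maximum guarantees that this LP is feasible and that its optimum is attained.

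Next I would invoke the fundamental theorem of linear programming: a feasible LP whose optimum is attained has an optimal solution at a vertex (basic feasible solution) of its feasible polyhedron. Here the polyhedron is cut out by the $k$ linear equalities (the normalization together with the $k-1$ constraints $\sum_j \theta_{i,j}^2 y_j = \lambda_i$) and the nonnegativity orthant, so any basic feasible solution has at most $k$ nonzero coordinates --- its nonzero coordinates index a linearly independent set of columns of the $k \times M$ equality-constraint matrix. Let $\uy^*$ be such an optimal vertex, and set $c_j^* = \sqrt{y_j^*}$ (with any choice of signs). Then $\uc^* = [c_1^* \ \cdots \ c_M^*]^T$ is feasible for \eqref{eq:lemma_6}, achieves the maximum, and has at most $k$ nonzero entries, which is exactly the claim.

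The argument is essentially routine once the reduction is in place; the only point requiring a bit of care is the bookkeeping of constraints --- one must count the normalization $\|\uc\|^2 = 1$ among the $k$ equality constraints of the LP, so that the support bound comes out to $k$ rather than $k+1$ --- together with the observation that the (nonconvex) feasible set of \eqref{eq:lemma_6} is genuinely in objective-preserving correspondence with the LP polyhedron, so that attainment of the optimum transfers in both directions. There is no real obstacle beyond recognizing that diagonality collapses all the quadratic forms to linear functionals of the entrywise square of $\uc$; equivalently, one could phrase the same step as a Carath\'eodory/extreme-point argument on the convex feasible region in $\uy$-space.
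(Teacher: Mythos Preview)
Your proof is correct. The reduction $y_j=c_j^2$ is exactly the right move, and once everything is linear in $\uy$ the bound on the support follows immediately from the standard vertex/basic-feasible-solution structure of the resulting LP with $k$ equality constraints.

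The paper's proof uses the same underlying observation but carries it out by hand rather than citing LP theory. It takes an optimal $\uc^*$ with minimal support $l$, assumes $l>k$, and then finds a direction $\ud$ in the intersection of the null spaces of the $k$ constraint vectors $(1,\ldots,1)$ and $(\theta_{i,1}^2,\ldots,\theta_{i,l}^2)$, $i=1,\ldots,k-1$, restricted to the support. Because $\uc^*$ is optimal, the objective gradient is also orthogonal to $\ud$, so perturbing $c_j^{*2}\mapsto c_j^{*2}+t d_j$ preserves feasibility \emph{and} the objective value; pushing $t$ until a coordinate vanishes contradicts minimality of $l$. This is precisely the textbook proof that an LP optimum is attained at a vertex, specialized to this instance.

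So the two arguments are equivalent in spirit; yours is cleaner and more recognizable (it is the Carath\'eodory/extreme-point viewpoint you mention), while the paper's version is more self-contained and avoids any appeal to LP as a black box.
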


\begin{proof}  [Proof of Lemma \ref{lem:dim_count}]
Let us assume that $\uc^* = \left[ c^*_1 \ c^*_2 \ \ldots \ c^*_M \right]^T$ is the global optimal solution of \eqref{eq:lemma_6} with the least number of nonzero entries $l$. If $l > k$, then without loss of generality, we can assume that $c_i \neq 0$ for $i \leq l$, and $c_i = 0$ for $i > l$. For $i = 0, 1, \ldots , k$, define the vector $\uth_{i,l} = \left[ \theta_{i,1}^2 \ \theta_{i,2}^2 \ \ldots \ \theta_{i,l}^2 \right]^T \in \mathbb{R}^l$, and $\uth_{0,l} = \left[ 1 \ \ldots \ 1 \right]^T \in \mathbb{R}^l$, then the null space $\mbox{Null} \left( \uth_{i,l} \right) \subset \mathbb{R}^l$ has dimension at least $l-1$, and the space of the intersecting of null spaces $\cup_{i = 0}^{k-1} \mbox{Null} \left( \uth_{i,l} \right)$ has dimension at least $l - k >0$. Let the nonzero vector $\ud = \left[ d_1 \ d_2 \ \ldots \ d_l \right]^T \in \cup_{i = 0}^{k-1} \mbox{Null} \left( \uth_{i,l} \right)$, and $\uc_t = \left[ c_{1,t} \ \ldots \ c_{l,t} \ 0 \ \ldots \ 0 \right]^T \in \mathbb{R}^m$, where $c_{i,t} = \sqrt{c_i^{*2} + t \cdot d_i}$. Since $\uc^*$ is a global optimal, we have ${\partial \| \Theta_k \cdot \uc_t \|^2 \over \partial t} = 0$, which implies that $\langle \uth_{k,l} , \ud \rangle = 0$. Thus, if we take $t^* \triangleq \max \left\{ t :  c_i^{*2} + t \cdot d_i \geq 0, \ \forall 1 \leq i \leq l  \right\}$, then $\uc_{t^*}$ is a global optimal solution with at most $l-1$ nonzero entries, which contradicts to the assumption of $\uc^*$. Therefore, $\uc^*$ has at most $k$ nonzero entries.
\end{proof}

%Lemma \ref{lem:dim_count} essentially implies that there exist optimal solutions of the problem
%\begin{align*}
%\max_{\uc \in \mathbb{R}^{n+2} : \| \uc \|^2 = 1} \min_{1\leq i \leq k} \left\{  \| \Theta_i \cdot \uc \|^2  \right\}
%\end{align*}
%with at most $k$ nonzero entries. Therefore, it requires only $k$ degree of freedom to describe an optimal solution of the tradeoff between $k$ linear systems

Lemma~\ref{lem:dim_count} immediately implies the following Corollary.

\begin{thmcorol} \label{cor:1}
Assume that $\Theta_i = \diag \left\{ \theta_{i,1} , \theta_{i,2} , \ldots , \theta_{i,M} \right\}$, for $i = 1,2, \ldots , k$, then the optimization problem
\begin{align}
\max_{ \| \uc \|^2 = 1} \min_{1 \leq i \leq k} \left\{ \| \Theta_i \cdot \uc \|^2  \right\}
%\ {\rm subject \ to}: & \quad \| \uc \|^2 = 1, \  {\rm and} \  \| \Theta_i \cdot \uc \|^2 = \lambda_i, \ {\rm for} \ i = 1,2,\ldots k-1,
\end{align}
has a global optimal solution $\uc^* = \left[ c^*_1 \ c^*_2 \ \ldots \ c^*_M \right]^T$ with at most $k$ nonzero entries.
\end{thmcorol}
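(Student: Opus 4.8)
The plan is to derive the corollary directly from Lemma~\ref{lem:dim_count}, without re‑running its perturbation argument. First I would produce a global maximizer $\uc^*$ of $\max_{\|\uc\|^2=1}\min_{1\le i\le k}\|\Theta_i\uc\|^2$: the objective is continuous and the unit sphere of $\mathbb{R}^M$ is compact, so a maximizer exists. Set $\lambda_i=\|\Theta_i\uc^*\|^2$ and $\lambda^*=\min_{1\le i\le k}\lambda_i$, which is the optimal value of the problem. After relabeling the matrices $\Theta_1,\ldots,\Theta_k$, I may assume $\lambda_1\le\lambda_2\le\cdots\le\lambda_k$, so that $\lambda_1=\lambda^*$.

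Next I would pass to the auxiliary constrained problem
\begin{align*}
\max. & \quad \|\Theta_1 \uc\|^2 \\
\mbox{subject to:} & \quad \|\uc\|^2 = 1, \quad \|\Theta_i \uc\|^2 = \lambda_i, \ i = 2, \ldots, k.
\end{align*}
Its feasible set is a closed, hence compact, subset of the unit sphere, and it is nonempty since $\uc^*$ lies in it; therefore this problem attains a global maximizer $\uc^{**}$. Up to the innocuous renaming that sends $\Theta_1$ to the ``last'' matrix and $\Theta_2,\ldots,\Theta_k$ to the first $k-1$, this is exactly the optimization problem treated in Lemma~\ref{lem:dim_count}, so the lemma lets me take $\uc^{**}$ with at most $k$ nonzero entries.

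It remains to check that such a $\uc^{**}$ is again a global optimizer of the original max–min problem. Since $\uc^*$ is feasible for the auxiliary problem and $\uc^{**}$ is optimal for it, $\|\Theta_1\uc^{**}\|^2\ge\|\Theta_1\uc^*\|^2=\lambda_1=\lambda^*$; and $\|\Theta_i\uc^{**}\|^2=\lambda_i\ge\lambda^*$ for $i=2,\ldots,k$ by the chosen ordering. Hence $\min_{1\le i\le k}\|\Theta_i\uc^{**}\|^2\ge\lambda^*$, and because $\lambda^*$ is already the maximum of the max–min, this inequality is an equality. Thus $\uc^{**}$ attains the optimum with at most $k$ nonzero entries, which is the assertion.

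The single point requiring care — and the place where a careless reduction would break — is the relabeling: Lemma~\ref{lem:dim_count} must be applied with the objective taken to be $\|\Theta_i\uc\|^2$ for the index $i$ at which $\|\Theta_i\uc^*\|^2$ is \emph{smallest}. Only then does freezing the remaining $k-1$ squared norms at their values automatically keep every component of the min‑vector at least $\lambda^*$, which is what makes $\uc^{**}$ a genuine max–min optimizer; maximizing over any other index need not preserve that property. With this observation in place the derivation is immediate and the rest is routine.
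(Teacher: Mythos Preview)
Your proposal is correct and is precisely the natural unpacking of the paper's one-line ``Lemma~\ref{lem:dim_count} immediately implies the following Corollary.'' The paper gives no explicit argument beyond that sentence, so your reduction---take a max--min optimizer, freeze the $k-1$ larger squared norms as equality constraints, apply Lemma~\ref{lem:dim_count} to the remaining one, and verify the resulting sparse vector still attains the max--min value---is exactly what the authors have in mind, and your remark about which index must serve as the objective is the one nontrivial point in making ``immediately'' honest.
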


Now, let us prove Lemma \ref{lem:pf}. For the sake of presentation convenience, we demonstrate here the proof of the case $K=2$, while this technique can be carried out to general $K$-user case without difficulty.

\begin{proof}  [Proof of Lemma \ref{lem:pf}]
%We present the proof for the case $n=K=2$ here, the general cases can be simply generalized with the same technique. 
Let $\sigma_0 =1 , \sigma_1 , \ldots , \sigma_m$ and $\mu_0 = 1 , \mu_1 , \ldots , \mu_m$ be the singular values of the DTM's $B_1$ and $B_2$, respectively, and the corresponding right singular vectors are $\uv_0 ,\uv_1,...,\uv_m$ and $\uu_0 ,\uu_1,...,\uu_m$, where $m \triangleq | \cX | - 1$, and $\uv_0 = \uu_0 = \left[ \sqrt{P_X}, x \in \cX \right]^T$. Moreover, we assume that $\uv_i = \sum_{k = 0}^m \phi_{ik} \uu_k$, where $\Phi_0 = [\phi_{ij}]_{i,j=0}^m$ is an $|\cX|$-by-$|\cX|$ unitary matrix. Then, we have $\phi_{00} = 1$, $\phi_{i0} = \phi_{0i} = 0$, for all $i>0$. 

Now for any $n \geq K$, our goal here is to show that for any choice of $\K'_u$ satisfying the constraints of~\eqref{eq:k-BC}, we can always find a set of $\K_u$ taking product form~\eqref{lemma} that has no smaller output image in~\eqref{eq:k-BC}. The first observation here is that we can only restrict our focus on the $\K'_u$'s such that all the $\K'_u$'s are mutually orthogonal. The reason is that we can construct mutually orthogonal vectors $\K''_u = \uv_0^{((u-1) \cdot n)} \otimes \K'_u \otimes \uv_0^{((|\cU|-u)\cdot n)}$, and alternatively prove lemma~\ref{lem:pf} for $\K''_u$. Now, for mutually orthogonal vectors $\K'_u$, let us write $\K'$ as
\begin{align*}
\K' = \sum_u  \sqrt{P_U(u)} \cdot  \K_{u}.
\end{align*}
Then, the output images of~\eqref{eq:k-BC} becomes $\| B_i^{(n)} \cdot \K' \|$, for $i = 1,2$, with the constraint $\| \K' \| = 1$, and $\langle \K', \uv_0^{(n)}  \rangle =0$.

Now, suppose that 
\begin{align*}
\K' &= \mathop{\sum_{i_1, \ldots , i_n=0}^m}_{(i_1, \ldots , i_n)\neq (0, \ldots ,0)} \alpha_{i_1 \cdots i_n} \cdot \left( \uv_{i_1} \otimes \cdots \otimes \uv_{i_n} \right)\\ 
&= \sum_{j_1, \ldots , j_n=0}^m \beta_{j_1 \cdots j_n} \cdot \left( \uu_{j_1} \otimes \cdots \otimes \uu_{j_n} \right),
\end{align*}
where
\begin{align*}
\beta_{j_1 \cdots j_n} = \mathop{\sum_{i_1, \ldots , i_n=0}^m}_{(i_1, \ldots , i_n)\neq (0, \ldots ,0)} \alpha_{i_1 \cdots i_n} \phi_{i_1j_1} \cdots \phi_{i_n j_n}.
\end{align*}
Then, since $\| \K' \| = 1$, we have
\begin{align*}
\mathop{\sum_{i_1, \ldots , i_n=0}^m}_{(i_1, \ldots , i_n)\neq (0, \ldots ,0)} \| \alpha_{i_1 \cdots i_n} \|^2 = \sum_{j_1, \ldots , j_n=0}^m \| \beta_{j_1 \cdots j_n} \|^2 = 1.
\end{align*}
Now, let us define 
\begin{align*}
\cS_k = \{ &\left( i_1 , \ldots , i_n \right) : 0 \leq i_a \leq m, \ \forall \ 1 \leq a \leq n,\\ 
&i_k \neq 0, \ i_{k+1} = \cdots = i_n = 0 \},
\end{align*}
\begin{align*}
\cT_k =  \left\{ \left( i_1 , \ldots , i_k \right) : 0 \leq i_a \leq m, \ \forall \ 1 \leq a \leq k \right\},
\end{align*}
with $\cT_0 = \emptyset , \ \cT = \cup_{k = 0}^{n-1} \cT_k,$ and 
$$\xi : \cT \mapsto \left\{ 1, 2 , \ldots , M \right\}$$ 
is a bijective map, where $M = |\cT|$. Then,
\begin{align*} 
\notag
\left\| B_1^{(n)} \cdot \K' \right\|^2
%&= \left\|  \mathop{\sum_{i_1, \ldots , i_n=0}^m}_{(i_1, \ldots , i_n)\neq (0, \ldots ,0)} \alpha_{i_1 \cdots i_n} \cdot \left( B_1 \uv_{i_1} \right) \otimes \cdots \otimes \left( B_1 \uv_{i_n} \right) \right\|^2 \\
&= \mathop{\sum_{i_1, \ldots , i_n=0}^m}_{(i_1, \ldots , i_n)\neq (0, \ldots ,0)} \alpha_{i_1 \cdots i_n}^2 \sigma_{i_1}^2 \cdots \sigma_{i_n}^2 
%&= \sum_{k = 1}^n \sum_{\left( i_1 , \ldots , i_n \right) \in \cS_k } \alpha_{i_1 \cdots i_n}^2 \sigma_{i_1}^2 \cdots \sigma_{i_n}^2 \\ 
\leq \sum_{k = 1}^n \sum_{\left( i_1 , \ldots , i_n \right) \in \cS_k } \alpha_{i_1 \cdots i_n}^2 \sigma_{i_k}^2 \\ \notag %= \sum_{k = 1}^n \sum_{\left( i_1 , \ldots , i_k , 0 , \ldots , 0 \right) \in \cS_k } \alpha_{i_1 \cdots i_k 0  \ldots  0}^2 \sigma_{i_k}^2 \\
&= \sum_{k = 1}^n \sum_{\left( i_1 , \ldots , i_{k-1} \right) \in \cT_{k-1} } \sum_{i_k = 1}^m \alpha_{i_1 \cdots i_{k-1} i_k 0  \ldots  0}^2 \sigma_{i_k}^2 \\
&= \sum_{k = 1}^n \sum_{\left( i_1 , \ldots , i_{k-1} \right) \in \cT_{k-1} } \| \Sigma \cdot \ual_{\xi \left( i_1 , \ldots , i_{k-1} \right)} \|^2 \\ \notag
&= \sum_{i = 1}^{M} \| \Sigma \cdot \ual_{i} \|^2,
\end{align*}
where $\Sigma = \mbox{diag} \left\{\sigma_1, \ldots , \sigma_m  \right\}$, and $\ual_{\xi \left( i_1 , \ldots , i_{k-1} \right)}$ is defined as
\begin{align*}
\ual_{\xi \left( i_1 , \ldots , i_{k-1} \right)}
= \left[ \alpha_{i_1 \cdots i_{k-1} 1 0 \cdots 0} \ \alpha_{i_1 \cdots i_{k-1} 2 0 \cdots 0} \cdots \alpha_{i_1 \cdots i_{k-1} m 0 \cdots 0} \right]^T.
\end{align*}
Moreover,
\begin{align}
\notag
&\left\| B_2^{(n)} \cdot \K' \right\|^2 
%&= \left\|  \sum_{j_1, \ldots , j_n=0}^m \beta_{j_1 \cdots j_n} \cdot \left( B_2 \uu_{j_1} \right) \otimes \cdots \otimes \left( B_2 \uu_{j_n} \right) \right\|^2 \\ \notag
= \sum_{j_1, \ldots , j_n=0}^m \beta_{j_1 \cdots j_n}^2 \mu_{j_1}^2 \cdots \mu_{j_n}^2 \\ \notag
%&= \sum_{j_1, \ldots , j_n=0}^m \left( \mathop{\sum_{i_1, \ldots , i_n=0}^m}_{(i_1, \ldots , i_n)\neq (0, \ldots ,0)} \alpha_{i_1 \cdots i_n} \phi_{i_1j_1} \cdots \phi_{i_n j_n}  \right)^2 \mu_{j_1}^2 \cdots \mu_{j_n}^2 \\ \notag
&= \sum_{j_1, \ldots , j_n=0}^m \mu_{j_1}^2 \cdots \mu_{j_n}^2 \cdot  \left( \sum_{k = 1}^n \sum_{\left( i_1 , \ldots , i_n \right) \in \cS_k } \alpha_{i_1 \cdots i_n} \phi_{i_1j_1} \cdots \phi_{i_n j_n}  \right)^2 \\ \label{eq:crazy}
&= \sum_{k = 1}^n \sum_{j_1, \ldots , j_n=0}^m \mu_{j_1}^2 \cdots \mu_{j_n}^2  \cdot  \left(  \sum_{\left( i_1 , \ldots , i_n \right) \in \cS_k } \alpha_{i_1 \cdots i_n} \phi_{i_1j_1} \cdots \phi_{i_n j_n}  \right)^2 ,
\end{align}
where \eqref{eq:crazy} is because for any $1 \leq k_1 < k_2 \leq n$,
\begin{align*}
&\sum_{j_1, \ldots , j_n=0}^m \left(  \sum_{\left( i_1 , \ldots , i_n \right) \in \cS_{k_1} } \alpha_{i_1 \cdots i_n} \phi_{i_1j_1} \cdots \phi_{i_n j_n}  \right)  \cdot \left(  \sum_{\left( i_1 , \ldots , i_n \right) \in \cS_{k_2} } \alpha_{i_1 \cdots i_n} \phi_{i_1j_1} \cdots \phi_{i_n j_n}  \right) \mu_{j_1}^2 \cdots \mu_{j_n}^2 \\ 
=& \sum_{j_1, \ldots , j_n=0}^m \mathop{\sum_{\left( i_1 , \ldots , i_{k_1} , 0 \ldots , 0 \right) \in \cS_{k_1}}}_{\left( i_1' , \ldots , i_{k_2}' , 0 \ldots , 0 \right) \in \cS_{k_2}}  
%\sum_{\left( i_1' , \ldots , i_{k_2}' , 0 \ldots , 0 \right) \in \cS_{k_2}} \\ \notag 
\alpha_{ i_1 \cdots  i_{k_1}  0 \cdots 0} \alpha_{ i_1' \cdots i_{k_2}' 0 \cdots  0} \cdot \left( \phi_{i_1j_1}  \phi_{i_1'j_1} \right) \cdots \underbrace{ \left( \phi_{0 j_{k_2}}  \phi_{i_{k_2}'j_{k_2}} \right) }_{\mathclap{\mbox{nonzero only if $j_{k_2} = i_{k_2}' = 0$, but  $i_{k_2}' \neq 0$}  } }  \mu_{j_1}^2 \cdots \mu_{j_n}^2\\
=& \ 0
\end{align*}
%since $i_{k_2}' \neq 0$, and $\phi_{0 j_{k_2}} \cdot \phi_{i_{k_2}'j_{k_2}} = 0$.
Then, let $\Omega = \mbox{diag} \left\{\mu_1, \ldots , \mu_m  \right\}$ and $\Phi = [\phi_{ij}]_{i,j = 1}^m$, for each $k$, we have
\begin{align} \notag
& \sum_{j_1, \ldots , j_n=0}^m \left(  \sum_{\left( i_1 , \ldots , i_n \right) \in \cS_k } \alpha_{i_1 \cdots i_n} \phi_{i_1j_1} \cdots \phi_{i_n j_n}  \right)^2  \mu_{j_1}^2 \cdots \mu_{j_n}^2 \\ \notag
&\leq  \sum_{j_1, \ldots , j_n=0}^m \left(  \sum_{\left( i_1 , \ldots , i_n \right) \in \cS_k } \alpha_{i_1 \cdots i_n} \phi_{i_1j_1} \cdots \phi_{i_n j_n}  \right)^2 \mu_{j_k}^2 \\ \notag
&= \sum_{j_1, \ldots , j_n=0}^m \sum_{\left( i_1 , \ldots , i_n \right) \in \cS_k} \sum_{\left( i_1' , \ldots , i_n' \right) \in \cS_k} \alpha_{i_1 \cdots i_n} \alpha_{i_1' \cdots i_n'}  \cdot \left( \phi_{i_1 j_1} \phi_{i_1' j_1} \right) \cdots \left( \phi_{i_n j_n} \phi_{i_n' j_n} \right)  \mu_{j_k}^2 \\  \label{eq:crazy2}
&= \sum_{j_k =0}^m \sum_{\left( i_1 , \ldots , i_{k-1} \right) \in \cT_{k-1}} \left( \sum_{i_k=1}^m \alpha_{i_1\cdots i_k0 \cdots 0} \phi_{i_k j_k} \right)^2 \mu_{jk}^2 \\ \notag
%=& \sum_{j_k =0}^m \sum_{\left( i_1 , \ldots , i_n \right) \in \cS_k} \sum_{i_k' = 0}^m \alpha_{i_1 \cdots i_n} \alpha_{i_1 \cdots i_{k-1} i_k' i_{k+1} \cdots i_n} \\ \label{eq:crazy2}
%& \qquad \qquad \qquad \qquad \cdot \left( \phi_{i_k j_k} \phi_{i_k' j_k} \right)   \mu_{j_k}^2 \\ \notag
%=& \sum_{j_k =0}^m \sum_{\left( i_1 , \ldots , i_n \right) \in \cS_k} \left( \alpha_{i_1 \cdots i_n} \phi_{i_k j_k} \right)^2   \mu_{j_k}^2 \\ \notag
&= \sum_{\left( i_1 , \ldots , i_{k-1} \right) \in \cT_{k-1}} \| \Omega \Phi^T \ual_{\xi \left( i_1 , \ldots , i_{k-1} \right)} \|^2,
\end{align}
where \eqref{eq:crazy2} is because $\Phi$ is a unitary matrix, and
\begin{align*}
\sum_{j_r = 0}^m \phi_{i_r j_r} \phi_{i_r' j_r} = 
\left\{
\begin{array}{ll}
1 & \mbox{if} \ i_r = i'_r \\
0 & \mbox{otherwise}
\end{array} \right. .
\end{align*}%\delta_{i_r}(i_r').$$ 
Therefore, \eqref{eq:crazy} becomes
\begin{align*}
\left\| B_2^{(n)} \cdot \K' \right\|^2 &\leq \sum_{k = 1}^n \sum_{\left( i_1 , \ldots , i_{k-1} \right) \in \cT_{k-1}} \| \Omega \Phi^T \ual_{\xi \left( i_1 , \ldots , i_{k-1} \right)} \|^2 = \sum_{i = 1}^{M} \| \Omega \Phi^T \ual_i \|^2.
\end{align*}

Now, let us define $\Theta_1 = \diag \left\{ \theta_{1,1} ,  \ldots , \theta_{1,M} \right\}$, and $\Theta_2 = \diag \left\{ \theta_{2,1} , \ldots , \theta_{2,M} \right\}$ as 
\begin{align*}
&\theta_{1,i} = 
\left\{ \begin{array}{cl} \frac{ \| \Sigma  \ual_i \|}{\| \ual_i \|}  &\mbox{ if $\| \ual_i \| \neq 0$} \\
  0 &\mbox{ otherwise}
       \end{array} \right. , 
\\
%\ \mbox{and} \ 
&\theta_{2,i} = 
\left\{ \begin{array}{cl} \frac{  \| \Omega \Phi^T  \ual_i \| }{\| \Phi^T \cdot  \ual_i \|}  &\mbox{ if $\| \Phi^T \cdot  \ual_i \| \neq 0$} \\
  0 &\mbox{ otherwise}
       \end{array} \right. .
\end{align*}
Then, from Corollary \ref{cor:1}, there exists an optimal solution $\uc^*$ of the optimization problem
\begin{align} \label{eq:1}
\max_{\uc \in \mathbb{R}^{M} : \| \uc \|^2 = 1} \min \left\{  \| \Theta_1 \cdot \uc \|^2 ,  \| \Theta_2 \cdot \uc \|^2  \right\},
\end{align}
with at most two nonzero entries. Let the $i_1$-th entry $c^*_{i_1}$ and the $i_2$-th entry $c^*_{i_2}$ of $\uc^*$ are nonzero. Note that $\|\ual_i\| = \|\Phi^T \cdot \ual_i\|$, for all $1 \leq i \leq M$, and
\begin{align*}
\sum_{i = 1}^{M} \|\ual_i\|^2 = \mathop{\sum_{i_1, \ldots , i_n=0}^m}_{(i_1, \ldots , i_n)\neq (0, \ldots ,0)} \| \alpha_{i_1 \cdots i_n} \|^2 = 1, 
\end{align*}
thus the vector $\ual = \left[ \|\ual_1\| \ \|\ual_2\| \ \ldots \ \|\ual_{M}\| \right]^T$ has unit norm. This implies that
%which implies
%and $\|\ual_i\| = \|\Phi^T \cdot \ual_i\|$, for all $1 \leq i \leq M$, we have the vector $\ual = \left[ \|\ual_0\| \ \|\ual_2\| \ \ldots \ \|\ual_{m+1}\| \right]$ also satisfies the constraints of the optimization problem (\ref{eq:1}), and
\begin{align*}
&\min \left\{ \| \Theta_1 \cdot \uc^* \|^2 ,  \| \Theta_2 \cdot \uc^* \|^2 \right\} \\
\geq& \min \left\{ \| \Theta_1 \cdot \ual \|^2 ,  \| \Theta_2 \cdot \ual \|^2 \right\} \\
=& \min \left\{ \sum_{i = 1}^{M}  \| \Sigma  \ual_i \|^2 ,  \sum_{i = 1}^{M}  \| \Omega \Phi^T  \ual_i \|^2 \right\} \\
=& \min \left\{ \| B_1^{(n)} \cdot L \|^2 , \| B_2^{(n)} \cdot L \|^2 \right\}.
\end{align*}
Now, let us take vectors $\uv^*_1 = c^*_{i_1} \cdot \sum_{j = 1}^m  \frac{\ual_{i_1}(j)}{\| \ual_{i_1} \|}   \uv_j $, and $\uv^*_2 = c^*_{i_2} \cdot \sum_{j = 1}^m  \frac{\ual_{i_2}(j)}{\| \ual_{i_2} \|}   \uv_j $, where $\ual_{i_1}(j)$ and $\ual_{i_2}(j)$ are the $j$-th entries of $\ual_{i_1}$ and $\ual_{i_2}$, respectively. Then, the vector $\K^* = \uv_0^{(n-2)} \otimes \left( \uv^*_1 \otimes \uv_0 + \uv_0 \otimes \uv^*_2 \right)$ satisfies $\| \K^* \| = 1$, and
\begin{align*}
&\min \left\{ \| B_1^{(n)} \cdot \K^* \|^2 , \| B_2^{(n)} \cdot \K^* \|^2 \right\} \\ 
=& \min \left\{ \| \Theta_1 \cdot \uc^* \|^2 ,  \| \Theta_2 \cdot \uc^* \|^2 \right\} \\
\geq& \min \left\{ \| B_1^{(n)} \cdot \K' \|^2 , \| B_2^{(n)} \cdot \K' \|^2 \right\}.
\end{align*}
Therefore, by taking $\K_i = \uv_0^{(n-2)} \otimes \uv^*_i \otimes \uv_0$ for $i=1,2$ in~\eqref{lemma}, we come up with vectors with product form and no smaller output images. This proves lemma~\ref{lem:pf}.
\end{proof}

Now, in order to prove part c), we only need to show that for $K=2$, $\sup_n \lambda_B^{(n)} = \lambda_B^{(1)}$. Equivalently, we will show that there exists a unit vector $\K$ such that $\sup_n \lambda_B^{(n)} = \| B^{(1)} \cdot \K \|$. For this purpose, let us start from an optimal solution $\K^* = \uv_1 \otimes \uv_0 + \uv_0 \otimes \uv_2$ of~\eqref{eq:k-BC}, where $\uv_1 = \sum_{j = 1}^m a_{j} \cdot \uv_j$, and $\uv_2 = \sum_{j = 1}^m b_{j} \cdot \uv_j$, and $\| \uv_1 \|^2 + \| \uv_1 \|^2 = 1$. Let $\ua = \left[ a_{1}  \ \ldots \ a_{m} \right]^T$ and $\ub = \left[ b_{1}  \ \ldots \ b_{m} \right]^T$, then,
\begin{align*}
&\min \left\{ \|  B_1^{(2)} \cdot L_2 \|^2 , \|  B_2^{(2)} \cdot L_2 \|^2 \right\} \\
&= \min \left\{ \| \Sigma \ua \|^2 + \| \Sigma \ub \|^2 , \| \Omega \Phi^T \ua \|^2 + \| \Omega \Phi^T \ub \|^2 \right\}
\end{align*}
Our goal is to show that there exists a unit vector $\uc = \left[ c_{1} \ \ldots \ c_{m} \right]^T$ such that
\begin{align} \label{eq:gamma}
\left\{
\begin{array}{ll}
\| \Sigma \uc \|^2 \geq \| \Sigma \ua \|^2 + \| \Sigma \ub \|^2, \\ 
\| \Omega \Phi^T \uc \|^2 \geq \| \Omega \Phi^T \ua \|^2 + \| \Omega \Phi^T \ub \|^2.
\end{array}
\right.
\end{align}
Then, taking $\K = \sum_{j = 1}^m c_{j} \cdot \uv_j$, and we are done.
%\begin{align} \label{eq:gamma}
%\| \Sigma \uc \|^2 \geq \| \Sigma \ua \|^2 + \| \Sigma \ub \|^2, \ \mbox{and} \ \| \Omega \Phi^T \uc \|^2 \geq \| \Omega \Phi^T \ua \|^2 + \| \Omega \Phi^T \ub \|^2.
%\end{align}

Now for $m = 2$, we want to consider the vectors $\uc_1 = [\sqrt{a_1^2 + b_1^2} \ \sqrt{a_2^2 + b_2^2}]^T$, and $\uc_2 = [\sqrt{a_1^2 + b_1^2} \ -\sqrt{a_2^2 + b_2^2}]^T$. Obviously, $\| \Sigma \uc_1 \|^2 = \| \Sigma \uc_2 \|^2 = \| \Sigma \ua \|^2 + \| \Sigma \ub \|^2$. Note that
\begin{align*}
& \left( \| \Omega \Phi^T \ua \|^2 + \| \Omega \Phi^T \ub \|^2 - \| \Omega \Phi^T \uc_1 \|^2  \right) \cdot \left( \| \Omega \Phi^T \ua \|^2 + \| \Omega \Phi^T \ub \|^2 - \| \Omega \Phi^T \uc_2 \|^2 \right) \\
= & -4 \left( a_1 b_2 - b_1 a_2 \right)^2 \cdot \left(\phi_{11}\phi_{21}\mu_1^2 + \phi_{12}\phi_{22}\mu_2^2 \right)^2 \leq 0
\end{align*}
Therefore, at least one of $\uc_1$ and $\uc_2$ satisfies \eqref{eq:gamma}. 
%\begin{align} \label{eq:c1}
%&\| \Omega \Phi^T \ua \|^2 + \| \Omega \Phi^T \ub \|^2 - \| \Omega \Phi^T \uc_1 \|^2 = 2 \left( a_1a_2 + b_1 b_2 - \sqrt{\left(a_1^2 + b_1^2 \right)\left(a_2^2 + b_2^2 \right)} \right) \cdot \left(\phi_{11}\phi_{21}\mu_1^2 + \phi_{12}\phi_{22}\mu_2^2 \right) \\ \label{eq:c2}
%&\| \Omega \Phi^T \ua \|^2 + \| \Omega \Phi^T \ub \|^2 - \| \Omega \Phi^T \uc_2 \|^2 = 2 \left( a_1a_2 + b_1 b_2 + \sqrt{\left(a_1^2 + b_1^2 \right)\left(a_2^2 + b_2^2 \right)} \right) \cdot \left(\phi_{11}\phi_{21}\mu_1^2 + \phi_{12}\phi_{22}\mu_2^2 \right).
%\end{align}
%From Cauchy-Schwarz inequality, we have $\sqrt{\left(a_1^2 + b_1^2 \right)\left(a_2^2 + b_2^2 \right)} \geq |a_1a_2 + b_1 b_2|$, therefore, at least one of \eqref{eq:c1} and \eqref{eq:c2} is negative, which gives the desired $\uc$. 

On the other hand, for $m > 2$, let $A = \left[ \ua \ \ub \right]$, and consider the SVD of matrices $\Sigma A = U_1^T \Sigma_1 V_1$, and $\Omega \Phi^T A = U_2^T \Sigma_2 V_2$, where $V_i$ is a $2 \times 2$ unitary matrix, and $\Sigma_i$ is a diagonal matrix, for $i = 1,2$. Moreover, we denote $V_1 = [\varphi_1 \ \varphi_2]$, and $V_2 = [\eta_1 \ \eta_2]$, where $\varphi_i$ and $\eta_i$ are all two dimensional unit vectors. Then, $\| \Sigma \ua \|^2 =  \| \Sigma_1 \varphi_1 \|^2$, $\| \Sigma \ub \|^2 =  \| \Sigma_1 \varphi_2 \|^2$, and $\| \Omega \Phi^T \ua \|^2 = \| \Sigma_2 \left( V_2 V_1^{-1} \right) \eta_1 \|^2$, $\| \Omega \Phi^T \ub \|^2 = \| \Sigma_2 \left( V_2 V_1^{-1} \right) \eta_2 \|^2$. Since $V_2 V_1^{-1}$ is a unitary matrix, there exists a two dimensional unit vector $\uc'$, such that
\begin{align*}
\left\{
\begin{array}{ll}
\| \Sigma_1 \uc' \|^2 \geq \| \Sigma_1 \varphi_1 \|^2 + \| \Sigma_1 \varphi_2 \|^2, \\
\| \Sigma_2 \left( V_2 V_1^{-1} \right) \uc' \|^2  \geq \| \Sigma_2 \left( V_2 V_1^{-1} \right) \eta_1 \|^2 + \| \Sigma_2 \left( V_2 V_1^{-1} \right) \eta_2 \|^2 .
\end{array}
\right.
\end{align*}
%\begin{align*}
%\| \Sigma_1 \uc' \|^2 \geq \| \Sigma_1 \varphi_1 \|^2 + \| \Sigma_1 \varphi_2 \|^2, \ \mbox{and} \
%\| \Sigma_2 \left( V_2 V_1^{-1} \right) \uc' \|^2 \geq \| \Sigma_2 \left( V_2 V_1^{-1} \right) \psi_1 \|^2 + \| \Sigma_2 \left( V_2 V_1^{-1} \right) \psi_2 \|^2 .
%\end{align*}
Now, taking $\uc = A V_1^{-1} \uc'$, then $\| \Sigma \uc \|^2 = \| \Sigma_1 \uc' \|^2$, and $\| \Omega \Phi^T \uc \|^2 = \| \Sigma_2 \left( V_2 V_1^{-1} \right) \uc' \|^2$, which implies that $\uc$ satisfies (\ref{eq:gamma}). %Thus, the unit vector $L_1 = \sum_{i = 1}^m c_i \uv_i$ satisfies $\| L_1 \|^2 = 1$, and 

\section{Proof of Theorem \ref{thm:MAC}} \label{ap:proof 2}

%\begin{thm} 
%For a multiple access channel with transmitters $1,2, \ldots , k$, let $W_i$ be the direct channel of transmitter $i$ defined as \eqref{eq:dir_ch}, and $B_i$ be the corresponding DTM. Then, the second largest singular value of $B_{0,n} = \left[ B_1^{(n)} \ \ldots \ B_k^{(n)} \right]$ is the same as $B_0 \triangleq B_{0,1}$. Moreover, let $L_u = \left[ L_1^T \ \ldots \ L_k^T \right]^T$ be the singular vector of $B_0$ with the second largest singular value, where $L_i$ is an $| \cX_i |$-dimensional vector. Then, $L_i$ is orthogonal to $\left[ \sqrt{P_{X_i}}, x_i \in \cX_i \right]^T$, for all $1 \leq i \leq k$.
%\end{thm}

%\begin{proof}

In order to prove Theorem~\ref{thm:MAC}, we will show that: 
\begin{itemize}
\item [1)] the second largest singular value of $B_{0,n} = \left[ B_1^{(n)} \ \ldots \ B_k^{(n)} \right]$ is the same as $B_0$
\item [2)] $\K_i$ is orthogonal to $\left[ \sqrt{P_{X_i}}, x_i \in \cX_i \right]^T$, for all $1 \leq i \leq k$.
\end{itemize}

First, let us prove 1). Suppose that $\sigma_{1,n}$ is the second largest singular value of $B_{0,n}$, then we want to show that $\sigma_{1,n} = \sigma_{1,1}$, for all $n>1$. Let us first show that $\sigma_{1,2} = \sigma_{1,1}$. Observe that $\left[ \sqrt{P_{Y}}, y \in \cY \right]^T$ is the left singular vector of $B_0$, corresponding to the largest singular value $\sqrt{k}$, thus we can assume that the singular values of $B_0$ are $\sigma_{0,1} = \sqrt{k} \geq \sigma_{1,1} \geq \sigma_{2,1} \geq \cdots \geq \sigma_{m,1}$, and the corresponding left singular vectors are $\uw_0 = \left[ \sqrt{P_{Y}}, y \in \cY \right]^T , \uw_1 , \uw_2 , \ldots , \uw_m$, where $m \triangleq \min \left\{ \sum_{i = 1}^k |\cX_i| , |\cY| \right\} -1$. Note that for all $0 \leq i \leq m$,
\begin{align*}
& \left( \uw_0 \otimes \uw_i \right)^T \cdot \left( B_{0,2} B_{0,2}^T \right) \\
=&  \left( \uw_0^T \otimes \uw_i^T \right) \cdot \left( B_1 B_1^T  \otimes B_1 B_1^T  + B_2 B_2^T \otimes B_2 B_2^T \right) \\ 
=&  \uw_0^T \otimes \left[ \uw_i^T \cdot \left( B_1 B_1^T + B_2 B_2^T \right) \right] \\
=& \uw_0^T \otimes \left[ \uw_i^T \cdot \left( B_0 B_0^T \right) \right] = \sigma_{i,1}^2 \cdot \uw_0^T \otimes \uw_i^T,
\end{align*}
therefore, $\uw_0 \otimes \uw_i$ is a singular vector of $B_{0,2}$, with singular value $\sigma_{i,1}$. Similarly, $\uw_j \otimes \uw_0$ is a singular vector of $B_{0,2}$, with singular value $\sigma_{j,1}$, for all $0 \leq j \leq m$. Hence, in order to show that $\sigma_{1,2} = \sigma_{1,1}$, we only need to show that for any unit vector $\uw \in \mbox{span} \left\{ \uw_i \otimes \uw_j , \ 1 \leq i,j \leq m \right\}$, $\| \uw^T \cdot B_{0,2} \| \leq \sigma_{1,2}$. To this end, note that $\| \uw^T \cdot \left( B_0 \otimes B_0 \right) \| \leq \sigma_{1,2}^2$, therefore
\begin{align*}
\| \uw^T \cdot B_{0,2} \|^2 
& \leq \| \uw^T \cdot B_{0,2} \|^2 + \| \uw^T \cdot \left( B_1 \otimes B_2 \right) \|^2 + \| \uw^T \cdot \left( B_2 \otimes B_1 \right) \|^2 \\ 
& = \| \uw^T \cdot \left( B_0 \otimes B_0 \right) \|^2 \leq \sigma_{1,2}^4 \leq \sigma_{1,2}^2.
\end{align*}
Thus, we have $\sigma_{1,2} = \sigma_{1,1}$. With the same arguments, we can show that for any positive integer $N$, $\sigma_{1,2^N} = \sigma_{1,1}$. Since $\sigma_{1,n}$ is non-decreasing with $n$, this implies that $\sigma_{1,n} = \sigma_{1,1}$, for all $n$.

Now, let us prove the statement (ii). For simplicity, we denote $\uv_{i,0} = \left[ \sqrt{P_{X_i}}, x_i \in \cX_i \right]^T$, and $\uv = \left[ \uv_{1,0}^T \ \ldots \ \uv_{k,0}^T \right]^T$, then $\uv$ is the singular vector of $B_0$, corresponding to the largest singular value $\sqrt{k}$. Suppose that $\langle \K_i , \uv_{i,0} \rangle = \mathbb{I}_i$, since $\K_{U=u}$ is orthogonal to $\uv$, we have $\sum_{i = 1}^k \mathbb{I}_i = 0$. Now, if there exits a $j$ such that $\mathbb{I}_j \neq 0$, then define the vector $\tilde{\K}_{U=u} = \left[ \tilde{\K}_1^T \ \ldots \ \tilde{\K}_k^T \right]^T$, where $\tilde{\K}_i = \left( \K_i - \mathbb{I}_i \cdot \uv_i \right)/\sqrt{1-\mathbb{I}}$, and $\mathbb{I} = \sum_{i = 1}^k \mathbb{I}_i^2 >0$. This definition of $\tilde{\K}_u$ is valid because $\mathbb{I} < \sum_{i = 1}^k \| \K_i \|^2 = 1$. Then, it is easy to verify that $\| \tilde{\K}_{U=u} \| = 1$, and $\tilde{\K}_{U=u}$ is orthogonal to $\uv$. Moreover, 
\begin{align*}
B_0 \cdot \tilde{\K}_{U=u} & = \sqrt{1-\mathbb{I}}^{-1} \cdot \left( B_0 \cdot \K_{U=u} - \sum_{i = 1}^k \mathbb{I}_i \cdot \left( B_i \cdot \uv_{i,0} \right) \right) \\
& =  \sqrt{1-\mathbb{I}}^{-1} \cdot \left( B_0 \cdot \K_{U=u} - \left( \sum_{i = 1}^k \mathbb{I}_i \right) \cdot \uw_0 \right) \\ \notag 
&=  \sqrt{1-\mathbb{I}}^{-1} \cdot \left( B_0 \cdot \K_{U=u} \right),
\end{align*}
where $\uw_0 = \left[ \sqrt{P_{Y}}, y \in \cY \right]^T$. Therefore, $\| B_0 \cdot \tilde{\K}_{U=u} \| > \| B_0 \cdot \K_{U=u} \|$, since $\mathbb{I} > 0$. This contradicts to the assumption that $\K_{U=u}$ is the singular vector of $B_0$, corresponding to the second largest singular value. Thus, $\K_i$ is orthogonal to $\left[ \sqrt{P_{X_i}}, x_i \in \cX_i \right]^T$, for all $1 \leq i \leq k$.
%\end{proof}

%\section{Comments}

%1. Do we need a section for defining all notations? \\ \\

\end{document}